\documentclass[12pt]{amsart}

\usepackage[english]{babel}
\usepackage[utf8]{inputenc}
\usepackage[T1]{fontenc}
\usepackage{amsthm}
\usepackage{amsmath}
\usepackage{amssymb}
\usepackage{mathrsfs}
\usepackage{setspace}
\usepackage{graphicx}
\usepackage{float}
\usepackage[top=2cm,bottom=2cm,left=2cm,right=2cm]{geometry}
\usepackage{multirow}
\usepackage{slashbox}%
\usepackage{stmaryrd}
\usepackage{soul}
\usepackage{color}
\usepackage{verbatim}
\usepackage{alltt}
\usepackage{wrapfig}
\usepackage{pifont}
\usepackage{array}
\usepackage{colortbl}
\usepackage{ifmtarg}
\usepackage{hyperref}
\usepackage{thmtools}
\usepackage{dsfont}
\usepackage{mathabx}
\usepackage{relsize,exscale}
\numberwithin{equation}{section} 
\usepackage{blindtext}

\declaretheorem[title=Definition,numberwithin=section]{definition}
\declaretheorem[numberwithin=section]{proposition}
\declaretheorem[title=Corollary,numberwithin=section]{corollary}
\declaretheorem[title=Theorem,numberwithin=section]{theorem}
\declaretheorem[title=Lemma,numberwithin=section]{lemma}

\declaretheorem[title=Assumption,numberwithin=section]{assumption}
\declaretheorem[title=Remark,numberwithin=section]{remark}

\newcommand{\R}{\ensuremath{\mathbb{R}}\xspace}
\newcommand{\Z}{\ensuremath{\mathbb{Z}}\xspace}

\newcommand{\N}{\ensuremath{\mathbb{N}}\xspace}
\newcommand{\C}{\ensuremath{\mathbb{C}}\xspace}

\newcommand{\supp}{\text{supp}}

\begin{document}

\title{Thermodynamic limit of the Pieces' Model}
\author{Vadim Ognov}
\address{Institut de Mathematiques de Jussieu - Paris Rive Gauche\\ Sorbonne Université \\ 75005 Paris, France}
\email{vadim.ognov@imj-prg.fr}

\begin{abstract}
We study the ground states of the pieces' model in the Fermi-Dirac statistics in the thermodynamic limit. In other words, we consider the minimizing configurations of $ n $ interacting fermions in an interval $ \Lambda $ divided into pieces by a Poisson point process, when $ \frac{n}{\vert \Lambda\vert}\to \rho>0 $ as $ \vert \Lambda \vert \to \infty $. We notice that a decomposition into groups of pieces arises from the hypothesis of finite-range pairwise interaction. Under assumptions of convexity and non-degeneracy of the subsystems, we get an almost complete factorization of any ground state. This method applies at least for groups comprising one or two particles. It improves the expansion of the thermodynamic limit of the ground state energy per particle up to the error $ O(\rho^{2-\delta}) $, with $ 0<\delta<1 $ (see \cite{Klopp2020}). It also provides an approximate ground state for the pieces' model. 
\end{abstract}

\maketitle

\section{Introduction}

One-dimensional many-body localization is a non-trivial topic for both condensed matter physicists and spectral theory mathematicians. At large disorder, one expects that quantum systems with interaction do not thermalize and that they exhibit a kind of localization \cite{Alet2018}. Some papers tackle this phenomenon for a finite number of particles and an infinite interval \cite{Beaud2018} \cite{Elgart2018}. However, from a physical perspective, the appropriate scope would be to consider a number of particles that increases proportionally with the size of the interval. This regime is called the \textit{thermodynamic limit}. 

Published in 2012, a paper of Veniaminov proved the existence of the thermodynamic limit of the ground state energy per particle for a class of disordered quantum systems \cite{Veniaminov2012}. This result applies in particular to the \textit{pieces' model} which is a refined version of the Luttinger-Sy model, introduced in 1973 \cite{Luttinger1973}.  Without interaction, the ground state is given by minimizing the distribution of $ n $ particles among the partition of the large interval $ \Lambda $ into pieces by the Poisson point process. Because of this explicit solution and since the original paper, the pieces' model has been studied to understand the Bose-Einstein condensation of free or interacting bosons \cite{Lenoble2006}\cite{Kerner2019a}\cite{Kerner2019} \cite{Kerner2021}.

In this article, we focus on the pieces' model in the Fermi-Dirac statistics, i.e for indistinguishable particles. Our work is inspired by the paper of Klopp and Veniaminov \cite{Klopp2020}. Let $ \rho>0 $ be the density of particles, i.e the limit of the ratio $ \frac{n}{\vert \Lambda \vert} $. Klopp and Veniaminov expand the thermodynamic limit of the ground state energy per particle up to the error $ O\big(-\rho \log(\rho)^{-3}\big) $. We give an expansion up to the error $ O\big(\rho^{2-\delta}\big) $, for any $ \delta\in (0,1) $, in case of finite-range interactions. We also provide a natural characterization of the ground states. The next step would be to use our results to express some indicators of the many-body localization.

Let us now briefly describe our method. In the free case, the minimizing configuration of particles is such that the energy produced by any particle is less than the \textit{Fermi energy} $ E_{\rho} $. It yields that, in the ground state, the pieces with length below $ l_{\rho}=\pi E_{\rho}^{-1/2} $ are empty. Similarly, in the interacting case, under the assumptions of a pairwise potential $ U $ with compact support and a density of particles $ \rho $ small enough, the pieces with length below $ l_{\rho,U} $ are empty for any ground state. So, the random background reduces to a compilation of groups of pieces, that we call \textit{chains}, such that a particle belonging to a chain cannot interact with a particle living outside this chain. This structure is therefore similar to the one of the free system if the chains replace the pieces. Our problem turns into finding a minimizing distribution of $ n $ particles among the chains. Without interaction, given any piece, the energy as a function of the number of particles is convex. This property allows to get the ground state inductively. Does this statement hold for any chain in the presence of interactions? Unfortunately we did not solve this question. We bypass this issue noticing that, due to the nature of the Poisson point process, large chains do not contribute much to the total energy. The ground state energy per particle is mostly, i.e up to our error term, given by isolated fermions and isolated pairs of fermions lying in one or two pieces. For these simple subsystems, the energies are convex and we can compare them quite precisely. Then, we distribute by induction the particles among these chains. We prove that the corresponding state approximates any ground state in the thermodynamic limit.

The paper is organized as follows. In Section 2, we present the model and we sketch our method to get an expansion of the ground state energy per particle up to any order $ O(\rho^{p-\delta}) $, $ p\geq 2 $ and $ 0<\delta<1 $, under strong assumptions. In Section 3, we state our results for $ p=2 $ without proof. Section 4 rigorously develop the splitting into chains, including its limits. Section 5 is devoted to the detailed study of chains comprising at most two particles. It also contains the proofs of our main propositions. We gather other results in the Appendix.

$ \\ $

\section{Model and first observations}
\subsection{The pieces' model for Fermi-Dirac statistics}

Let $ X(w)=(x_{n}(w))_{n\in \Z} $ be a Poisson point process on $ \R $ of intensity $ 1 $. Recall that the probability that a Borel set $ \Lambda\subset \R $ contain exactly $ k $ points is
\[\mathbb{P}\Big(\#\big(X(w)\cap \Lambda\big)=k\Big)=\frac{\vert \Lambda \vert^{k}}{k!}e^{-\vert \Lambda \vert} \]
and for two disjoints Borel sets $ \Lambda_{1},\Lambda_{2}\subset \R $, the events $ \{X(w)\cap \Lambda_{1}=k_{1}\} $ and $ \{ X(w)\cap \Lambda_{2}=k_{2}\} $ are independent.

 For $ L>0 $ we set $ \Lambda=[0,L] $. We assume that $ x_{0}(w)=0 $ and we denote $ m(w)=\#\big(X(w)\cap \Lambda\big) $.  By a large deviation principle, when $ L $ is large, with probability $ 1-O(L^{-\infty}) $, $ m(w)=L+O(L^{\frac{2}{3}}) $,. For $ i\in\llbracket 1, m(w) \rrbracket $, the \textit{$ i $-th piece} is the interval $ \Delta_{i}(w)=[x_{i-1}(w),x_{i}(w)] $. 
 
 On $ \mathfrak{H}(\Lambda)=L^{2}(\Lambda) $, we set the following one-particle random operator
\begin{equation}
h_{w}(\Lambda)=\bigoplus_{k=1}^{m(w)}\bigg(-\frac{d^{2}}{dx^{2}}^{D}_{\vert \Delta_{k}(w)}\bigg)
\end{equation} 
where $ D $ stands for Dirichlet boundary conditions.

Now,  we consider $ n $ particles in the disordered background given by $ h_{w}(\Lambda) $ combined to a pairwise repulsive interaction. Using the statistic of Fermi-Dirac, the \textit{$ n $-particle space} on $ \Lambda $ is
\begin{equation}
 \mathfrak{H}^{n}(\Lambda)=\bigwedge_{i=1}^{n}\mathfrak{H}(\Lambda).
 \end{equation}
Then, for $ n\geq 2 $, the \textit{pieces' model} is the random operator given by 
\begin{equation}\label{H}
H^{U}_{w}(\Lambda,n)=\sum_{i=1}^{n}\bigg(\bigotimes_{j=1}^{i-1}\mathbf{1}_{\mathfrak{H}(\Lambda)}\bigg) \otimes h_{w}(\Lambda) \otimes \bigg(\bigotimes_{j=1}^{n-i} \mathbf{1}_{\mathfrak{H}(\Lambda)}\bigg)+W_{n} \qquad \text{on } \mathfrak{H}^{n}(\Lambda) 
\end{equation}
where $ W_{n} $ is the multiplication operator
\begin{equation}\label{W}
W_{n}(x_{1},\dots,x_{n})=\sum_{i<j}U(x_{i}-x_{j})
\end{equation} 
and $ U:\R\longrightarrow \R $ satisfies the following assumption.
\begin{assumption}\label{hypoU}
The function $ U:\R\rightarrow \R $ is nonnegative, even, bounded and compactly supported.
\end{assumption}
Under Assumption \ref{hypoU}, the operator $ H^{U}_{w}(\Lambda,n) $ is well-defined on $ \mathcal{D}_{w}(\Lambda,n) $ given by 
$$ \mathcal{D}_{w}(\Lambda,n)=\mathcal{C}^{\infty}_{0}\bigg(\Big( \bigcup_{k=1}^{m(w)}]x_{k-1},x_{k}[\Big)^{n}\bigg)\cap \mathfrak{H}^{n}(\Lambda) $$
 and it is nonnegative. Using perturbation theory (see e.g Chapter 6 \cite{Teschl2014}), one proves that $ H^{U}_{w}(\Lambda,n) $ is essentially self-adjoint on $ \mathfrak{H}^{n}(\Lambda) $ and it has pure spectrum. Let $ E_{w}^{U}(\Lambda,n) $ be the ground state energy of $ H_{w}^{U}(\Lambda,n) $.

\begin{definition}\label{thermolim}
The limit $  \big\{ L\to +\infty $, $ \frac{n}{L}\to \rho  \big\} $ is called the  \textbf{thermodynamic limit}. The constant $ \rho $ is the \textbf{density of particles} per unit of volume.
\end{definition}
In \cite{Klopp2020}, Klopp and Veniaminov proved that, even under weaker assumptions on $ U $, the thermodynamic limit of $ n^{-1}E_{w}^{U}(\Lambda,n) $ exists $ \mathbb{P} $-almost surely and in $ L^{1}(\mathbb{P}) $. In this paper, we give an expansion of this limit.

  \subsection{The free operator} We denote by $ H_{w}^{0}(\Lambda,n) $ the free operator and by $ E_{w}^{0}(\Lambda,n) $ its ground state energy. One can give quite explicitly the thermodynamic limit of the ground state energy per particle
\begin{equation}
\mathcal{E}^{0}(\rho):=\lim_{\substack{L\to +\infty\\ \frac{n}{L}\to \rho}}\frac{E_{w}^{0}(\Lambda,n)}{n}.
\end{equation}
The ground state energy $ E^{0}_{w}(\Lambda,n) $ is exactly the sum of the $ n $ first eigenvalues of $ h_{w}(\Lambda) $. But, since its eigenvalues only depend on the lengths of the pieces and the statistical distribution of these lengths is known, the pieces' model admits an explicit \textit{integrated density of states} (see Proposition 2.6 \cite{Klopp2020} or Proposition 3.2 \cite{Lenoble2006}). One computes
\begin{equation}\label{IDS}
N(E):=\lim_{L\to \infty}\frac{\# \Big\{\text{eigenvalues of } h_{w}(\Lambda) \text{ in } (-\infty,E] \Big\}}{L}=\frac{e^{-\frac{\pi}{\sqrt{E}}}}{1-e^{-\frac{\pi}{\sqrt{E}}}}1_{E\geq 0}
\end{equation}
Let the \textit{Fermi energy} $ E_{\rho} $ be the unique solution of $ N(E)=\rho $. Then, one deduces
\begin{equation}\label{freeE}
\mathcal{E}^{0}(\rho)=\frac{1}{\rho}\int_{-\infty}^{E_{\rho}}E\,dN(E). 
\end{equation}
We refer to Theorem 5.14 \cite{Veniaminov2012} for the proof.

  \subsection{The approach in term of occupations}
From now on, we drop the $ "w" $ index. Unlike the free operator, one cannot express the ground state energy of the pieces' model with interactions by using the spectral decomposition of the one-particle operator. However, in both cases, one can talk about the number of particles in a given piece. The $ n $-particle space admits the decomposition
\begin{equation}\label{decomposition}
\mathfrak{H}^{n}(\Lambda)=\bigoplus_{Q\in \N^{m}, \, \vert Q \vert_{1}=n}\mathfrak{H}_{Q}(\Lambda) \qquad \text{ with }  \quad \mathfrak{H}_{(q_{i})_{1\leq i\leq m}}(\Lambda)=\bigwedge_{i=1}^{m}\bigg(\bigwedge_{j=1}^{q_{i}}L^{2}(\Delta_{i})\bigg).
\end{equation}

\begin{definition}
An \textbf{occupation} is a multi-index $ Q=(q_{i})_{1\leq i\leq m } $ of norm equal to $ n $.
\end{definition}
In \cite{Klopp2020}, Klopp and Veniaminov proved that the decomposition (\ref{decomposition}) is invariant under the action of $ H^{U}(\Lambda,n) $. For a fixed occupation $ Q $, let $ H^{U}(\Lambda,n,Q) $ be the restriction of $ H^{U}(\Lambda,n) $ to the subspace $ \mathfrak{H}_{Q}(\Lambda) $. Then, the ground state $ \psi^{U}(\Lambda,n,Q) $ of  $ H^{U}(\Lambda,n,Q) $ is non-degenerate and it has exactly $ q_{i} $ particles in the piece $ \Delta_{i} $ for all $ i\in \llbracket 1,m\rrbracket $.

In the free case, it yields that, for a given occupation $ Q $, the ground state energy of $ H^{0}(\Lambda,n,Q) $ satisfies 
\begin{equation}\label{sommelibre}
E^{0}(\Lambda,n,Q)=\sum_{i=1}^{m}E^{0}(\Delta_{i},q_{i})
\end{equation}
where we denote $ E^{0}(\Delta,k) $ the ground state energy for $ k $ non-interacting fermionic particles in the piece $ \Delta $. Each particle lies in a Dirichlet Laplacian background in $ \Delta $. The minimum of $ E^{0}(\Lambda,n,Q) $ over all the occupations is the ground state energy of $ H^{0}(\Lambda,n) $. Remark that $ E^{0}(\Delta,k) $ is the sum of the $ k $ first eigenvalues of the operator  $ h_{\Delta}=-\frac{d^{2}}{dx^{2}}^{D}_{\vert \Delta} $. So, the map $ k\ \rightarrow E^{0}(\Delta,k) $  is strictly convex on $ \N $. By Lemma \ref{convexsomme}, the ground state energy $ E^{0}(\Lambda,n) $ is given by the sum of the $ n $ smallest elements of the set $ \Gamma^{0}=\big\{ E^{0}(\Delta_{i},k+1)-E^{0}(\Delta_{i},k), \, i\in \llbracket 1,m \rrbracket , \, k\in \N \big\} $.

However, note that the set $ \Gamma^{0} $ is equal to the set of all the eigenvalues of the one-particle operator $ h(\Lambda) $. Then the counting function of $ \Gamma^{0} $,
\begin{equation}\label{N0}
N^{0}(E):=\lim_{L\to +\infty}\frac{\# \Big(\Gamma^{0}\cap (-\infty,E]\Big) }{L},
\end{equation}
 is well-defined and it is equal to the integrated density of state of $ h(\Lambda) $. Thus, we recover the formula (\ref{freeE}).
 
$ \\ $

From now on, we restrict to finite-range interactions.
\begin{assumption}\label{finiterange}
Let $ s(U) $ be the support of the function $ U $ and
\begin{equation}
M=\sup_{x,y\, \in \, s(U)}\vert x-y \vert
\end{equation}
The length $ M $ is independent of $ \rho $.
\end{assumption}
The following lemma is crucial for our analysis. 
\begin{lemma}\label{lrhoU}
Let $ \Psi^{U}(\Lambda,n) $ to be a ground state of $ H^{U}(\Lambda,n) $. For $ n $ and $ L $ large enough, with probability $ 1-O(L^{-\infty}) $, there exists a minimal length $ l_{\rho,U}=-\log\big(\frac{\rho}{1+\rho}\big)-(4M+6)\rho $ such that
\begin{center}
\textit{If a piece $ \Delta_{i} $ satisfies $ \vert \Delta_{i} \vert<kl_{\rho,U} $, $ k\in \N $, then, for every occupation $ Q $,}
$$ \Big(P_{Q}\Psi^{U}(\Lambda,n)\neq 0 \Big) \Rightarrow \Big(q_{i}\leq k-1 \Big) $$
\end{center}
where $ P_{Q} $ is the orthogonal projector on $ \mathfrak{H}_{Q}(\Lambda) $.
\end{lemma}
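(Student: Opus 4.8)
The plan is to argue by contradiction using a variational (rearrangement) argument: if a ground state $\Psi^U(\Lambda,n)$ had a nonzero component $P_Q\Psi^U(\Lambda,n)$ with $q_i\geq k$ in a piece $\Delta_i$ that is too short, we would produce a competitor state of strictly lower energy by moving one particle out of $\Delta_i$. Since the decomposition $\mathfrak{H}^n(\Lambda)=\bigoplus_Q\mathfrak{H}_Q(\Lambda)$ is invariant under $H^U(\Lambda,n)$, it suffices to work occupation-by-occupation: if $P_Q\Psi^U\neq 0$, then $E^U(\Lambda,n,Q)=E^U(\Lambda,n)$, so it is enough to show that any occupation $Q$ with some $q_i\geq k$ and $\vert\Delta_i\vert<kl_{\rho,U}$ is \emph{not} minimizing, i.e. there is another occupation with strictly smaller ground state energy. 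First I would estimate $E^U(\Delta_i,q_i)$ from below: since $U\geq 0$, the interaction term only raises energy, so $E^U(\Delta_i,q_i)\geq E^0(\Delta_i,q_i)\geq \sum_{j=1}^{q_i}(j\pi/\vert\Delta_i\vert)^2$, and in particular the cost of the $k$-th particle in a piece of length $<kl_{\rho,U}$ is at least $(k\pi/(kl_{\rho,U}))^2=(\pi/l_{\rho,U})^2$. The point of the explicit constant is that $l_{\rho,U}$ is chosen slightly below $l_\rho=-\log(\rho/(1+\rho))\cdot(\text{something})$... more precisely so that $(\pi/l_{\rho,U})^2$ exceeds the Fermi energy $E_\rho$ by a controlled margin absorbing the $(4M+6)\rho$ correction.

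Next I would bound from above the cost of \emph{adding} a particle somewhere else. By a pigeonhole / large-deviation estimate on the Poisson process, with probability $1-O(L^{-\infty})$ there are at least $\sim cL$ pieces of length in a window around $l_\rho$ that are "underfilled" relative to the free ground state, and one can insert an extra particle into such a piece at energy cost at most roughly $E_\rho + O(\rho)$; crucially, placing that particle in the middle third of a long enough piece keeps it at distance $>M$ from every other particle (using Assumption~\ref{finiterange}), so the added interaction energy is \emph{zero}. Removing the offending $k$-th particle from $\Delta_i$ lowers the energy by at least $(\pi/l_{\rho,U})^2$ minus whatever interaction it was contributing (which is $\geq 0$, so this only helps). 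Comparing: the net change in energy is at most $E_\rho+O(\rho)-(\pi/l_{\rho,U})^2$, and the explicit form of $l_{\rho,U}$ is exactly what makes this quantity strictly negative for $\rho$ small — here I would use $E_\rho\sim -\log(\rho/(1+\rho))^{-2}\cdot\pi^2$ type asymptotics (inverting the IDS formula \eqref{IDS}) and a Taylor expansion to see that the $(4M+6)\rho$ term in $l_{\rho,U}$ beats the $O(\rho)$ errors. This contradicts minimality of $Q$, hence of $\Psi^U(\Lambda,n)$ restricted to $\mathfrak{H}_Q$, so $P_Q\Psi^U=0$.

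The main obstacle, and the step requiring the most care, is the quantitative comparison: one must track the error terms precisely enough to justify the specific constant $4M+6$. This requires (i) an accurate expansion of $E_\rho$ via inversion of \eqref{IDS}, (ii) a careful bound on the insertion cost that separates the one-particle kinetic cost ($\leq E_\rho+O(\rho)$, needing the large-deviation control $m(w)=L+O(L^{2/3})$ so the Fermi level is not shifted too much) from the interaction cost (made exactly $0$ by the finite-range hypothesis and a geometric placement argument, which also needs that enough pieces have length comfortably exceeding $M$), and (iii) checking that moving a particle does not create new violations elsewhere — handled by choosing the target piece to have been strictly underfilled. A secondary technical point is making the "$1-O(L^{-\infty})$" uniform: all the events (number of points in $\Lambda$, number of pieces in each length window, existence of a long underfilled piece) are large-deviation-typical for the Poisson process, so their intersection still has probability $1-O(L^{-\infty})$. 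I do not expect the iteration over $k$ to cause trouble: the bound for general $k$ follows from the $k$-th particle's kinetic cost $(\pi k/\vert\Delta_i\vert)^2 > (\pi/l_{\rho,U})^2$ whenever $\vert\Delta_i\vert<kl_{\rho,U}$, so the same comparison closes the argument for every $k$ simultaneously.
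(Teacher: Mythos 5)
Your overall strategy (work occupation by occupation, argue by contradiction, move the offending particle elsewhere, use the finite range to avoid creating interaction) is indeed the paper's strategy, and your lower bound $\pi^{2}/l_{\rho,U}^{2}$ for the kinetic cost of the $k$-th particle in a piece of length $<k\,l_{\rho,U}$ is exactly the bound used there. However, the quantitative mechanism you assign to the constant $4M+6$ is wrong, and the step built on it would fail. Shortening $l_{\rho}$ by $(4M+6)\rho$ raises the threshold energy only by
\[
\frac{\pi^{2}}{(l_{\rho}-(4M+6)\rho)^{2}}-\frac{\pi^{2}}{l_{\rho}^{2}}=O\!\Big(\frac{\rho}{l_{\rho}^{3}}\Big)=O\!\Big(\frac{\rho}{\vert\log\rho\vert^{3}}\Big),
\]
which is far too small to absorb the $O(\rho)$ error you allow in the insertion cost $E_{\rho}+O(\rho)$; it cannot even absorb the $O(l_{\rho}^{-3})$ interaction energy that appears if the target piece already contains a particle. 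Moreover, inserting into a partially filled ("underfilled") piece cannot be made interaction-free by "placing the particle in the middle third": the particles already in that piece are delocalized over the whole piece, so the cross term $\int U(x-y)\vert\phi(x)\vert^{2}\vert\psi(y)\vert^{2}$ does not vanish, and localizing your trial particle to a third of the piece multiplies its kinetic cost by $9$. So the comparison as you set it up does not close.

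The actual role of $4M+6$ is combinatorial, not energetic. One takes $l_{\rho,U}=l_{\rho}-t\rho$ and uses the Poisson statistics (Propositions \ref{nb1} and \ref{nb2}) to count, with probability $1-O(L^{-\infty})$, the pieces of length in $(l_{\rho,U},2l_{\rho,U})$ whose interaction range contains no other piece of length $\geq l_{\rho,U}$: their number is $n\big(1+(t-1)\rho-2(2M+2)\rho+o(\rho)\big)$, and $t=4M+6$ makes this exceed $n$, hence exceed the number of pieces that can carry any particle in the fixed occupation $Q$. The pigeonhole principle then yields target pieces that are \emph{empty and have particle-free interaction ranges}, so the insertion cost is strictly below $\pi^{2}/l_{\rho,U}^{2}$ with exactly zero interaction, and the comparison with the removal gain $\geq \pi^{2}/l_{\rho,U}^{2}$ is soft: no expansion of $E_{\rho}$ and no $O(\rho)$ bookkeeping of energies is needed. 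To repair your argument you must replace the "energy-margin beats $O(\rho)$" step by this counting/pigeonhole step (and handle all $e$ excess particles, not just one).
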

So, given a piece, the number of particles in this piece is bounded uniformly for any ground state. In particular, the pieces of length up to $ l_{\rho,U} $ are empty for any ground state.

We will use the term $ \textit{chain} $ to refer to a group of pieces of length greater than $ l_{\rho,U} $ with gaps of length smaller than $ M $. Let $ \mathcal{P} $ to be the set of chains.  Using the notations of Lemma \ref{lrhoU}, for any occupation $ Q $ such that $ P_{Q}\Psi^{U}_{\omega}(\Lambda,n)\neq 0 $, the ground state energy of $ H^{U}(\Lambda,n, Q) $ satisfies
\begin{equation}\label{sommeU}
E^{U}(\Lambda,n,Q)=\sum_{I\in \mathcal{P}}F^{U}\big(I,\kappa_{I}(Q)\big)
\end{equation}
where $ \kappa_{I}(Q) $ is the number of particles in the chain $ I $ and $ F^{U}(I,\kappa) $ is the smallest energy produced by $ \kappa $ particles in  $ I $. Each particle lies in a Dirichlet Laplacian background for some piece of $ I $ and it is eventually submitted to the repulsive pairwise interaction $ U $.

One should think of Equation (\ref{sommeU}) as a counterpart to Equation (\ref{sommelibre}) where each chain stands for an occupied piece in the free case. If one could prove the convexity of every map $ \kappa \rightarrow F^{U}(I,\kappa) $ then by Lemma \ref{convexsomme}, the ground state energy $ E^{U}(\Lambda,n) $ would be given by the sum of the $ n $ smallest elements of the set $ \Gamma=\{F^{U}(I,\kappa+1)-F^{U}(I,\kappa), \, I \text{ chain},\, \kappa\in \N\}  $.

For $ \kappa\ge 0 $, the \textit{$ (\kappa +1) $-th energy level of the chain $ I $} is given by 
\begin{equation}
f^{U}(I,\kappa+1)=F^{U}(I,\kappa+1)-F^{U}(I,\kappa). 
\end{equation}
It represents the smallest amount of energy that appears if one adds a particle to a minimizing configuration of $ \kappa $ particles in $ I $. From the above discussion, one would like to use that, for every chain, $ \kappa \rightarrow f^{U}(I,\kappa) $ is increasing. Using the perturbation methods, we fail to prove such a statement. However it seems relevant to search for results in case of monotony for small chains and/or for few particles.

More precisely, let $ p\geq 2 $ and $ \mathcal{P}_{p} $ be the set of chains each of which carries at most $ p $ particles for any ground state, and $ \Gamma_{p} $ be the set of the $ p $ lowest energy levels of every chain that belongs to $ \mathcal{P}_{p} $, meaning that
\begin{equation}
\Gamma_{p}=\Big\{f^{U}(I,\kappa), \, I\in \mathcal{P}_{p}, \, \kappa\leq p \Big\}.
\end{equation}
Assume that
\begin{equation}\label{convex0}
\forall I  \in \mathcal{P}_{p},  \quad \forall \kappa\leq p-1 , \qquad f(I,\kappa)< f(I,\kappa+1).
\end{equation}
Set $ \delta\in (0,1) $. By Lemma \ref{lrhoU} and by statistical distribution of the pieces (see Proposition \ref{nb3}), one proves that, for any ground state, the number of particles in  $ ^{c}\mathcal{P}_{p} $, the complement of $ \mathcal{P}_{p} $, is of order $ O(n\rho^{p-\delta}) $. One also controls the contribution of these particles to the ground state energy with a bound of order  $ O(n\rho^{p-\delta}) $. Then, up to an error $ O(n\rho^{p-\delta}) $, the ground state energy $ E^{U}(\Lambda,n) $ is given by the sum of the $ n $ smallest elements of $ \Gamma_{p} $. Let $ N^{U}_{p} $ be the counting function of $ \Gamma_{p} $, meaning that
\begin{equation}
N^{U}_{p}(\lambda):=\lim_{L\to +\infty}\frac{\# \Big(\Gamma_{p}\cap (-\infty,\lambda]\Big) }{L}.
\end{equation}
Using $ N^{U}_{p} $ as a counterpart to $ N^{0} $ (see (\ref{N0})), one should get an approximation of the thermodynamic limit of the ground state energy per particle $ \mathcal{E}^{U}(\rho) $ up to an error $ O(\rho^{p-\delta}) $.

$ \\ $

\section{Main Results}\label{mainresult}

 Since the interaction is repulsive, Assumption (\ref{convex0}) is always true for $ p=2 $. Following the above discussion, we study this case in depth. In the set $ \mathcal{P}_{2} $, a chain is either a single piece with at most two particles, or a pair of pieces with at most one particle in each piece.

Klopp and Veniaminov proved a result about the ground state energy of two interacting particles in a single piece.
\begin{proposition}\label{Esolo}\cite{Klopp2020}
Under Assumption \ref{hypoU}, for $ l>0 $, consider the operator
\begin{equation}\label{ope1}
\bigg(-\frac{d^{2}}{dy^{2}}^{D}_{\vert [0,l]}\bigg)\otimes \mathbf{1}_{L^{2}([0,l])} +\mathbf{1}_{L^{2}([0,l])}\otimes \bigg(-\frac{d^{2}}{dx^{2}}^{D}_{\vert [0,l]}\bigg) + U(x-y) \qquad \text{ on } L^{2}\big([0,l]\big)\wedge L^{2}\big([0,l]\big) 
\end{equation}
Then, for large $ l $, the ground state energy $ E^{U}\big([0,l],2\big) $ admits the following expansion
\begin{equation}
E^{U}\big([0,l],2\big)=\frac{5\pi^{2}}{l^{2}}+\frac{\gamma}{l^{3}}+o(l^{-3})
\end{equation}
with $ \gamma>0 $ when $ U\neq 0 $.
\end{proposition}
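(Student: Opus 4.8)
The plan is to establish matching upper and lower bounds. The free two‑particle ground state is the Slater determinant $\Psi_{0}=\varphi_{1}\wedge\varphi_{2}$ of the first two Dirichlet eigenfunctions $\varphi_{k}(x)=\sqrt{2/l}\,\sin(k\pi x/l)$; its energy is $\pi^{2}/l^{2}+4\pi^{2}/l^{2}=5\pi^{2}/l^{2}$, it is simple, and the gap to the rest of the free spectrum is $10\pi^{2}/l^{2}-5\pi^{2}/l^{2}=5\pi^{2}/l^{2}$. Passing to the centre‑of‑mass and relative coordinates $R=(x+y)/2$, $r=x-y$, the free operator becomes $-\tfrac12\partial_{R}^{2}-2\partial_{r}^{2}$ and the interaction is multiplication by $U(r)$; since $U$ is even and compactly supported it acts only on the thin diagonal strip $\{|r|\le M/2\}$. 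On that strip one has $\Psi_{0}(x,y)=(x-y)\,g\bigl((x+y)/2\bigr)+O\bigl((x-y)l^{-3}\bigr)$ for an explicit profile $g$ with $\|g\|_{L^{2}(0,l)}^{2}=\tfrac{5\pi^{2}}{2}l^{-3}+o(l^{-3})$; the linear vanishing of $\Psi_{0}$ on the diagonal forced by the Pauli principle is exactly what makes the interaction contribute only at order $l^{-3}$.

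\emph{Upper bound.} I would test $H^{U}$ on the Jastrow‑type state $\Psi_{\mathrm{tr}}(x,y)=\xi(x-y)\,\frac{\Psi_{0}(x,y)}{x-y}$, which is antisymmetric because $\xi$ is even and $\Psi_{0}/(x-y)$ is a smooth symmetric function (as $\Psi_{0}$ vanishes to first order on the diagonal). Here $\xi$ is a fixed, $l$‑independent odd function equal to $r$ outside a compact interval and otherwise chosen close to the zero‑energy scattering solution $\xi_{\infty}$ of $-2\xi_{\infty}''+U\xi_{\infty}=0$, $\xi_{\infty}(0)=0$, $\xi_{\infty}'(\pm\infty)=1$; off the support of $U$ one has $\xi_{\infty}(r)=r-a\,\mathrm{sgn}(r)$ with $a>0$ (convexity of $\xi_{\infty}$ on $\{U>0\}$, using $U\ge 0$, $U\not\equiv 0$, $\xi_{\infty}(0)=0$). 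Expanding the Rayleigh quotient — all changes relative to $\Psi_{0}$ are localised to the strip where $\xi\neq\mathrm{id}$, and the normalisation and centre‑of‑mass kinetic corrections affect the quotient only at order $l^{-5}$ — gives
\[
\frac{\langle\Psi_{\mathrm{tr}},H^{U}\Psi_{\mathrm{tr}}\rangle}{\|\Psi_{\mathrm{tr}}\|^{2}}=\frac{5\pi^{2}}{l^{2}}+\frac{5\pi^{2}}{2l^{3}}\Bigl(2\int_{\mathbb{R}}\bigl(\xi'(r)^{2}-1\bigr)\,dr+\int_{\mathbb{R}}U(r)\,\xi(r)^{2}\,dr\Bigr)+o(l^{-3}).
\]
Minimising the bracket over admissible $\xi$ yields the value $4a$ (the infimum is approached, not attained, as the modification length tends to $\infty$ after $l$), so $E^{U}([0,l],2)\le \tfrac{5\pi^{2}}{l^{2}}+\tfrac{\gamma}{l^{3}}+o(l^{-3})$ with $\gamma=10\pi^{2}a>0$; since $\xi\equiv\mathrm{id}$ is not the minimiser when $U\not\equiv 0$, this strictly improves on the naive first‑order value $\tfrac{5\pi^{2}}{2l^{3}}\int U r^{2}$.

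\emph{Lower bound — the main obstacle.} One must prove $E^{U}([0,l],2)\ge \tfrac{5\pi^{2}}{l^{2}}+\tfrac{\gamma}{l^{3}}-o(l^{-3})$, and here the naive route fails badly: $\|U\Psi_{0}\|^{2}=O(l^{-3})$ while the gap is only $O(l^{-2})$, so Temple's inequality or second‑order perturbation theory produce an error $O(l^{-1})$, far larger than what is to be resolved — the point being that $U\Psi_{0}$, though small in $L^{2}$, keeps a fixed fraction of its weight on modes of energy $\sim M^{-2}$ that the gap cannot see. I would instead use a Dyson‑type smearing carried out \emph{fibre‑wise in the relative variable}, which costs no localisation error: for each fixed $R$ one bounds $2\int|\partial_{r}\Psi|^{2}+\int U|\Psi|^{2}$ on the relative interval from below — using that the true ground state vanishes at $r=0$ — by $\int\widehat U_{R}\,|\Psi|^{2}$, where $\widehat U$ is supported on an annulus $\{M/2<|r|<R_{1}\}$ of fixed size and carries the scattering‑length constant, at the price of surrendering the (order $l^{-3}$) relative kinetic energy on $\{|r|<R_{1}\}$. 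Integrating over $R$ gives $\langle\Psi,H^{U}\Psi\rangle\ge \tfrac12\int|\partial_{R}\Psi|^{2}+2\int_{|r|>R_{1}}|\partial_{r}\Psi|^{2}+\int\widehat W|\Psi|^{2}$, whose bottom is, by a Neumann‑type bracketing, the free problem on $[0,l]^{2}$ with the thin diagonal strip excised under Neumann matching — which sees the excised strip only at order $o(l^{-3})$ — perturbed by the now bounded, spread‑out potential $\widehat W$, for which first‑order perturbation theory around $\Psi_{0}$ is accurate and reproduces $\gamma/l^{3}$. Making the choice of $R_{1}$, the precise form of $\widehat W$, and all the error terms balance simultaneously against the $o(l^{-3})$ target — together with the prerequisite, which follows from the upper bound and the gap, that the ground state obeys $\|\Psi-\Psi_{0}\|=O(l^{-3/2})$ — is the technical heart of the argument and the step I expect to be hardest.
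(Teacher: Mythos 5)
Your upper bound is a sound (and verifiable) computation: the Slater determinant indeed satisfies $\Psi_{0}\approx(x-y)g(R)$ with $\|g\|_{L^{2}}^{2}=\tfrac{5\pi^{2}}{2}l^{-3}$, and the Jastrow modification with the zero--energy scattering solution gives an energy $\tfrac{5\pi^{2}}{l^{2}}+\tfrac{10\pi^{2}a}{l^{3}}+o(l^{-3})$. But the proposal does not prove the proposition, because the lower bound --- which you yourself identify as ``the technical heart'' --- is left as a strategy rather than an argument. Concretely: (i) the fibre-wise Dyson lemma, the choice of the smearing radius $R_{1}$ and of $\widehat W$, and the Neumann bracketing across the excised strip are only named, and it is exactly there that the error terms must be shown to be $o(l^{-3})$, i.e.\ one full order below the quantity being computed; (ii) the a priori estimate you invoke as a ``prerequisite'' is wrong as stated: from the upper bound $E^{U}\le \tfrac{5\pi^{2}}{l^{2}}+O(l^{-3})$ and the free gap $\sim 5\pi^{2}l^{-2}$ one only gets $\|\Pi_{\perp}\Psi\|^{2}=O(l^{-1})$, hence $\|\Psi-\Psi_{0}\|=O(l^{-1/2})$, not $O(l^{-3/2})$; improving this requires precisely the kind of high-/low-mode splitting that the missing part of the argument is supposed to supply. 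Finally, the identification of your constant $10\pi^{2}a$ with the $\gamma$ of the statement is asserted, not derived, since the lower bound that would pin it down is absent.

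For comparison, the paper does not reprove Proposition \ref{Esolo} at all --- it is quoted from \cite{Klopp2020} --- and the route used there (reproduced in the Appendix for the analogous two-piece Proposition \ref{Epair}) is quite different and avoids your obstacle entirely: after scaling, one writes the eigenvalue problem through a Schur (Feshbach) complement on $(\mathrm{Span}(\psi_{0}),\mathrm{Span}(\psi_{0})^{\perp})$, uses Krein's formula to resum the resolvent, and obtains the exact identity $\delta E=l^{-4}\langle\phi_{0}^{l},(1+T^{l}T^{l\star})^{-1}\phi_{0}^{l}\rangle+O(l^{-6})$ (with the appropriate powers in the one-piece case), whose limit is computed by a strong-resolvent argument. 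This yields matching upper and lower bounds in one stroke, with the positive constant $\gamma$ appearing as $\langle\varphi,(1+S)^{-1}\varphi\rangle$ --- the resummation playing the role that the scattering length plays in your trial function --- and with no need for a Dyson lemma or a separate a priori control of the ground state. If you want to salvage your approach, the Feshbach identity is also the cleanest way to close your lower bound.
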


In the Appendix, we prove an analogue of Proposition \ref{Esolo} for the ground state energy of two interacting particles in two distinct pieces.
\begin{proposition}\label{Epair}
Under Assumption \ref{hypoU}, for $ l>0 $, $ d\geq0 $ and $ a>1 $, consider the operator
\begin{equation}\label{ope2}
\bigg(-\frac{d^{2}}{dy^{2}}^{D}_{\vert [-al,0]}\bigg)\otimes \mathbf{1}_{L^{2}([d,d+l])} +\mathbf{1}_{L^{2}([-al,0])}\otimes \bigg(-\frac{d^{2}}{dx^{2}}^{D}_{\vert [d,d+l]}\bigg) + U(x-y) \quad \text{ on } L^{2}\big([-al,0]\big)\otimes L^{2}\big([d,d+l]\big) 
\end{equation}
Then, for $ d\geq 0 $ and large $ l>0 $, the ground state energy $E^{U}\Big(\big\{[-al,0],[d,d+l]\big\},(1,1)\Big) $ admits the following expansion
\begin{equation}\label{E2}
 E^{U}\Big(\big\{[-al,0],[d,d+l]\big\},(1,1)\Big)=\Big(\frac{\pi^{2}}{l^{2}}+\frac{\pi^{2}}{(al)^{2}}\Big)+\frac{\sigma(d)}{a^{3}l^{6}}\Big(1+o(1)\Big)
\end{equation}
with $ \sigma(d) $ a positive function that vanishes for $ d>\text{diam}(\supp(U)) $.
\end{proposition}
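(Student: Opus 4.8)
The plan is to treat the interaction $U(x-y)$ as a perturbation of the decoupled two-piece Dirichlet Laplacian, whose ground state is the explicit product $\phi_0(y)\psi_0(x)$, where $\phi_0$ is the first Dirichlet eigenfunction on $[-al,0]$ and $\psi_0$ the first on $[d,d+l]$, with eigenvalue $\frac{\pi^2}{(al)^2}+\frac{\pi^2}{l^2}$. First I would observe that the two pieces share at most the single point $0=d$ when $d=0$ and are disjoint when $d>0$; since $U$ is supported in $s(U)$ with $\operatorname{diam}(\supp U)=:R$, the interaction operator is nonzero only when $d\le R$, and it acts only on the region where $x-y$ lies in $s(U)$, i.e.\ $x$ and $y$ both within distance $O(1)$ of the common endpoint. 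This locality is what will produce the function $\sigma(d)$ vanishing for $d>R$. I would then set up Rayleigh--Schrödinger perturbation theory in the small parameter that measures the size of $U$ against the spectral gap; the gap of the decoupled operator above its ground state is $\Omega(l^{-2})$ (the next level is $\frac{4\pi^2}{l^2}+\frac{\pi^2}{(al)^2}$ or $\frac{\pi^2}{l^2}+\frac{4\pi^2}{(al)^2}$), so standard second-order perturbation theory applies once $l$ is large.

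The key computation is the first-order correction $\langle \phi_0\psi_0, U(x-y)\,\phi_0\psi_0\rangle = \iint U(x-y)\,\phi_0(y)^2\psi_0(x)^2\,dx\,dy$. Near the endpoint $0$, $\psi_0(x)=\sqrt{2/l}\,\sin\!\big(\tfrac{\pi(x-d)}{l}\big)\approx \sqrt{2/l}\cdot\tfrac{\pi(x-d)}{l}$ and similarly $\phi_0(y)\approx \sqrt{2/(al)}\cdot\tfrac{\pi(-y)}{al}$ for $y$ near $0$. Hence $\phi_0(y)^2\psi_0(x)^2 \approx \frac{4\pi^4}{a^3 l^6}\,y^2 (x-d)^2$ on the $O(1)$-neighbourhood of the endpoint where $U$ is supported, and the first-order correction is $\frac{4\pi^4}{a^3l^6}\iint U(x-y)\,y^2(x-d)^2\,dx\,dy\,(1+o(1))$. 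This identifies
\[
\sigma(d)=4\pi^4\int_{-\infty}^{0}\!\!\int_{d}^{\infty} U(x-y)\,y^2(x-d)^2\,dx\,dy,
\]
which is manifestly nonnegative (as $U\ge 0$), and vanishes precisely when the supports $\{x-y : x\ge d,\ y\le 0\}$ and $s(U)$ are disjoint, i.e.\ for $d>R$. I would then check that the second-order correction is genuinely lower order: it is bounded by $\|U\phi_0\psi_0\|^2/\mathrm{gap} = O\big((l^{-1/2}\cdot l^{-1/2})^2 \cdot l^{-? }\big)$ — more carefully, $U\phi_0\psi_0$ has $L^2$-norm $O(l^{-2})$ from the two vanishing factors near the endpoint times the $O(1)$ support, so its square is $O(l^{-4})$, divided by the gap $\Omega(l^{-2})$ gives $O(l^{-2})$; this must be compared with the claimed main term $\sigma(d) a^{-3} l^{-6}$, so I would need the sharper bound that $U\phi_0\psi_0$ restricted away from its leading behaviour, together with the structure of the resolvent, yields a correction $o(a^{-3}l^{-6})$. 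Getting this second-order bound sharp enough is the main obstacle: a crude operator-norm estimate is not enough, and one must exploit that both $\phi_0$ and $\psi_0$ vanish linearly at the relevant endpoint and that the admissible intermediate states also carry such vanishing, so that the matrix elements are suppressed by extra powers of $l^{-1}$ and of $a^{-1}$.

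Finally I would assemble the expansion: $E^U = \big(\tfrac{\pi^2}{l^2}+\tfrac{\pi^2}{(al)^2}\big) + \tfrac{\sigma(d)}{a^3l^6}(1+o(1))$, where the $o(1)$ absorbs both the Taylor-remainder in replacing $\sin$ by its linear part (an $O(l^{-1})$ relative error on the first-order term, plus the fact that $U$'s support has fixed size so the neighbourhood of the endpoint is indeed $O(1)$) and the second-order perturbation term. I would handle the boundary case $d=0$ (pieces sharing an endpoint) by noting the decoupled operator is still essentially self-adjoint with the stated product ground state — the shared single point has measure zero — so nothing changes. One should also record that the perturbation series converges for $l$ large because $\|U\|_\infty$ is fixed while the gap only shrinks polynomially, which is exactly Assumption \ref{hypoU} plus Assumption \ref{finiterange} in use. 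The uniformity in $a>1$ and $d\ge 0$ is automatic from the explicit formulas, since larger $a$ only makes $\phi_0$ flatter near $0$, improving every estimate.
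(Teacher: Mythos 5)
There is a genuine gap, and it is exactly at the point you flagged as "the main obstacle": the second-order (and in fact every higher-order) perturbative correction is \emph{not} $o(a^{-3}l^{-6})$; it is of the same order $a^{-3}l^{-6}$ as your first-order term, so first-order Rayleigh--Schr\"odinger theory does not identify $\sigma(d)$. To see this, estimate the matrix elements of $U$ between the ground state and the excited products $\phi_p\otimes\psi_q$: near the junction both factors vanish linearly, so for $p\lesssim al$, $q\lesssim l$ one gets $M_{p,q}\approx \frac{4\pi^{4}pq}{a^{3}l^{6}}\iint U(s+t+d)s^{2}t^{2}\,ds\,dt$, while the gaps are $E_{p,q}-E_{1,1}=\pi^{2}\big(\frac{q^{2}-1}{l^{2}}+\frac{p^{2}-1}{(al)^{2}}\big)$. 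Summing $M_{p,q}^{2}/(E_{p,q}-E_{1,1})$ over $p\lesssim al$, $q\lesssim l$ gives a quantity of order $a^{-3}l^{-6}$ (the sum over $p,q$ contributes a factor of order $a^{3}l^{6}$ against the prefactor $a^{-6}l^{-12}$), exactly comparable to the first-order term. The vanishing of the intermediate eigenfunctions at the junction is already built into this count and provides no further suppression; likewise your closing remark that the perturbation series converges because the gap "only shrinks polynomially" is backwards, since $\Vert U\Vert_\infty/\mathrm{gap}\sim l^{2}\to\infty$ (equivalently, after rescaling to $[0,1]^2$ the perturbation is $l^{2}U(l\cdot+al\cdot+d)$, of norm $l^{2}$).

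Consequently your candidate value $\sigma(d)=4\pi^{4}\iint U(s+t+d)s^{2}t^{2}\,ds\,dt$ is only an upper bound for the true coefficient, strictly too large whenever $U\neq 0$. The paper handles this by a Schur complement (Feshbach) reduction onto $\mathrm{Span}(\psi_{1,1})$ followed by a resummation of the whole series: writing $T^{l}=\sqrt{U^{l}}\sqrt{R^{0}_{\perp}(E_{0})}$ and $\phi^{l}_{0}=l^{2}\sqrt{U^{l}}\psi_{0}$ in the rescaled picture, the energy shift is $\langle\phi^{l}_{0},(1+T^{l}T^{l\star})^{-1}\phi^{l}_{0}\rangle/l^{4}$ up to errors, and after a change of variables concentrating at the junction, $T^{l}T^{l\star}$ converges strongly to a bounded nonnegative operator $S$ while $\phi^{l}_{0}$ converges to $\varphi(u,v)=2\pi^{2}\sqrt{U(u+d)}(u-v)v$; this yields $\sigma(d)=\langle\varphi,(1+S)^{-1}\varphi\rangle_{L^{2}(\Omega)}$, whereas your expression is precisely $\Vert\varphi\Vert^{2}$, i.e.\ only the zeroth term of the Neumann series, all of whose terms are of the same size. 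Your identification of the support condition (vanishing for $d>\mathrm{diam}(\supp(U))$), of the scaling $a^{-3}l^{-6}$, and of the leading eigenvalue $\frac{\pi^{2}}{l^{2}}+\frac{\pi^{2}}{(al)^{2}}$ are all correct, but the proof needs the Birman--Schwinger-type resummation, not bare second-order perturbation theory.
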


We now state our theorem.

\begin{theorem}\label{mainth}
Under Assumption \ref{hypoU} and Assumption \ref{finiterange}, let $ M=\text{diam}(\supp(U)) $ and $ l_{\rho,U}>0 $ be the minimal length defined in Lemma \ref{lrhoU}. Consider, on $ (0,+\infty) $, the application
\begin{align}
\mathcal{J}(\lambda)&=\big(1-Me^{-l_{\rho,U}}\big)^{2}\bigg(\int_{\mathcal{D}_{1}(\lambda)}f^{U}([0,u],1)e^{-u}\, du +\int_{\mathcal{D}_{2}(\lambda)}f^{U}([0,u],2)e^{-u}\,du \nonumber\\
&\quad +\int_{0}^{M}\int_{\mathcal{D}_{3}(\lambda)}2 e^{-(u+v)}f^{U}(\{[-u,0],[t,v+t]\},1)\,dtdudv  \nonumber \\
&\quad +\int_{0}^{M}\int_{\mathcal{D}_{4}(\lambda,t)}2e^{-(u+v)}f^{U}\big(\{[-u,0],[t,v+t]\},2\big)\,dtdudv  \bigg)\nonumber
\end{align}
where $ f^{U}(I,1) $ (resp. $ f^{U}(I,2) $) is the first (resp. second) energy level of the chain $ I $,
\begin{align*}
\mathcal{D}_{1}(\lambda)=\Big[\frac{\pi}{\sqrt{\lambda}}, 3l_{\rho,U}\Big]&, \qquad \mathcal{D}_{3}(\lambda)=\bigg\{(x,y)\in \big[l_{\rho,U},2l_{\rho,U}\big]^{2}, y\geq \max\Big(x,\frac{\pi}{\sqrt{\lambda}}\Big)\bigg\} \\
\mathcal{D}_{2}(\lambda)=\Big[\frac{2\pi}{\sqrt{\lambda}}+\frac{\gamma}{8\pi^{2}},3l_{\rho,U}\Big]&, \qquad
\mathcal{D}_{4}(\lambda,t) =\bigg\{ (x,y)\in \big[l_{\rho,U},2l_{\rho,U}\big]^{2} ,\, y\geq x\geq \Big(\frac{\pi}{\sqrt{\lambda}}+\frac{\sigma(t)}{2y^{3}}\Big) \bigg\}.
\end{align*}
and $ \gamma $ (resp. $ \sigma(t) $) is given in Proposition \ref{Esolo} (resp. Proposition \ref{Epair}).

Set $ \delta\in (0,1) $. There exists $ \rho_{\delta}>0 $ such that for every $ \rho\in (0,\rho_{\delta}) $ there is a Fermi energy level $ \lambda_{\rho} $, depending only on $ \rho $ and $ U $, such that, with probability $ 1-O(L^{-\infty}) $, the thermodynamic limit of the ground state energy per particle satisfies
\begin{equation}\label{asymp1}
\mathcal{E}^{U}(\rho):=\lim_{\substack{L\to +\infty\\ \frac{n}{L}\to \rho}}\frac{E_{w}^{U}(\Lambda,n)}{n}=\frac{1}{\rho} \mathcal{J}(\lambda_{\rho})+O(\rho^{2-\delta}).
\end{equation}
\end{theorem}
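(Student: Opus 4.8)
The plan is to reduce the statement to a counting/averaging argument over chains, in the spirit of the passage from \eqref{sommelibre}/\eqref{N0} to \eqref{freeE}, but now using $\Gamma_{2}$ in place of $\Gamma^{0}$. First I would invoke Lemma~\ref{lrhoU} (with $k=3$) to control the occupations: every piece of length below $3l_{\rho,U}$ carries at most two particles for any ground state, and pieces below $l_{\rho,U}$ are empty, so the relevant random background decomposes into chains, and the chains that actually matter are the ``small'' ones in $\mathcal{P}_{2}$ — single pieces of length in $[l_{\rho,U},3l_{\rho,U}]$ carrying one or two particles, and pairs of pieces of length in $[l_{\rho,U},2l_{\rho,U}]$ separated by a gap $t\le M$, again carrying one or two particles. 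The four integrals defining $\mathcal{J}(\lambda)$ are exactly the expectation (per unit length, via the ergodic theorem for the Poisson point process) of the total energy contributed by these four types of chains when one keeps only the energy levels lying below the Fermi threshold $\lambda$; the prefactor $\bigl(1-Me^{-l_{\rho,U}}\bigr)^{2}$ and the densities $e^{-u}$, $2e^{-(u+v)}$ are the Poisson intensities of the corresponding geometric configurations, while the thresholds $\mathcal{D}_{1},\dots,\mathcal{D}_{4}$ encode the condition $f^{U}(I,\kappa)\le\lambda$, using the expansions of Proposition~\ref{Esolo} (the $\gamma/(8\pi^{2})$ shift) and Proposition~\ref{Epair} (the $\sigma(t)/(2y^{3})$ shift) to make the threshold explicit up to negligible order.

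Concretely, the steps would be: (1) Since the interaction is repulsive, $f^{U}(I,1)<f^{U}(I,2)$ automatically, so Assumption~\eqref{convex0} holds for $p=2$; by Lemma~\ref{convexsomme} the ground state energy restricted to chains in $\mathcal{P}_{2}$ is the sum of the smallest elements of $\Gamma_{2}$. (2) Using Lemma~\ref{lrhoU} and the statistical distribution of the pieces (Proposition~\ref{nb3}), bound the number of particles living in chains \emph{not} in $\mathcal{P}_{2}$ by $O(n\rho^{2-\delta})$, and bound their energetic contribution by $O(n\rho^{2-\delta})$; this is where the finite-range Assumption~\ref{finiterange} and the Poisson tail $O(L^{-\infty})$ are used to say ``large chains are rare.'' (3) Define $\lambda_{\rho}$ as the Fermi level for $\Gamma_{2}$, i.e. the threshold at which the counting function $N^{U}_{2}$ reaches $\rho$; equivalently, $\lambda_{\rho}$ is chosen so that the number of energy levels in $\Gamma_{2}$ below $\lambda_{\rho}$ equals $n$ up to $o(n)$. (4) Apply the ergodic theorem / law of large numbers for the Poisson point process to identify $\lim_{L\to\infty} L^{-1}\sum_{I\in\mathcal{P}_{2}} F^{U}(I,\kappa_{I})$, restricted to levels below $\lambda_{\rho}$, with the deterministic integral $\mathcal{J}(\lambda_{\rho})$; here one checks that the four families of chains correspond exactly to the four integrands, that the $e^{-u}$ factors are the densities of a piece of length $u$ being present, that the correction $\bigl(1-Me^{-l_{\rho,U}}\bigr)^{2}$ accounts for the (independent) requirement that the neighbouring gaps be of the right type, and that the domains $\mathcal{D}_{j}$ are precisely $\{f^{U}(I,\kappa)\le\lambda_{\rho}\}$ after substituting the asymptotic expansions. (5) Divide by $n\sim\rho L$ and collect all error terms into $O(\rho^{2-\delta})$.

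The main obstacle, I expect, is step (2): rigorously controlling the contribution of the ``bad'' chains — those carrying three or more particles, or those which are geometrically large — and showing both their particle count and their energy are $O(n\rho^{2-\delta})$. This requires combining the a priori occupation bound of Lemma~\ref{lrhoU} with sharp probabilistic estimates on the sizes of chains (a chain of $j$ pieces each long enough and separated by gaps $<M$ has probability exponentially small in $j$, but one needs this uniformly and summably), and with upper bounds on $F^{U}(I,\kappa)$ that are good enough to be summable against these tail probabilities. A secondary subtlety is that one cannot prove convexity of $\kappa\mapsto f^{U}(I,\kappa)$ for general chains, so one must argue that discarding the non-convex large chains only costs $O(\rho^{2-\delta})$ and that the induction of Lemma~\ref{convexsomme} applies cleanly to the small chains where convexity does hold. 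The remaining steps — the ergodic averaging and the explicit threshold computations via Propositions~\ref{Esolo} and \ref{Epair} — are essentially bookkeeping, though one must be careful that the $o(1)$ and $o(l^{-3})$ error terms in those expansions translate into a genuine $O(\rho^{2-\delta})$ after integration and division by $\rho$.
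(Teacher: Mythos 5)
Your plan follows essentially the same route as the paper's proof: the chain decomposition from Lemma \ref{lrhoU}, the $O(n\rho^{2-\delta})$ control of particles and energy outside $\mathcal{P}_{2}$ (the paper's Lemma \ref{occupoutofchains} and Proposition \ref{leftover}), convexity of the levels for $p=2$ reducing the $\mathcal{P}_{2}$-energy to the sum of the smallest elements of $\Gamma_{2}$ (Lemma \ref{convfor2}, Proposition \ref{levelsomme}), and the identification of the counting function with $J$ and of the energy with $\mathcal{J}(\lambda_{\rho})$ via the Poisson statistics and the expansions of Propositions \ref{Esolo} and \ref{Epair} (Propositions \ref{NUlambda}, \ref{thlimbeta}, \ref{thlimU}). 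The only cosmetic difference is that you invoke an ergodic averaging argument where the paper carries out an explicit $\rho^{\beta}$-discretization of $\Gamma_{2}$ with the quantitative piece statistics of Lemma \ref{nb3}, which is what actually yields the stated error.
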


We also get results on the ground state itself. Recall that, in any chain of $ \mathcal{P}_{2} $, there is at most two particles. They are either in the same piece (see the operator (\ref{ope1})) either in two distinct pieces (see the operator (\ref{ope2})). From $ \lambda_{\rho}>0 $ a Fermi energy level given by Theorem \ref{mainth}, we build an occupation $ Q^{\text{test}} $ such that
\begin{enumerate}
\item[(i)] for a single piece $\Delta_{i}\in \mathcal{P}_{2}$, 
 \[ q_{i}^{\text{test}}=\max\big\{ q,\,  f^{U}(\Delta_{i},q)\leq \lambda_{\rho}\big\};\]
\item[(ii)] for a pair $ (\Delta_{j},\Delta_{k})\in \mathcal{P}_{2} $, assuming $ \vert \Delta_{j} \vert\leq \vert \Delta_{k} \vert $,
\begin{align*}
q_{j}^{\text{test}}=\max&\bigg(0,\,  \max\Big\{q, \, f^{U}\big((\Delta_{j},\Delta_{k}),q\big)\leq  \lambda_{\rho} \Big\}-1\bigg), \\
q_{k}^{\text{test}}=\min&\bigg(1,\, \max\Big\{ q,\,  f^{U}\big((\Delta_{j},\Delta_{k}),q\big)\leq \lambda_{\rho}\Big\}\bigg).
\end{align*}
\end{enumerate}
We prove that one can complete $ Q^{\text{test}} $ on $ ^{c}\mathcal{P}_{2} $ with respect to Lemma \ref{lrhoU}. Then, set the following state
\begin{equation}
\Psi^{\text{test}}(\Lambda,n)=\bigg(\bigwedge_{I\, \in \,  \mathcal{P}_{2}}\psi^{U}\Big(I,(q_{i}^{\text{test}})_{i\in I}\Big)\bigg)\wedge \bigg(\bigwedge_{I\, \in \, ^{c}\mathcal{P}_{2}}\bigwedge_{i\in I} \psi^{0}\Big(\Delta_{i},q_{i}^{\text{test}}\Big)\bigg)
\end{equation}
where
\begin{enumerate}
\item[(i)] $ \psi^{U}\Big(I,(q_{i})_{i\in I}\Big) $  is the ground state for the interacting system with exactly $ q_{i} $ particles in $ \Delta_{i} $;
\item[(ii)] $ \psi^{0}(\Delta,q) $ is the ground state for $ q $ non-interacting particles in $ \Delta $, given by the Slater determinant of the $ q $ firsts eigenfunctions of the operator  $ h_{\Delta}=-\frac{d^{2}}{dx^{2}}^{D}_{\vert \Delta} $.
\end{enumerate}
We compare the state $ \Psi^{\text{test}}(\Lambda,n) $ to any ground state $ \Psi^{U}(\Lambda,n) $ through the one- and two- particle densities, using trace norm $ \Vert \, \, \Vert_{\text{tr}} $.
\begin{definition}
For $ \phi\in \mathfrak{H}^{n}(\Lambda)  $, its \textbf{$ 1 $-particle density}  is the operator $ \gamma^{(1)}_{\phi} $ on $ \mathfrak{H}^{1}(\Lambda)=L^{2}(\Lambda) $ with kernel
\begin{equation}\label{defgamma1} 
\gamma^{(1)}_{\phi}(x,y)=n\int_{\Lambda^{n-1}} \phi(x,Z)\phi(y,Z)dZ.
\end{equation}
The \textbf{$ 2 $-particle density} of $ \phi $ is the operator $ \gamma^{(2)}_{\phi} $ on $ \mathfrak{H}^{2}(\Lambda) $ with kernel
\begin{equation}\label{defgamma2}
\gamma^{(2)}_{\phi}(x_{1},x_{2},y_{1},y_{2})=\frac{n(n-1)}{2}\int_{\Lambda^{n-2}}\phi(x_{1},x_{2},Z)\phi(y_{1},y_{2},Z)dZ.
\end{equation}
\end{definition}

\begin{proposition}\label{compgamma1}
Let $ \Psi^{U}(\Lambda,n) $ be a ground state of $ H^{U}(\Lambda,n) $. For $ \delta\in (0,1) $, $ \rho\in (0,\rho_{\delta})$, set the state $ \Psi^{\text{test}}(\Lambda,n) $ according to the above construction. Then, in the thermodynamic limit, with probability $ 1-O(L^{-\infty}) $, one has
\begin{equation}
\frac{1}{n}\Big\Vert \gamma^{(1)}_{\Psi^{U}(\Lambda,n)}-\gamma^{(1)}_{\Psi^{\text{test}}(\Lambda,n)} \Big\Vert_{\text{tr}} \leq 10\rho^{2-\delta}.
\end{equation}
\end{proposition}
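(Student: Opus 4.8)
The plan is to exploit that one-particle densities are block-diagonal with respect to the orthogonal decomposition $L^{2}(\Lambda)=\bigoplus_{I\in\mathcal{P}}\big(\bigoplus_{\Delta_{i}\subset I}L^{2}(\Delta_{i})\big)$ (the pieces of length $<l_{\rho,U}$ carrying no particle, by Lemma~\ref{lrhoU}), and to reduce the statement to a comparison of occupations. Both states are wedge products over the chains: $\Psi^{\text{test}}(\Lambda,n)$ by construction, and $\Psi^{U}(\Lambda,n)=\bigwedge_{I\in\mathcal{P}}\psi^{U}(I,Q^{U}|_{I})$ because the finite-range interaction couples pieces only within a chain, so $H^{U}(\Lambda,n,Q^{U})$ splits as a sum of chain-Hamiltonians acting on disjoint tensor factors and, by optimality of the occupation $Q^{U}$ carried by $\Psi^{U}$, the restriction $Q^{U}|_{I}$ minimizes the chain energy among occupations of its mass $\kappa_{I}(Q^{U})$. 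Writing $\Psi^{U}_{I}$ and $\Psi^{\text{test}}_{I}$ for the corresponding chain components, one gets
\[
\big\Vert \gamma^{(1)}_{\Psi^{U}(\Lambda,n)}-\gamma^{(1)}_{\Psi^{\text{test}}(\Lambda,n)}\big\Vert_{\text{tr}}=\sum_{I\in\mathcal{P}}\big\Vert \gamma^{(1)}_{\Psi^{U}_{I}}-\gamma^{(1)}_{\Psi^{\text{test}}_{I}}\big\Vert_{\text{tr}}\le\sum_{I\in\mathcal{P}}\big(\kappa_{I}(Q^{U})+\kappa_{I}(Q^{\text{test}})\big)\mathbf{1}_{\Psi^{U}_{I}\neq\Psi^{\text{test}}_{I}},
\]
using that each $\gamma^{(1)}_{\Psi^{U}_{I}}$, $\gamma^{(1)}_{\Psi^{\text{test}}_{I}}$ is nonnegative with trace equal to the number of particles it carries.

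Next I would check that a chain $I\in\mathcal{P}_{2}$ with $\kappa_{I}(Q^{U})=\kappa_{I}(Q^{\text{test}})$ contributes nothing. In $\mathcal{P}_{2}$ a chain is either a single piece with $0$, $1$ or $2$ particles, or a pair of pieces carrying $0$, $1$ or $2$ particles with at most one per piece; in each case the common particle number determines the within-chain distribution (for a pair carrying a single particle, both $Q^{U}$ and $Q^{\text{test}}$ place it in the longer piece — energy-minimizing for $\Psi^{U}$, and prescribed by rule (ii) for $\Psi^{\text{test}}$), and then $\Psi^{U}_{I}=\Psi^{\text{test}}_{I}$ is the non-degenerate ground state $\psi^{U}(I,\cdot)$ common to both constructions. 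Hence the only contributing chains are: (a) those of $\mathcal{P}_{2}$ on which $Q^{U}$ and $Q^{\text{test}}$ disagree, and (b) those of ${}^{c}\mathcal{P}_{2}$, where $\Psi^{\text{test}}$ uses free ground states on individual pieces while $\Psi^{U}$ uses the interacting chain ground state and so they may differ even with equal occupations.

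It then remains to bound the total mass of families (a) and (b). For (b): by Lemma~\ref{lrhoU} and the Poisson statistics (Proposition~\ref{nb3}), any ground state places $O(n\rho^{2-\delta})$ particles in ${}^{c}\mathcal{P}_{2}$, and $\Psi^{\text{test}}$ places the complementary number $n-N^{U}_{2}(\lambda_{\rho})L$ there, which is also $O(n\rho^{2-\delta})$ by the choice of $\lambda_{\rho}$ in Theorem~\ref{mainth}; so $\sum_{I\in{}^{c}\mathcal{P}_{2}}\big(\kappa_{I}(Q^{U})+\kappa_{I}(Q^{\text{test}})\big)=O(n\rho^{2-\delta})$. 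For (a): both occupations fill $\mathcal{P}_{2}$ greedily by the smallest elements of the level set $\Gamma_{2}$ — for $Q^{\text{test}}$ this is exactly rules (i)--(ii) with threshold $\lambda_{\rho}$, and for $Q^{U}$ it follows from the factorization (\ref{sommeU}), the convexity of $\kappa\mapsto f^{U}(I,\kappa)$ on $\{0,1,2\}$ (Assumption (\ref{convex0}) with $p=2$, automatic since $U\ge0$) and Lemma~\ref{convexsomme}. The two greedy fillings place $n$ particles minus $O(n\rho^{2-\delta})$ in $\mathcal{P}_{2}$, hence reach depths differing by $O(n\rho^{2-\delta})$ levels, and — on the probability $1-O(L^{-\infty})$ event where no chain level equals $\lambda_{\rho}$ exactly, the lengths being continuously distributed — they agree on all but $O(n\rho^{2-\delta})$ chains, each contributing at most $4$. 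Adding the contributions of (a) and (b) gives $\big\Vert \gamma^{(1)}_{\Psi^{U}(\Lambda,n)}-\gamma^{(1)}_{\Psi^{\text{test}}(\Lambda,n)}\big\Vert_{\text{tr}}=O(n\rho^{2-\delta})$; tracking the constants (shrinking $\rho_{\delta}$ if needed) yields the bound $10\,n\rho^{2-\delta}$, and dividing by $n$ gives the claim.

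The main obstacle is the last step: making quantitative that the ground-state occupation $Q^{U}$ agrees with $Q^{\text{test}}$ up to $O(n\rho^{2-\delta})$ particles. This forces one to combine the structural facts (chain decomposition, factorization of the energy, convexity of the first two levels of a chain, Lemma~\ref{convexsomme}) with the probabilistic estimates controlling ${}^{c}\mathcal{P}_{2}$ (Proposition~\ref{nb3}) and with the defining equation for $\lambda_{\rho}$. A secondary technical point, should $\Psi^{U}$ be a superposition over several optimal occupations, is that all such occupations lie within $O(n\rho^{2-\delta})$ of $Q^{\text{test}}$, and the off-diagonal parts of $\gamma^{(1)}_{\Psi^{U}}$ coupling two of them are supported on the few pieces where they differ, so they only affect the implied constant.
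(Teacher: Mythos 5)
Your overall strategy is the paper's strategy: block-diagonality of the one-particle densities over chains, both states filling the level set $\Gamma_{2}$ greedily up to a Fermi cutoff, the difference supported on $O(n\rho^{2-\delta})$ chains plus $^{c}\mathcal{P}_{2}$, and trace norms bounded by the number of particles carried there. The genuine gap is the point you dismiss as ``secondary'': you assume from the start that the ground state carries a single occupation $Q^{U}$ and factorizes as $\bigwedge_{I}\psi^{U}(I,Q^{U}\vert_{I})$. Nothing excludes a degenerate ground state which is a superposition $\sum_{Q\in\mathfrak{Q}}\lambda(Q)\,\psi^{U}(\Lambda,n,Q)$ over several occupations, with different numbers of particles in $\mathcal{P}_{2}$ and different fillings near the Fermi level; in that case ``the occupation of $\Psi^{U}$'' and your chainwise components $\Psi^{U}_{I}$ are not defined. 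Your proposed repair --- that the off-diagonal parts of $\gamma^{(1)}_{\Psi^{U}}$ are ``supported on the few pieces where they differ'' and only affect the constant --- is not a proof, and as stated it is inaccurate: cross terms between two components of the superposition also produce contributions on the chains where the occupations agree; they reorganize into the common chainwise densities only after a common wedge factor has been extracted. That extraction is exactly what the paper's machinery provides: Lemma \ref{convfor2} and Corollary \ref{corocasequal} verify Assumptions \ref{convex} and \ref{casequal} for $p=2$, Corollary \ref{sequenceQ} shows that every $Q$ with $P_{Q}\Psi^{U}\neq 0$ agrees with the canonical greedy filling outside the tied set $\mathcal{G}_{2}(n_{Q})$, and Proposition \ref{Qfix} yields a set $\mathcal{F}_{2}$ with $\#(\mathcal{P}_{2}\backslash\mathcal{F}_{2})\leq 2n\rho^{2-\delta}$ and the factorization $\Psi^{U}=\Phi^{U,\mathcal{F}_{2}}\wedge\Omega^{U,\mathcal{F}_{2}^{c}}$ valid for the full, possibly superposed, ground state, with the same factor $\Phi^{U,\mathcal{F}_{2}}$ appearing in the test state. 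After that, Lemma \ref{gamma2} and the bound $\Vert\gamma^{(1)}_{\Omega}\Vert_{\text{tr}}=n-n^{\mathcal{F}_{2}}\leq 5n\rho^{2-\delta}$ for each remainder give $10n\rho^{2-\delta}$ by the triangle inequality, with no control of cross terms needed.

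A related misstep is your treatment of ties: the event ``no chain level equals $\lambda_{\rho}$ exactly'' is not the relevant one. What must be controlled is the multiplicity of values inside $\Gamma_{2}$ --- distinct chains sharing the same level --- because a ground state may distribute its last particles arbitrarily among tied levels, and this is what could make the two fillings disagree on many chains even though their depths differ by only $O(n\rho^{2-\delta})$. This is precisely Assumption \ref{casequal}, proved for $p=2$ in Corollary \ref{corocasequal} via the continuity of $J$ and Proposition \ref{NUlambda}, and it enters the count of $\mathcal{P}_{2}\backslash\mathcal{F}_{2}$. Without it (or a quantitative substitute --- an almost-sure ``no ties'' claim is not established in your outline), the assertion that the fillings ``agree on all but $O(n\rho^{2-\delta})$ chains'' is unjustified. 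The remaining ingredients you list (control of $^{c}\mathcal{P}_{2}$ via Lemma \ref{occupoutofchains} and Proposition \ref{nb3}, the defining equation for $\lambda_{\rho}$, convexity from Lemma \ref{convfor2}, per-chain trace bounds) do match the paper's proof.
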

We get an analogue of Proposition \ref{compgamma1} for the  $ 2 $-particle density.
\begin{proposition}\label{compgamma2}
Let $ \Psi^{U}(\Lambda,n) $ be a ground state of $ H^{U}(\Lambda,n) $. For $ \delta\in (0,1) $ and $ \rho\in (0,\rho_{\delta}) $, set the state $ \Psi^{\text{test}}(\Lambda,n) $ as above.  Then, in the thermodynamic limit, with probability $ 1-O(L^{-\infty}) $, one has
\begin{equation}
\frac{1}{n^{2}}\Big\Vert \gamma^{(2)}_{ \Psi^{U}(\Lambda,n)}-\gamma^{(2)}_{\Psi^{\text{test}}(\Lambda,n)}\Big\Vert_{\text{tr}}\leq 45\rho^{2-\delta}.
\end{equation}
\end{proposition}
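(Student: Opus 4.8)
The plan is to deduce Proposition \ref{compgamma2} from Proposition \ref{compgamma1} by exploiting the factorisation of both states over chains. Fix $\delta\in(0,1)$, $\rho\in(0,\rho_{\delta})$, and work on the event of probability $1-O(L^{-\infty})$ on which Lemma \ref{lrhoU} applies and the ground state is non-degenerate; since distinct chains are at distance at least $M$ and never interact, the ground state lies in a single occupation sector and factorises as $\Psi^{U}(\Lambda,n)=\bigwedge_{I\in\mathcal{P}}\psi^{U}(I,Q^{U}|_{I})$ for one occupation $Q^{U}$, while $\Psi^{\text{test}}(\Lambda,n)$ is the analogous chain product carrying the very same local factor $\psi^{U}(I,\cdot)$ on each chain $I\in\mathcal{P}_{2}$ where $Q^{U}$ and $Q^{\text{test}}$ agree. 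Comparing one-body densities chain by chain, and noting that each $\mathcal{P}_{2}$-chain on which $Q^{U}$ and $Q^{\text{test}}$ disagree contributes at least a fixed amount to the trace norm on the left of Proposition \ref{compgamma1}, that proposition forces these disagreement chains to carry $O(n\rho^{2-\delta})$ particles in total; adjoining to them all of ${}^{c}\mathcal{P}_{2}$, which carries $O(n\rho^{2-\delta})$ particles by Lemma \ref{lrhoU} and Proposition \ref{nb3}, one obtains an exceptional family $\mathcal{E}$ of chains, of total occupation $O(n\rho^{2-\delta})$ in either state, outside of which the two states have identical local factors.

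Next I would use the decomposition of the two-particle density of a chain product $\Psi=\bigwedge_{I}\psi_{I}$ whose factors live on the mutually orthogonal subspaces attached to the chains: writing $\mathcal{S}$ for the exchange operator on $\mathfrak{H}(\Lambda)\otimes\mathfrak{H}(\Lambda)$, one has $\gamma^{(1)}_{\Psi}=\sum_{I}\gamma^{(1)}_{\psi_{I}}$ and
\[
\gamma^{(2)}_{\Psi}=\sum_{I}\gamma^{(2)}_{\psi_{I}}+\tfrac{1}{2}(\mathbf{1}-\mathcal{S})\Big(\gamma^{(1)}_{\Psi}\otimes\gamma^{(1)}_{\Psi}-\sum_{I}\gamma^{(1)}_{\psi_{I}}\otimes\gamma^{(1)}_{\psi_{I}}\Big),
\]
the first term gathering the correlations internal to each chain and the second being the antisymmetrisation of $\gamma^{(1)}_{\psi_{I}}\otimes\gamma^{(1)}_{\psi_{J}}$ over ordered pairs $I\neq J$. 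Apply this to $\Psi^{U}(\Lambda,n)$ and to $\Psi^{\text{test}}(\Lambda,n)$ and subtract; three differences appear. The first (the intra-chain sum) and the third (the diagonal $\sum_{I}\gamma^{(1)}_{\psi_{I}}\otimes\gamma^{(1)}_{\psi_{I}}$ correction) are supported only on $\mathcal{E}$, since the local factors agree off $\mathcal{E}$, and there the crude bounds $\Vert\gamma^{(2)}_{\psi_{I}}\Vert_{\text{tr}}=\binom{\kappa_{I}}{2}\le\kappa_{I}^{2}$ and $\Vert\gamma^{(1)}_{\psi_{I}}\otimes\gamma^{(1)}_{\psi_{I}}\Vert_{\text{tr}}=\kappa_{I}^{2}$, together with $\sum_{I\in\mathcal{E}}\kappa_{I}^{2}\le\big(\sum_{I\in\mathcal{E}}\kappa_{I}\big)^{2}=O(n^{2}\rho^{4-2\delta})$, give a total contribution $O(n^{2}\rho^{4-2\delta})=o(n^{2}\rho^{2-\delta})$.

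The surviving term is the mean-field one, and it is controlled directly by Proposition \ref{compgamma1}: $\tfrac{1}{2}(\mathbf{1}-\mathcal{S})$ is the orthogonal projection onto the antisymmetric subspace, hence does not increase the trace norm, and since $\Vert\gamma^{(1)}_{\phi}\Vert_{\text{tr}}=n$ for a normalised $n$-particle $\phi$, the bilinear estimate $\Vert A\otimes A-B\otimes B\Vert_{\text{tr}}\le(\Vert A\Vert_{\text{tr}}+\Vert B\Vert_{\text{tr}})\Vert A-B\Vert_{\text{tr}}$ yields
\[
\Big\Vert\tfrac{1}{2}(\mathbf{1}-\mathcal{S})\big(\gamma^{(1)}_{\Psi^{U}}\otimes\gamma^{(1)}_{\Psi^{U}}-\gamma^{(1)}_{\Psi^{\text{test}}}\otimes\gamma^{(1)}_{\Psi^{\text{test}}}\big)\Big\Vert_{\text{tr}}\le 2n\,\big\Vert\gamma^{(1)}_{\Psi^{U}}-\gamma^{(1)}_{\Psi^{\text{test}}}\big\Vert_{\text{tr}}\le 20\,n^{2}\rho^{2-\delta}.
\]
Collecting the three bounds, dividing by $n^{2}$ and letting $L\to\infty$ gives $\tfrac{1}{n^{2}}\Vert\gamma^{(2)}_{\Psi^{U}(\Lambda,n)}-\gamma^{(2)}_{\Psi^{\text{test}}(\Lambda,n)}\Vert_{\text{tr}}\le 45\rho^{2-\delta}$, the constant $45$ leaving ample room for the non-optimised numerical factors.

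The genuinely delicate point is not the estimates above but the input they rest on, namely that the true ground state is a single-sector chain product whose occupation matches $Q^{\text{test}}$ on $\mathcal{P}_{2}$ up to chains carrying only $O(n\rho^{2-\delta})$ particles. This is exactly the quantitative content underlying Theorem \ref{mainth}: that filling the energy-level set $\Gamma_{2}$ greedily up to the Fermi level $\lambda_{\rho}$ reproduces the genuine minimiser up to a negligible number of "Fermi-surface" chains, which in turn relies on the regularity of the counting function $N^{U}_{2}$ near $\lambda_{\rho}$ and on the bound $O(n\rho^{2-\delta})$ for the number of particles forced into ${}^{c}\mathcal{P}_{2}$. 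Once these are granted, passing from the one-body comparison of Proposition \ref{compgamma1} to the two-body one is the routine consequence of the displayed decomposition and of Hölder's inequality for the trace norm carried out above.
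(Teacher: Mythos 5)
Your trace-norm bookkeeping in the second and third paragraphs is sound and, in fact, essentially reproduces the paper's own computation (same decomposition as Lemma \ref{gamma2}, same orders $n^{\mathcal F_2}(n-n^{\mathcal F_2})$ and $(n-n^{\mathcal F_2})^2$, same room in the constant $45$). The genuine gap is the input everything rests on: you assume that, with probability $1-O(L^{-\infty})$, the ground state of $H^{U}(\Lambda,n)$ is non-degenerate and hence lies in a single occupation sector, so that $\Psi^{U}(\Lambda,n)=\bigwedge_{I}\psi^{U}(I,Q^{U}_{\vert I})$ for one occupation $Q^{U}$. Nothing in the paper justifies this. Non-degeneracy is only known for the restriction $H^{U}(\Lambda,n,Q)$ to a fixed sector; the global minimum over sectors can be attained by several occupations (ties in $\Gamma_{2}$), in which case a ground state is a superposition $\sum_{Q}\lambda(Q)\,\psi^{U}(\Lambda,n,Q)$ and is \emph{not} a chain product. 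The paper is built precisely to handle this: Assumption \ref{casequal}, Corollary \ref{sequenceQ} and Proposition \ref{Qfix} (made applicable for $p=2$ by Lemma \ref{convfor2} and Corollary \ref{corocasequal}) extract from a possibly degenerate ground state a common wedge factor $\Phi^{U,\mathcal F_{2}}$ over all but at most $2n\rho^{2-\delta}$ chains, the residual factor $\Omega^{U,\mathcal F_{2}^{c}}$ being allowed to remain a superposition, and Lemma \ref{gamma2} is then applied to the two-block product $\Phi^{U,\mathcal F_{2}}\wedge\Omega^{U,\mathcal F_{2}^{c}}$. Without that structural input, both of your uses of factorization collapse: the block-diagonal comparison by which you extract the ``disagreement chains'' from Proposition \ref{compgamma1}, and the application of the product formula for $\gamma^{(2)}$ to $\Psi^{U}$ itself. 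Your closing appeal to Theorem \ref{mainth} does not fill the hole: that theorem is an energy asymptotic and says nothing about the sector structure of a ground state; the statement you need is Proposition \ref{Qfix}.

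Granting the common-factor structure, your variant of obtaining the exceptional set by bootstrapping from Proposition \ref{compgamma1} (each chain where the local occupations differ contributes at least $1$ to the one-body trace-norm difference, so there are at most $10n\rho^{2-\delta}$ such chains, each carrying at most two particles in $\mathcal P_{2}$) is a legitimate and mildly different route from the paper, which instead takes the exceptional set $\mathcal P_{2}\setminus\mathcal F_{2}$ directly from Proposition \ref{Qfix} and bounds $n-n^{\mathcal F_{2}}$ by Lemma \ref{occupoutofchains} plus the count of exceptional chains. But this bootstrap again needs $\gamma^{(1)}_{\Psi^{U}}$ to be block diagonal over pieces, i.e.\ the very chain-product (single-sector) structure that is missing. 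So the one concrete repair is to replace your first paragraph by an invocation of Proposition \ref{Qfix} with $p=2$ (via Lemma \ref{convfor2} and Corollary \ref{corocasequal}), identify $\Psi^{\text{test}}$'s factors with $\Phi^{U,\mathcal F_{2}}$ on $\mathcal F_{2}$, and run your estimates on the two-block decomposition; this is exactly the paper's proof.
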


\begin{remark}
Proposition \ref{compgamma1} and Proposition \ref{compgamma2} show that the state $ \Psi^{\text{test}} $ is a better approximation of the ground state than the approximated state given in \cite{Klopp2020}.
\end{remark}

$ \\ $

\section{Expressing the ground state energy for a fixed occupation}\label{decompochain}

\subsection{Proof of Lemma \ref{lrhoU}}

 Define the \textit{Fermi length} $ l_{\rho} $  as the length of a piece $ \Delta $ for which the ground state energy of the Dirichlet Laplacian $ -\frac{d^{2}}{dx^{2}}^{D}_{\vert\Delta} $ is equal to the Fermi energy. Using formula (\ref{IDS}), one computes
\begin{equation}\label{lrho}
l_{\rho}:=\frac{\pi}{\sqrt{E_{\rho}}}=-\log\Big(\frac{\rho}{1+\rho}\Big)
\end{equation}

For $ L $ large enough, with probability $ 1-O(L^{-\infty}) $ no piece of a length below $ kl_{\rho} $ can carry more than $ k-1 $ particles in the ground state of the free operator $ H^{0}(\Lambda,n) $. Due to Assumption \ref{finiterange} of finite-range interactions, in the case of the full operator $ H^{U}(\Lambda,n) $, we exhibit the same phenomenon for some minimal length $ l_{\rho,U}<l_{\rho} $. The following lemma is a reformulation of Lemma \ref{lrhoU}.
\begin{lemma}\label{lmin}
Let $ \Psi^{U}(\Lambda,n) $ to be a ground state of $ H^{U}(\Lambda,n) $. For $ n $ and $ L $ large enough, with probability $ 1-O(L^{-\infty}) $, there exists a minimal length $ l_{\rho,U}=l_{\rho}-(4M+6)\rho $ such that
\begin{center}
\textit{If a piece $ \Delta_{i} $ satisfies $ \vert \Delta_{i} \vert<kl_{\rho,U} $, $ k\in \N $, then, for every occupation $ Q $,}
$$ \Big(P_{Q}\Psi^{U}(\Lambda,n)\neq 0 \Big) \Rightarrow \Big(q_{i}\leq k-1 \Big) $$
\end{center}
where $ P_{Q} $ is the orthogonal projector on $ \mathfrak{H}_{Q} $.

Then, any ground state of $ H^{U}(\Lambda,n) $ belongs to $ \bigoplus_{Q\in \mathfrak{Q}}\mathfrak{H}_{Q}(\Lambda) $ where $ \mathfrak{H}_{Q}(\Lambda) $ is given in (\ref{decomposition}) and
\begin{equation}\label{subsetQ}
 \mathfrak{Q}=\Big\{ (q_{i})\in \N^{m},\, \sum_{i=1}^{m}q_{i}=n \text{ and for } 1\leq i\leq m \quad  q_{i}\leq \Big\lfloor\frac{l_{i}}{l_{\rho,U}}\Big\rfloor  \Big\}.
\end{equation}
\end{lemma}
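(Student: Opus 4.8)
The plan is to turn the statement into a variational one about occupations and then to argue by a rearrangement of particles. By the invariance of the decomposition (\ref{decomposition}) under $H^{U}(\Lambda,n)$ and the non‑degeneracy of the ground states $\psi^{U}(\Lambda,n,Q)$ (Klopp--Veniaminov), the ground eigenspace of $H^{U}(\Lambda,n)$ is spanned by those $\psi^{U}(\Lambda,n,Q)$ for which $Q$ minimizes $Q'\mapsto E^{U}(\Lambda,n,Q')$; hence $P_{Q}\Psi^{U}(\Lambda,n)\neq 0$ forces $Q$ to be such a minimizing occupation, and once every minimizing occupation is known to lie in $\mathfrak{Q}$ the inclusion of any ground state into $\bigoplus_{Q\in\mathfrak{Q}}\mathfrak{H}_{Q}(\Lambda)$ follows at once. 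So it suffices to prove, with probability $1-O(L^{-\infty})$: if a minimizing occupation $Q$ put $\geq k$ particles in a piece $\Delta_{i}$ with $l_{i}<kl_{\rho,U}$, one could strictly lower the energy by moving one particle of $\Delta_{i}$ to an empty piece far from every occupied piece — contradicting minimality.

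The first ingredient is a lower bound on the cost of the top particle of $\Delta_{i}$: writing $Q^{-}$ for $Q$ with one particle removed from $\Delta_{i}$,
\[
 E^{U}(\Lambda,n,Q)-E^{U}(\Lambda,n-1,Q^{-})\ \geq\ \Big(\tfrac{q_{i}\pi}{l_{i}}\Big)^{2}-r_{i}.
\]
One obtains it from the tensor factorisation $\mathfrak{H}_{Q}=\big(\bigwedge^{q_{i}}L^{2}(\Delta_{i})\big)\otimes\mathfrak{H}'$ with $\mathfrak{H}'$ the space of the other pieces: writing $H^{U}(\Lambda,n,Q)$ as (kinetic part on $\Delta_{i}$) $+$ (Hamiltonian on the other pieces) $+$ (interaction inside $\Delta_{i}$) $+$ (interaction of $\Delta_{i}$ with the rest), all summands $\geq 0$, the Pauli principle gives $\langle\Psi,(\text{kinetic on }\Delta_{i})\Psi\rangle\geq E^{0}(\Delta_{i},q_{i})$ and a partial‑trace argument gives $\langle\Psi,(\text{rest})\Psi\rangle\geq E^{U}(\Lambda\setminus\Delta_{i},n-q_{i},Q|_{\neq i})$, while the matching upper bound comes from the test state formed of the first $q_{i}-1$ Dirichlet modes on $\Delta_{i}$ tensored with the ground state of the rest, whose only excess over $E^{0}(\Delta_{i},q_{i}-1)$ is the cross interaction $r_{i}$. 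For chains carrying at most two particles (in particular in the base case $k=1$ for isolated pieces and pairs) one has $r_{i}=0$, since removing a particle from $\Delta_{i}$ leaves an interaction‑free configuration, and the bound is just $E^{U}\geq E^{0}$; in general $r_{i}$ is produced only by a layer of width $M$ near the ends of $\Delta_{i}$, hence of lower order once $l_{i}\gg M$ (as in Proposition \ref{Esolo}/Proposition \ref{Epair}), while it is dwarfed by $(q_{i}\pi/l_{i})^{2}$ when $l_{i}$ is small. In all cases, since $l_{i}<kl_{\rho,U}\leq q_{i}l_{\rho,U}$, this cost exceeds $(\pi/l_{\rho,U})^{2}$.

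The second ingredient is the destination. Inserting a particle into an empty piece $\Delta_{j}$ of length $\geq l_{\rho,U}$ that has no occupied piece within distance $M$ costs exactly $(\pi/l_{j})^{2}\leq(\pi/l_{\rho,U})^{2}$, so the move changes the energy by at most $(\pi/l_{j})^{2}-\big[(q_{i}\pi/l_{i})^{2}-r_{i}\big]<0$. Existence of such a $\Delta_{j}$ is where the margin $l_{\rho,U}=l_{\rho}-(4M+6)\rho$ is used: by the Poisson statistics of the point process and a large‑deviation estimate, the number of pieces of length $\geq l_{\rho,U}$ whose nearest piece of length $\geq l_{\rho,U}$ on either side is more than $M$ away concentrates, with probability $1-O(L^{-\infty})$, around $\tfrac{\rho}{1+\rho}e^{(4M+6)\rho}e^{-2M\rho/(1+\rho)}L=\rho L\big(1+(2M+c)\rho+\cdots\big)$ for some $c>0$, which for $n/L\to\rho$ and $L$ large strictly exceeds $n$; since $Q$ places only $n$ particles, one such piece is empty, and — once one knows that all pieces shorter than $l_{\rho,U}$ are empty (so the statement is proved by induction on $k$, the base case being exactly of the clean two‑particle type above) — it is then isolated from every occupied piece. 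Treating all violating pieces of $Q$ simultaneously, the same count furnishing enough destinations, contradicts the minimality of $Q$ and finishes the proof.

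The step I expect to be the main obstacle is the top‑particle cost estimate when $q_{i}\geq 3$, i.e. controlling $r_{i}$ uniformly — in particular for pieces whose length is only slightly below the threshold $kl_{\rho,U}$. The cheap argument, removing a particle averaged over the Dirichlet modes of $\Delta_{i}$, yields only $\geq\tfrac1{q_{i}}E^{0}(\Delta_{i},q_{i})=\tfrac{(q_{i}+1)(2q_{i}+1)}{6q_{i}^{2}}\,(q_{i}\pi/l_{i})^{2}$, a prefactor $<1$, which is not enough; removing instead the top Dirichlet mode forces one to lower‑bound the occupation of that mode in the interacting ground state, which is not possible a priori, because the relevant interaction energy is only of size $O(1/\log(1/\rho))$ whereas the kinetic gap $(2q_{i}+1)(\pi/l_{i})^{2}$ is of comparable or smaller order. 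Getting the sharp constant, and hence the exact margin $(4M+6)\rho$, needs the chain‑by‑chain analysis developed in the rest of the paper.
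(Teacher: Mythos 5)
Your overall strategy (move surplus particles from an over-occupied small piece into empty long pieces, using the Poisson statistics to guarantee destinations) is the same as the paper's, but two steps of your write-up do not close, and they are precisely the points the paper's proof is organized around. First, the existence of a \emph{safe} destination is circular as you set it up: your count produces long pieces that are isolated from other pieces of length $\geq l_{\rho,U}$, and you obtain isolation from \emph{occupied} pieces only by invoking the statement for $k=1$ (``all pieces shorter than $l_{\rho,U}$ are empty''), which is exactly the base case you still have to prove; in that base case the empty destination may well have an occupied short piece within distance $M$, and the interaction so created is not small compared with your energy margin. The paper avoids this entirely: it selects roughly $n(1+\rho)$ pieces of length in $(l_{\rho,U},2l_{\rho,U})$ whose interaction ranges are pairwise disjoint (this is why the count subtracts the pairs at mutual distance $\leq 2M+1$ and why the margin is $(4M+6)\rho$ rather than the $\sim 2M\rho$ your computation uses), and then pigeonholes the $n$ particles: at least $j$ of these disjoint ranges carry no particle at all, wherever the particles sit. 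No induction on $k$ and no prior knowledge about short pieces is needed, and the moved particles genuinely create zero interaction.

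Second, your top-particle cost estimate does not suffice even where you trust it. You need strictly (removal gain) $>$ (insertion cost), but the kinetic margin $\pi^{2}\big(q_{i}^{2}/l_{i}^{2}-1/l_{j}^{2}\big)$ can be arbitrarily small (take $l_{i}$ just below $kl_{\rho,U}$ and the destination $l_{j}$ just above $l_{\rho,U}$), while your remainder $r_{i}$ — and also the internal interaction of the $q_{i}-1$ remaining Dirichlet modes, which your test state produces but which you omit from the ``excess'' — is a fixed positive quantity; so ``the cost exceeds $(\pi/l_{\rho,U})^{2}$'' does not yield the strict comparison, already for $q_{i}=2$, and you concede you cannot control $r_{i}$ for $q_{i}\geq 3$. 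The paper never needs such a remainder: since $U\geq 0$, it bounds the energy of the offending configuration from below by free energies alone, compares $E^{0}(\Delta^{e},j)-E^{0}(\Delta^{e},k-1)\geq e\frac{j}{k}\pi^{2}l_{\rho,U}^{-2}$ with the strictly smaller free cost $<e\pi^{2}l_{\rho,U}^{-2}$ of placing the $e$ surplus particles in the particle-free ranges found above, and the strictness comes from $l_{j}>l_{\rho,U}$, not from beating an interaction error. As written, your proposal therefore proves the lemma at best for $k\leq 2$ and under an unproved isolation assumption; the missing ingredient is the disjoint-interaction-range pigeonhole, not a finer estimate on $r_{i}$.
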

This is a slight improvement of Lemma $ 3.25 $ of \cite{Klopp2020}. We use the same method of proof.
\begin{proof}
Set $ l_{\rho,U}=l_{\rho}-t\rho $, for $ t>0 $.
Assume that $ \Delta^{e} $ is the smallest piece that does not satisfy the property of the lemma. Pick $ k \in \N $ so that $ (k-1)l_{\rho,U}\leq \vert\Delta^{e}\vert < kl_{\rho,U} $ and $ Q^{e} $ an occupation so that $ \Delta^{e} $ is occupied by $ j=k-1+e $ particles in $ P_{Q^{e}}\Psi^{U}(\Lambda,n) $ with $ e\geq 1 $. Without loss of generality, we assume that  $ \Psi^{U}(\Lambda,n)=P_{Q^{e}}\Psi^{U}(\Lambda,n) $.

We show that one can define a state $ \Phi^{U}(\Lambda,n) $  such that
\[ \langle \Phi^{U}(\Lambda,n) ,H^{U}(\Lambda,n) \Phi^{U}(\Lambda,n) \rangle < \langle \Psi^{U}(\Lambda,n) ,H^{U}(\Lambda,n)\Psi^{U}(\Lambda,n)\rangle \]
by moving the $ e $ extra particles in $ e $ empty pieces without creating any interaction.

By hypothesis, there are at most $ n-j+1 $ pieces with some particle in the state $ \Psi^{U}(\Lambda,n) $.
We call \textit{interaction range} of a piece $ \Delta $ the set of pieces $ \Delta' $ such as the distance between $ \Delta $ and $ \Delta' $ is less than or equal to $ M $.
 Thanks to Proposition \ref{nb1} and Proposition \ref{nb2}, one knows, with probability $ 1-O(L^{-\infty}) $,
\begin{align*}
\#\big\{ \Delta, \,l_{\rho,U}< \vert \Delta \vert < 2 l_{\rho,U} \big\}&=Le^{-l_{\rho,U}}(1-e^{-l_{\rho,U}})+O(L^{\beta})\\
&=n(1+(t-1)\rho+o(\rho))(1-\rho+o(\rho)) \\
\#\big\{ (\Delta,\Delta'), \, \vert \Delta \vert > l_{\rho,U} ,\, \vert \Delta' \vert >l_{\rho,U}, \, \textbf{d}(\Delta,\Delta')\leq 2M+1 \big\}&=2(2M+1)Le^{-2l_{\rho,U}}+O(L^{\beta}) \\
&=2(2M+1)n(\rho+o(\rho))\big(1+2t\rho+o(\rho))\big).
\end{align*}
Thus, there are more than $ n\Big(1+(t-1)\rho - 2(2M+2)\rho +o(\rho) \Big) $ pieces of length between $ l_{\rho,U} $ and $ 2l_{\rho,U} $ such that there is no other piece of length greater than $ l_{\rho,U} $ in any interaction range and, for any two interaction ranges, their intersection is empty. This last property means that no particle can interact with some particles of both pieces.

Choose $ t=4M+6 $ so that $ n\Big(1+(t-1)\rho - 2(2M+2)\rho +o(\rho) \Big) \geq n+1 $  for $ n $ large enough.  By the pigeonhole principle, there are at least $ j $ of such pieces for which the interaction area do not carry any particle in $ \Psi^{U}(\Lambda,n) $. Therefore one can move the $ e $ extra particles to these slots. We get a new state $ \Phi^{U}(\Lambda,n) $.

 Before the exchange, the free energy of the piece $ \Delta_{e} $ is 
\begin{align*}
 E^{0}(\Delta^{e},j)&=E^{0}(\Delta^{e},k-1)+ \sum_{i=k}^{j}\frac{i^{2}\pi^{2}}{\vert \Delta^{e}\vert^{2}} \\
 &=E^{0}(\Delta^{e},k-1)+\frac{6ek^{2}+6e(e-1)k+(2e-1)e(e-1)}{6}\frac{\pi^{2}}{\vert \Delta^{e}\vert^{2}} \\
 &\geq E^{0}(\Delta^{e},k-1)+\frac{6ek^{2}+6e(e-1)k+(2e-1)e(e-1)}{6}\frac{\pi^{2}}{k^{2}l_{\rho,U}^{2}}\\
 &\geq E^{0}(\Delta^{e},k-1)+e\frac{j}{k}\frac{\pi^{2}}{l_{\rho,U}^{2}} . 
\end{align*}
So, the $ e $ extra particles contribute to more than $ e\frac{j}{k}\frac{\pi^{2}}{l_{\rho,U}^{2}} $ in $ \Psi^{U}(\Lambda,n) $. But in  $ \Phi^{U}(\Lambda,n) $, the free energy associated to these $ e $ particles is strictly less than $ e\frac{\pi^{2}}{l_{\rho,U}^{2}} $ and there is no interaction energy. So,
\[ \langle \Phi^{U}(\Lambda,n) ,H^{U}(\Lambda,n) \Phi^{U}(\Lambda,n) \rangle < \langle \Psi^{U}(\Lambda,n) ,H^{U}(\Lambda,n)\Psi^{U}(\Lambda,n)\rangle \]
Thus $ \Psi^{U}(\Lambda,n) $ can not be a ground state and this completes the proof of Lemma \ref{lmin}.
\end{proof}

\subsection{Decomposition of $ \Lambda $ into non-interacting groups of pieces}\label{decompositiondelambda}

 From now on, we fix the minimal length $ l_{\rho,U}=l_{\rho}-(4M+6)\rho $. According to Lemma \ref{lmin}, the pieces of length $ l<l_{\rho,U} $ are empty for any ground state. We divide the others pieces into undecomposable groups of pieces that may interact through $ U $. For simplicity, we identify a piece $ \Delta_{k} $ and its index $ k $ (position). The length of the piece $ k $ is denoted by $ l_{k} $ and the distance between the pieces $ j $ and $ k $ by $ \textbf{d}_{j,k} $.

\begin{definition}\label{chain}
The r-tuple $ I=(i_{1},\dots, i_{r}) $, with $ i_{1}<\dots <i_{r}  $, is a \textbf{chain} of size $ r $ if
\begin{enumerate}
\item[(i)] for every $ k\in \llbracket 1, r\rrbracket $, $ l_{i_{k}}\geq l_{\rho,U} $,
\item[(ii)] for every $ k \in\llbracket 1, r-1\rrbracket $,  $\, \textbf{d}_{i_{k},i_{k+1}}\leq M $,
\item[(iii)] for every $ j < i_{1} $ such that $ l_{j}\geq l_{\rho,U} $,
$ \textbf{d}_{j,i_{1}}>M $
\item[(iv)] for every $ j>i_{r} $ such that $ l_{j}\geq l_{\rho,U} $, $ \textbf{d}_{i_{r},j}>M $.
\end{enumerate}
\end{definition}

Fix $ p\in \N^{\star} $. We denote by
\begin{equation}
\mathcal{P}_{p}=\Big\{ I \text{ chain},\, \sum_{i\in I}\Big\lfloor \frac{l_{i}}{l_{\rho,U}}\Big\rfloor < (p+1)  \Big\}
\end{equation}
the set of chains that cannot carry more than $ p $ particles in any ground state of $ H^{U}(\Lambda,n) $, and by $ \mathcal{N}_{p} $ the set of others pieces.
Using the notations of Lemma \ref{lmin}, we consider, for a fixed occupation $ Q\in \mathfrak{Q}  $, the operator
\begin{equation}
H^{U}(\Lambda,n,Q)=P_{Q}H^{U}(\Lambda,n)P_{Q} \qquad \text{ on}  \quad \mathfrak{H}_{Q}(\Lambda)=\bigwedge_{i=1}^{m}\bigg(\bigwedge_{j=1}^{q_{i}}L^{2}(\Delta_{i})\bigg).
\end{equation}
As chains do not interact one with another, $ H^{U}(\Lambda,n,Q) $ can be written as a sum of operators each of which acting on a specific chain. We list the notations and definitions for these operators.
\begin{definition}\label{levelsolo}
Fix $ I $ a chain in $ \Lambda $. For $ (q_{i})_{i\in I}\in \N^{\star} $, let $ \psi^{U}(I,(q_{i})_{i\in I}) $ and $ E^{U}(I,(q_{i})_{i\in I}) $ be the ground state and the ground state energy of the operator $ H^{U}(I,(q_{i})_{i\in I}) $ given by
\begin{equation}\label{HUI}
H^{U}\big(I,(q_{i})_{i\in I}\big)=\sum_{\kappa=1}^{\kappa_{I}}\bigg(\bigotimes_{j=1}^{\kappa-1}\textbf{1}_{\mathfrak{H}(\Lambda)}\bigg) \otimes h_{I} \otimes \bigg(\bigotimes_{j=\kappa+1}^{\kappa_{I}} \textbf{1}_{\mathfrak{H}(\Lambda)}\bigg)+W_{\kappa_{I}} \qquad \text{ on } \quad \bigwedge_{i\in I}\Big(\bigwedge_{j=1}^{q_{i}}L^{2}(\Delta_{i})\Big)
\end{equation}
where
\begin{enumerate}
\item[(i)] $ \kappa_{I}=\sum_{i\in I}q_{i} $ is the number of particles in $ I $;
\item[(ii)] $ h_{I} $ is the one-particle operator defined by \begin{equation} 
h_{I}=\bigoplus_{i\in I}\bigg(-\frac{d^{2}}{dx^{2}}^{D}_{\vert \Delta_{i}}\bigg) \qquad \text{ on } \mathfrak{H}(\Lambda);
\end{equation}
\item[(iii)] $ W_{k} $ is given by (\ref{W}).
\end{enumerate}
Set $ F^{U}(I,0)\equiv 0 $ and for $ \kappa \in \N^{\star} $
\begin{equation}
F^{U}(I,\kappa)=\min_{\kappa_{I}=k}E^{U}(I,(q_{i})_{i\in I}).
\end{equation}
For $ \kappa\in \N^{\star} $, the \textbf{$ \kappa $-th energy level of the chain $ I $} is defined by
\begin{equation}
f^{U}(I,\kappa)=F^{U}(I,\kappa)-F^{U}(I,\kappa-1).
\end{equation}
\end{definition}
With the notations of Definition \ref{levelsolo}, $ \psi^{U}(\Lambda,n,Q) $ the ground state of $ H^{U}(\Lambda,n,Q) $ has the form
\begin{equation}\label{statefull}
\psi^{U}(\Lambda,n,Q)=\psi^{U}_{\mathcal{P}_{p}}(Q)\wedge \psi^{U}_{\mathcal{N}_{p}}(Q)
\end{equation}
where
\begin{equation}\label{statedecompchain}
\psi^{U}_{\mathcal{P}_{p}}(Q)=\bigwedge_{I\in \mathcal{P}_{p}}\psi^{U}(I,(q_{i})_{i\in I}) \qquad \text{ and } \qquad \psi^{U}_{\mathcal{N}_{p}}(Q)=\bigwedge_{I\text{ chain } \subset\,  \mathcal{N}_{p}}\psi^{U}(I,(q_{i})_{i\in I})
\end{equation}

The corresponding ground state energy is
\begin{equation}\label{energystatefull}
E^{U}(\Lambda,n,Q)=E^{U}_{\mathcal{P}_{p}}(Q)+E^{U}_{\mathcal{N}_{p}}(Q)
\end{equation}
with
\begin{equation}
E^{U}_{\mathcal{P}_{p}}(Q)=\sum_{I \in \mathcal{P}_{p}}E^{U}(I,(q_{i})_{i\in I}) \quad \text{ and } \quad  E^{U}_{\mathcal{N}_{p}}(Q)=\sum_{I \text{ chain } \subset \, \mathcal{N}_{p}}E^{U}(I,(q_{i})_{i\in I}).
\end{equation}
We study these two quantities in the next subsections.

\subsection{Study of $ E^{U}_{\mathcal{N}_{p}} $}

The following lemma give an upper bound for the number of particles that one does not control when the occupation is known only for the chains of $ \mathcal{P}_{p} $.
\begin{lemma}\label{occupoutofchains}
For $ p\in \N^{\star} $, and $ \delta\in (0,1) $, there exists $ \rho_{\delta}>0 $ such that for every $ \rho\in (0,\rho_{\delta}) $
\begin{equation}\label{encadrement}
 \rho^{\, p+\delta}\leq\sup_{(q_{i})\in \mathfrak{Q}}\bigg(\frac{1}{n}\sum_{i\in \mathcal{N}_{p}}q_{i}\bigg)\leq \rho^{\, p-\delta}.
\end{equation}
\end{lemma}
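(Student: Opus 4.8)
The plan is to bound $\sum_{i\in\mathcal N_p}q_i$ by a count of particles that \emph{must} live inside ``large'' chains, and to estimate the number of such particles via the statistics of the Poisson point process. The key observation is that, by the occupation constraint $Q\in\mathfrak Q$ from Lemma \ref{lmin}, every piece $\Delta_i$ with $i\in\mathcal N_p$ carries at most $\lfloor l_i/l_{\rho,U}\rfloor$ particles, and a chain $I\subset\mathcal N_p$ by definition satisfies $\sum_{i\in I}\lfloor l_i/l_{\rho,U}\rfloor\ge p+1$. Hence $\sum_{i\in\mathcal N_p}q_i$ is controlled, up to a bounded factor, by $\sum_{I\subset\mathcal N_p}\sum_{i\in I}\lfloor l_i/l_{\rho,U}\rfloor$, and the sup over $Q\in\mathfrak Q$ is bounded by this deterministic quantity. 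So the real task is to estimate, with probability $1-O(L^{-\infty})$, the number of pieces of length $\ge l_{\rho,U}$ that belong to a chain whose total capacity $\sum_{i\in I}\lfloor l_i/l_{\rho,U}\rfloor$ exceeds $p$.

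First I would reduce to a ``local'' event. A piece of length $\ge(p+1)l_{\rho,U}$ already forms (part of) a chain in $\mathcal N_p$ by itself; more generally a chain in $\mathcal N_p$ either contains a long piece or contains at least two pieces of length $\ge l_{\rho,U}$ separated by gaps $\le M$ (so that their capacities add up past $p$; one needs at least $\lceil (p+1)/\lfloor 2l_{\rho,U}/l_{\rho,U}\rfloor\rceil$ of them, but in the worst case just a cluster of pieces each of length in $[l_{\rho,U},2l_{\rho,U})$ with consecutive gaps $\le M$). For such a configuration to occur near a given point, one needs, within an interval of length $O(p(M+2l_{\rho,U}))=O(p\,l_{\rho,U})$, either one ``gap-free'' stretch of total length $\ge(p+1)l_{\rho,U}$ or roughly $p$ independent ``success'' events each of probability $O(e^{-l_{\rho,U}})=O(\rho)$ (by Proposition \ref{nb1}/\ref{nb2}, the probability that a given piece has length in $[l_{\rho,U},2l_{\rho,U})$ is $\asymp e^{-l_{\rho,U}}(1-e^{-l_{\rho,U}})\asymp\rho$, and a piece has length $\ge l_{\rho,U}$ with probability $\asymp\rho$). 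By the Poisson/large-deviation estimates already invoked in the proof of Lemma \ref{lmin} and in Proposition \ref{nb3}, the expected number of such clusters in $\Lambda$ is $O(L\rho^{\,p+1})$ (each cluster costing $p+1$ independent factors $\asymp\rho$, times a combinatorial factor polynomial in $p$ and $l_{\rho,U}$), and a concentration argument gives that with probability $1-O(L^{-\infty})$ the actual number is $O(L\rho^{\,p+1})=O(n\rho^{\,p})$. Since $l_{\rho,U}=-\log(\rho/(1+\rho))+O(\rho)$, the polynomial-in-$l_{\rho,U}$ prefactors are $O(\rho^{-\delta/2})$ for $\rho$ small, absorbing into the error and yielding $\sum_{i\in\mathcal N_p}q_i\le n\rho^{\,p-\delta}$, which is the upper bound. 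A slight care is needed because chains are not bounded in size a priori; but a chain of size $r$ needs $r-1$ gaps of length $\le M$, an event of probability exponentially small in $r$, so the contribution of chains of size $\ge$ some $\log$-threshold is $O(L^{-\infty})$ and can be discarded.

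For the lower bound I would exhibit, with probability $1-O(L^{-\infty})$, at least $\asymp L\rho^{\,p+1}=n\rho^{\,p}$ disjoint clusters each forming a chain in $\mathcal N_p$ that is \emph{forced} to carry at least one particle in some admissible occupation — e.g. choose $Q$ to fill these chains first. Concretely, a run of $p+1$ consecutive pieces each of length in $[l_{\rho,U},2l_{\rho,U})$ with the $p$ intervening gaps all $\le M$, and isolated (neighbouring pieces of length $\ge l_{\rho,U}$ at distance $>M$ on both sides), has probability $\asymp\rho^{\,p+1}$ times a positive constant; these events in disjoint windows are independent, so a second-moment or Poisson-approximation argument gives $\asymp L\rho^{\,p+1}$ of them with overwhelming probability. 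Putting one particle in each such chain (and distributing the rest legally, which is possible since $n$ is far larger) gives an occupation $Q\in\mathfrak Q$ with $\frac1n\sum_{i\in\mathcal N_p}q_i\ge c\rho^{\,p}\ge\rho^{\,p+\delta}$ for $\rho<\rho_\delta$.

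I expect the main obstacle to be the upper bound's combinatorics: carefully enumerating the ways a chain in $\mathcal N_p$ can arise (long single piece versus clusters of medium pieces of various sizes $r\le$ some slowly growing bound, with all the gap constraints) and checking that each such pattern carries at least $p+1$ independent small-probability factors, so that the union bound over all patterns and all positions in $\Lambda$ still gives $O(L\rho^{\,p+1})$ up to $\rho^{-\delta}$ losses — this is exactly where the polynomial-in-$l_{\rho,U}\sim|\log\rho|$ prefactors must be shown not to spoil the exponent, and where one must invoke the existing large-deviation bounds (Propositions \ref{nb1}, \ref{nb2}, \ref{nb3}) uniformly. The Poisson independence across disjoint windows makes the probabilistic step routine once the deterministic counting is in place.
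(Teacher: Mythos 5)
Your argument follows the paper's proof in all essentials. For the upper bound the paper argues exactly as you do: since $Q\in\mathfrak{Q}$ forces $q_{i}\leq\lfloor l_{i}/l_{\rho,U}\rfloor$, it splits the occupied pieces of $\mathcal{N}_{p}$ into long single pieces ($l_{i}\geq (p+1)l_{\rho,U}$) and pieces lying in chains of size $r\geq 2$ of total length $\geq (p+1)l_{\rho,U}$ (a factor $M^{r-1}$ for the gaps, geometric in $r$), and Propositions \ref{nb1} and \ref{nb2} give a count $O\big(Le^{-(p+1)l_{\rho,U}}\big)=O(L\rho^{p+1})$, the polynomial-in-$\vert\log\rho\vert$ losses being absorbed in $\rho^{-\delta}$ exactly as you indicate. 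The only genuine difference is the lower bound: instead of constructing isolated clusters of $p+1$ medium pieces with gaps $\leq M$ and then completing the occupation, the paper simply takes the free ground-state occupation $Q^{0}$ — admissible because $l_{\rho,U}\leq l_{\rho}$ — and notes that the pieces of length $\geq (p+1)l_{\rho}$ already carry about $(p+1)Le^{-(p+1)l_{\rho}}\gtrsim L\rho^{p+1}$ particles in $Q^{0}$; this needs only Proposition \ref{nb1} and no completion step. Your witness is also valid (such a cluster has capacity $\geq p+1$, hence lies outside $\mathcal{P}_{p}$, and Lemma \ref{nb3}-type counts give $\asymp L\rho^{p+1}$ of them), but you should make the "distribute the rest legally" step explicit, e.g.\ by observing that with probability $1-O(L^{-\infty})$ the total capacity satisfies $\sum_{i}\lfloor l_{i}/l_{\rho,U}\rfloor\geq Le^{-l_{\rho,U}}>n$ for $\rho$ small.
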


\begin{proof}
 If $ i\in \mathcal{N}_{p} $, we have the following options.
\begin{enumerate}
\item[(i)] Either $ l_{i}<l_{\rho,U} $, $ q_{i}=0 $;
\item[(ii)] Or $ l_{i}\geq (p+1)l_{\rho,U} $, then, using Proposition \ref{nb1}, one computes
\[ \sum_{i,\, l_{i}\geq (p+1)l_{\rho,U} }q_{i}\leq \sum_{k=p+1}^{+\infty}kL(e^{-kl_{\rho,U}}-e^{-(k+1)l_{\rho,U}})= (p+1)Le^{-(p+1)l_{\rho,U}}(1+O(e^{-l_{\rho,U}}))  \]
\item[(iii)] Or $ i\in I $ chain of size $ r\geq 2 $ and $ \sum_{j\in I}l_{j}\geq (p+1)l_{\rho,U} $ and $ l_{i}<(p+1)l_{\rho,U} $; in this case $ q_{i}\leq p $. For $ r\leq p $,
\begin{align*}
\# \{I \text{ chain of size } r \text{ of total length } \geq (p+1)l_{\rho,U}\} &\leq \# \{ r \text{ pieces of total length } \geq (p+1)l_{\rho,U} \\
& \quad \qquad \text{ with gaps of length } \leq M \}\\
&\leq M^{r-1}Le^{-(p+1)l_{\rho,U}}
\end{align*} 
and
\begin{align*}
\# \{I \text{ chain of size } r\geq p+1\} &\leq \# \{ (p+1) \text{ pieces of length } \geq l_{\rho,U} \text{ with gaps of length } \leq M \}\\
&\leq M^{p}Le^{-(p+1)l_{\rho,U}}.
\end{align*}
\end{enumerate}
As $ e^{-(p+1)l_{\rho,U}}=o(\rho^{\, p+1-\delta}) $, this completes the proof of the right-hand side of the inequality (\ref{encadrement}).

Concerning the left-hand side, let $ Q^{0}=(q_{i}^{0}) $ be the occupation of the ground state for the free model. We have that for $ i\in \llbracket 1,m \rrbracket $ if $ l_{i}\in [kl_{\rho},(k+1)l_{\rho}) $ then $ q_{i}^{0}=k $. Since $ l_{\rho,U}\leq l_{\rho} $, $ Q^{0}\in \mathfrak{Q} $. So,
\[ \sum_{i\in \mathcal{N}}q_{i}^{0}\geq \sum_{i ,\, l_{i}\geq (p+1)l_{\rho} }q_{i}^{0}= \sum_{k=p+1}^{+\infty}kL(e^{-kl_{\rho}}-e^{-(k+1)l_{\rho}})= (p+1)Le^{-(p+1)l_{\rho}}(1+O(e^{-l_{\rho}})) \]
As $ \rho^{\, p+1+\delta}=o(e^{-(p+1)l_{\rho}}) $, it gives the left part of the inequality (\ref{encadrement}).
\end{proof}

\begin{proposition}\label{leftover}
For a fixed $ p\geq 1 $, $ \delta\in (0,1) $ and $ Q \in \mathfrak{Q} $, there exists $ \rho_{\delta}>0 $ such that for $ \rho\in (0,\rho_{\delta}) $,
\begin{equation}
E^{U}_{\mathcal{N}_{p}}(Q)\leq n\rho^{p-\delta}
\end{equation}
\end{proposition}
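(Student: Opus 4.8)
The plan is to bound $E^{U}_{\mathcal{N}_{p}}(Q)=\sum_{I\text{ chain}\subset\mathcal{N}_{p}}E^{U}(I,(q_i)_{i\in I})$ by controlling the energy of each piece separately and then summing over the (statistically rare) pieces that lie in $\mathcal{N}_p$. The starting point is that the interaction $W_{\kappa_I}$ is nonnegative, so $E^{U}(I,(q_i)_{i\in I})\le E^{U}(I,(q_i)_{i\in I})$ can be compared upward: place each of the $q_i$ particles of piece $\Delta_i$ in the Slater determinant of the first $q_i$ Dirichlet eigenfunctions, which gives a trial state for the chain whose energy is the free energy $\sum_{i\in I}E^{0}(\Delta_i,q_i)$ plus the interaction it produces. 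The interaction term is bounded by $\binom{\kappa_I}{2}\|U\|_\infty\le p^2\|U\|_\infty$ per chain, and $\|U\|_\infty$ is a constant. Hence $E^{U}_{\mathcal{N}_p}(Q)\le \sum_{i\in\mathcal{N}_p}E^{0}(\Delta_i,q_i)+p^2\|U\|_\infty\cdot\#\{\text{chains in }\mathcal N_p\}$, and by Lemma~\ref{occupoutofchains} (or rather the counting estimates behind it) the number of chains in $\mathcal N_p$ is $O(Le^{-(p+1)l_{\rho,U}})=O(n\rho^{p})=o(n\rho^{p-\delta})$, so this term is negligible.

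Next I would estimate $\sum_{i\in\mathcal{N}_p}E^{0}(\Delta_i,q_i)$ using the explicit form of the Dirichlet spectrum: $E^{0}(\Delta,q)=\frac{\pi^2}{|\Delta|^2}\sum_{j=1}^{q}j^2=\frac{\pi^2}{|\Delta|^2}\cdot\frac{q(q+1)(2q+1)}{6}$. For $i\in\mathcal{N}_p$ the relevant situations (following the case analysis of Lemma~\ref{occupoutofchains}) are: a long single piece with $l_i\ge(p+1)l_{\rho,U}$ carrying $q_i\le \lfloor l_i/l_{\rho,U}\rfloor$ particles, or a piece inside a multi-piece chain of total length $\ge (p+1)l_{\rho,U}$. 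In the single long-piece case, since $q_i\le\lfloor l_i/l_{\rho,U}\rfloor$ one has $q_i\le l_i/l_{\rho,U}$, hence $E^{0}(\Delta_i,q_i)\le \frac{\pi^2}{l_i^2}q_i^3\lesssim \frac{q_i}{l_{\rho,U}^2}\cdot\frac{q_i^2 l_{\rho,U}^2}{l_i^2}\le \frac{q_i}{l_{\rho,U}^2}$; more carefully, using $q_i\le l_i/l_{\rho,U}$ gives $E^{0}(\Delta_i,q_i)\le C\,q_i^3\pi^2/l_i^2 \le C\pi^2 q_i /l_{\rho,U}^2$ since $q_i^2/l_i^2\le 1/l_{\rho,U}^2$. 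Thus each particle in $\mathcal N_p$ contributes at most $O(l_{\rho,U}^{-2})=O(\log^{-2}\rho)$ to the free energy, and by Lemma~\ref{occupoutofchains} the number of such particles is at most $n\rho^{p-\delta}$ for $\rho$ small; so $\sum_{i\in\mathcal{N}_p}E^{0}(\Delta_i,q_i)\le C\,n\rho^{p-\delta}\log^{-2}\rho\le n\rho^{p-\delta}$ once $\rho<\rho_\delta$. For a piece inside a long multi-piece chain the same pointwise bound $E^{0}(\Delta_i,q_i)\le C\pi^2 q_i/l_{\rho,U}^2$ applies verbatim (it only used $|\Delta_i|\ge l_{\rho,U}$ and $q_i\le\lfloor|\Delta_i|/l_{\rho,U}\rfloor$, both guaranteed for occupied pieces in a chain), so no separate treatment is needed.

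Putting these together, $E^{U}_{\mathcal{N}_p}(Q)\le C\pi^2 l_{\rho,U}^{-2}\sum_{i\in\mathcal N_p}q_i + p^2\|U\|_\infty\,O(n\rho^{p})\le C\pi^2 l_{\rho,U}^{-2}\cdot n\rho^{p-\delta}+o(n\rho^{p-\delta})$, and since $l_{\rho,U}^{-2}=(-\log\rho+o(1))^{-2}\to 0$, for $\rho$ small enough the whole expression is $\le n\rho^{p-\delta}$, which is exactly the claim. I expect the main delicacy to be bookkeeping rather than a genuine obstacle: one must make sure the pointwise energy-per-particle bound $E^{0}(\Delta_i,q_i)\le C q_i/l_{\rho,U}^2$ is valid in every case listed in Lemma~\ref{occupoutofchains} — in particular that the constraint $q_i\le\lfloor |\Delta_i|/l_{\rho,U}\rfloor$ from $\mathfrak{Q}$ really does control the $q_i^3/l_i^2$ growth — and that the error from the interaction term and from summing the rare chains is genuinely of smaller order than $n\rho^{p-\delta}$, which follows from $e^{-(p+1)l_{\rho,U}}=o(\rho^{p+1-\delta})$ as already noted in the proof of Lemma~\ref{occupoutofchains}. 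The $\delta$-slack and the smallness of $\rho_\delta$ absorb the logarithmic factor $l_{\rho,U}^{-2}$, so no sharp constant tracking is required.
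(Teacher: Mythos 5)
Your overall strategy (take the free Slater determinant on each piece as a trial state, bound the free part by $E^{0}(\Delta_i,q_i)\le C q_i/l_{\rho,U}^{2}$ using $q_i\le\lfloor l_i/l_{\rho,U}\rfloor$, then invoke Lemma \ref{occupoutofchains} and the smallness of $l_{\rho,U}^{-2}$) is exactly the paper's, and that part of your argument is fine. The genuine gap is in the interaction term: you bound it by $\binom{\kappa_I}{2}\Vert U\Vert_\infty\le p^{2}\Vert U\Vert_\infty$ per chain and then multiply by the number of chains in $\mathcal{N}_p$. But $\mathcal{N}_p$ is by definition the collection of chains for which the bound $\kappa_I\le p$ is \emph{not} available: a chain of total length in $[k l_{\rho,U},(k+1)l_{\rho,U})$ with $k\ge p+1$ may carry up to $k$ particles, and by Proposition \ref{taillemax} the largest piece has length of order $\log(L)\log\log(L)$, so $\kappa_I$ is unbounded as $L\to\infty$. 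A uniform per-chain constant times the chain count is therefore not a valid estimate; what you actually need is control of the weighted sum $\sum_{I\subset\mathcal{N}_p}\kappa_I^{2}$, in which long (hence highly occupied) chains are penalized by their exponentially small probability.

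This is precisely what the paper does: it groups the chains of $\mathcal{N}_p$ by total length and size and shows $\sum_{I\subset\mathcal{N}_p}\big(\sum_{i\in I}q_i\big)^{2}\le C\max\{1,\dots,M^{p}\}(p+1)^{2}Le^{-(p+1)l_{\rho,U}}$ (see (\ref{bornecarr})), the sum $\sum_{k\ge p+1}k^{2}e^{-kl_{\rho,U}}$ being dominated by its first term. With that estimate in hand your crude bound $U\le\Vert U\Vert_\infty$ per pair would in fact suffice, since $Le^{-(p+1)l_{\rho,U}}=O(n\rho^{p})=o(n\rho^{p-\delta})$; the paper instead combines the same counting with the sharper per-pair estimates of Lemma \ref{calculinteraction} (factors $l_{\rho,U}^{-3}$ and $l_{\rho,U}^{-6}$), but the counting step is the indispensable ingredient and it is missing from your proposal. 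A smaller bookkeeping slip: the number of chains in $\mathcal{N}_p$ is not $O(Le^{-(p+1)l_{\rho,U}})$ — it contains all the $O(L)$ pieces shorter than $l_{\rho,U}$ — what is of that order is the number of chains of $\mathcal{N}_p$ that can be occupied at all, which is what your argument needs.
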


\begin{proof}

As in Definition \ref{levelsolo}, for any chain $ I $, we denote $ \psi^{U}\big(I,(q_{i})\big) $ and $ E^{U}\big(I,(q_{i})\big) $ the ground state and ground state energy of the operator $ H^{U}\big(I,(q_{i})\big) $ given by (\ref{HUI}). We use the notations $ \psi^{0}\big(I,(q_{i})\big) $ and $ E^{0}\big(I,(q_{i})\big) $ for the free case. We have
\[\big\langle\, \psi^{U}\big(I,(q_{i})\big)\, , \,  H^{U}\big(I,(q_{i})\big) \psi^{U}\big(I,(q_{i})\big) \, \big\rangle \leq \big\langle \,  \psi^{0}\big(I,(q_{i})\big)\, , \, H^{U}\big(I,(q_{i})\big)\psi^{0}\big(I,(q_{i})\big) \, \big\rangle \]
so
\begin{equation}
E^{U}\big(I,(q_{i})\big) \leq E^{0}\big(I,(q_{i})\big) + \big\langle \psi^{0}\big(I,(q_{i})\big),W_{\kappa_{I}}\psi^{0}\big(I,(q_{i})\big) \big\rangle.
\end{equation}
Then, we compute
\begin{align}\label{psibadbeta1}
E^{U}_{\mathcal{N}_{p}}(Q) &= \sum_{I\subset \, \mathcal{N}_{p} \text{ chain}}E^{U}\big(I,(q_{j})_{j\in I}\big) \\
&\leq\sum_{I\subset \, \mathcal{N}_{p} \text{ chain}}\bigg(E^{0}\big(I,(q_{j})_{j\in I}\big)+\big\langle \psi^{0}\big(I,(q_{i})\big),W_{\kappa_{I}}\psi^{0}\big(I,(q_{i})\big) \big\rangle \bigg) \nonumber \\
&\leq \max_{\substack{Q\in \mathfrak{Q}\\j\in \mathcal{N}_{p}}}\bigg(\frac{E^{0}(l_{j},q_{j})}{q_{j}}\bigg)\sum_{j\in \mathcal{N}_{p}}q_{j}+ \sum_{I\subset \, \mathcal{N}_{p} \text{ chain} }\big\langle \psi^{0}\big(I,(q_{i})\big),W_{\kappa_{I}}\psi^{0}\big(I,(q_{i})\big) \big\rangle  \nonumber
\end{align}
For any $ Q\in \mathfrak{Q} $ and $ j\in \mathcal{N}_{p} $, by Lemma \ref{lmin},
\begin{equation}\label{bornelibre}
E^{0}(l_{j},q_{j})=\sum_{k=1}^{q_{j}}\frac{k^{2}\pi^{2}}{l_{j}^{2}}
\leq C\frac{q_{j}^{3}}{l_{j}^{2}}
\leq C\frac{q_{j}}{l_{\rho,U}^{2}}.
\end{equation}
By Lemma \ref{occupoutofchains}, $ \sum_{j\in \mathcal{N}_{p}}q_{j}\leq n\rho^{p-\delta} $. 

We deal with the remaining sum using the results of Lemma \ref{calculinteraction}. For a chain $ I $, $ i\in I $, $ j\in \N $, let $ \phi^{\Delta_{i}}_{j} $ be the state on $ L^{2}(\Delta_{i}) $ given by
\begin{equation}
\phi^{\Delta_{i}}_{j}(x)=\frac{\sqrt{2}}{\sqrt{l_{i}}}\sin\Big(\frac{\pi}{l_{i}} j(x-x_{i})\Big)\mathbf{1}_{\Delta_{i}}(x).
\end{equation}
Then,
\begin{equation}
\psi^{0}(I,(q_{i}))=\bigwedge_{i\in I}\bigwedge_{j=1}^{q_{i}}\phi^{\Delta_{i}}_{j}.
\end{equation}
By skew-symmetry and orthogonality of $ (\phi^{\Delta_{i}}_{j})_{i,j} $,
\begin{align*}
\big\langle \psi^{0}\big(I,(q_{i})\big),W_{\kappa_{I}}\psi^{0}\big(I,(q_{i})\big) \big\rangle&= \frac{\kappa_{I}(\kappa_{I}-1)}{2}\int U(y-x)\psi^{0}\big(I,(q_{i})\big)^{2}(y,x,Z) \, dxdydZ\\
&=\sum_{i\in I}\sum_{1\leq j<k\leq q_{i}}\int U(y-x)\Big\vert \phi^{\Delta_{i}}_{j}\wedge \phi^{\Delta_{i}}_{k}\Big\vert^{2}(x,y)dxdy \\
&\quad + \sum_{h,i \, \in I, \, h\neq i}\sum_{j=1}^{q_{h}}\sum_{k=1}^{q_{i}}\int U(y-x)\Big\vert \phi^{\Delta_{h}}_{j}\wedge \phi^{\Delta_{i}}_{k}\Big\vert^{2}(x,y)dxdy
\end{align*}
So, by Lemma \ref{calculinteraction},
\begin{align}\label{borneinter}
\big\langle \psi^{0}\big(I,(q_{i})\big),W_{p}\psi^{0}\big(I,(q_{i})\big) \big\rangle&\leq C \sum_{i\in I}\sum_{1\leq j<k\leq q_{i}}\frac{j^{2}+k^{2}}{l_{i}^{3}}+C\sum_{h,i \, \in I, \, h\neq i}\sum_{j=1}^{q_{h}}\sum_{k=1}^{q_{i}}\frac{j^{2}k^{2}}{l_{h}^{3}l_{i}^{3}} \\
&\leq C\sum_{i\in I}\frac{q_{i}}{l_{\rho,U}^{3}}+C\sum_{h,i \, \in I, \, h\neq i}\frac{q_{h}q_{i}}{l_{\rho,U}^{6}} \nonumber \\
&\leq \frac{C}{l_{\rho,U}^{3}}\sum_{i\in I}q_{i}+\frac{C}{l_{\rho,U}^{6}}\Big( \sum_{i\in I}q_{i}\Big)^{2}  \nonumber
\end{align}
where $ C $ depends on $ U $ and $ M $. Again by Lemma \ref{occupoutofchains}, $ \sum_{i\in \mathcal{N}_{p}}q_{i}\leq n\rho^{p-\varepsilon} $. For the part with squares, we adapt the proof of (\ref{encadrement}). A chain $ I\subset \mathcal{N}_{p} $ of size $ r\geq p+1 $ of total length $ l\in [kl_{\rho,U},(k+1)l_{\rho,U}) $ with $ k\geq r $ may contain at most $ k $ particles. Otherwise, the chains $ I\subset \mathcal{N} $  of size $r\leq p $ and of total length $ l\in [kl_{\rho,U},(k+1)l_{\rho,U}) $ with $ k\geq p+1 $ may contain at most $ k $ particles.
So,
\[ \sum_{I\subset \mathcal{N}_{p}}\Big(\sum_{i\in I}q_{i}\Big)^{2}\leq \sum_{r=p+1}^{+\infty}M^{r-1}\sum_{k=r}^{\infty}k^{2}Le^{-kl_{\rho,U}} +\sum_{r=1}^{p}M^{r-1}\sum_{k=p+1}^{+\infty}k^{2}Le^{-kl_{\rho,U}}. \]
We claim that, if $ Me^{-l_{\rho,U}}<1 $,
\begin{equation}\label{bornecarr}
\exists C>0 \qquad \sum_{I\subset \mathcal{N}_{p}}\Big(\sum_{i\in I}q_{i}\Big)^{2}\leq C\max\{1,\dots, M^{p}\}(p+1)^{2}Le^{-(p+1)l_{\rho,U}}.
\end{equation}

Then, if $ \rho $ is small enough, combining (\ref{bornelibre}), (\ref{borneinter}) and (\ref{bornecarr}), the inequality (\ref{psibadbeta1}) becomes
\begin{equation}\label{psibadbeta2}
E^{U}_{\mathcal{N}_{p}}(Q)\leq n\rho^{p-\delta}.
\end{equation}
It concludes the proof of Proposition \ref{leftover}.
\end{proof}

\begin{remark}\label{remark1}
If one replaces $ \psi^{U}_{\mathcal{N}_{p}}(Q) $ by $ \bigwedge_{I \subset \mathcal{N}_{p}}\psi^{0}\big(I,(q_{i})\big) $ then the same bound holds for the energy.
\end{remark}

\subsection{Study of $ E^{U}_{\mathcal{P}_{p}} $}

The following proposition states that, when the number of particles in $ \mathcal{P}_{p} $ is known, $  E^{U}_{\mathcal{P}_{p}} $ is the sum of the smallest energy levels. But it requires a strong hypothesis on the monotony of the energy levels.

\begin{assumption}\label{convex}
For a fixed $ p\geq 1 $, using the notations of Definition \ref{levelsolo}, the application
\begin{equation}
f^{U}(I,.): \begin{cases} \, \llbracket 0,p \rrbracket &\longrightarrow \R \\
\quad r \qquad &\longmapsto f^{U}(I,r) \end{cases}
\end{equation}
is increasing for every chain $ I $ in $ \mathcal{P}_{p} $.
\end{assumption}

From now on, we denote by $ n_{Q} $ the \textit{number of particles in $ \mathcal{P}_{p} $ for the occupation} $ Q $. Using the notations of Definition \ref{levelsolo}, we also set
\begin{equation}\label{Gammap}
\Gamma_{p}=\Big\{ f^{U}(I,k), \, I \in \mathcal{P}_{p}, 1\leq k\leq p \Big\}.
\end{equation}\label{orderp}
Let $ \leq_{p} $ be a lexical order on $ \Gamma_{p} $ such that
\begin{equation}
\forall I,J \in \mathcal{P}_{p}, \, 1\leq k,l\leq p \qquad f^{U}(I,k)<_{p}f^{U}(J,l) \Longleftrightarrow \begin{cases} f^{U}(I,k)<f^{U}(J,l) \\  \text{else} \quad \text{last index of }  I < \text{first index of J} \\\text{else } \quad  k<l \end{cases}
\end{equation}

\begin{proposition}\label{levelsomme}
For a fixed $ p\geq 1 $, let $ \{a_{k}\in \Gamma_{p}, \,   a_{k-1}<_{p} a_{k} \} $ be the ordered set given by (\ref{Gammap}) and (\ref{orderp}). 
Under Assumption \ref{convex}, for $ r\leq \min(n,\# \Gamma_{p}) $, any occupation $ Q $ that minimizes $ E^{U}_{\mathcal{P}_{p}} $ when $ n_{Q}=r $ and $ q_{i}\leq \Big\lfloor \frac{l_{i}}{l_{\rho,U}}\Big\rfloor $ for each piece $ \Delta_{i} $ in $ \mathcal{P}_{p} $ , satisfies
\begin{equation}\label{Emin}
E^{U}_{\mathcal{P}_{p}}(Q)=\sum_{k=1}^{r}a_{k}.
\end{equation}
\end{proposition}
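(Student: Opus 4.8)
The statement is essentially a structural fact about minimizing a sum of integer-variable functions subject to box constraints, combined with the auxiliary hypothesis that each $f^U(I,\cdot)$ is increasing (Assumption \ref{convex}). My plan is to reduce the proposition to the elementary combinatorial \textbf{Lemma \ref{convexsomme}} invoked earlier in the paper (the "sum of the $r$ smallest energy increments" lemma), which already handles the free case via convexity. The key observation is that, for a chain $I\in\mathcal{P}_p$, the map $\kappa\mapsto F^U(I,\kappa)$ need not itself be convex, but the restriction $\kappa\mapsto F^U(I,\kappa)$ on $\llbracket 0,p\rrbracket$ \emph{is} convex precisely when its increments $f^U(I,\kappa)=F^U(I,\kappa)-F^U(I,\kappa-1)$ are increasing, which is exactly what Assumption \ref{convex} supplies. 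Since any chain in $\mathcal{P}_p$ carries at most $p$ particles in any configuration satisfying the box constraint $q_i\le\lfloor l_i/l_{\rho,U}\rfloor$ (by definition of $\mathcal{P}_p$), the function we are minimizing over is genuinely a sum of convex functions each defined on $\llbracket 0,p\rrbracket$.

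First I would set up the optimization precisely: for a chain $I$, write $\kappa_I\in\llbracket 0,\min(p,\sum_{i\in I}\lfloor l_i/l_{\rho,U}\rfloor)\rrbracket$ for the number of particles it receives, and note via Definition \ref{levelsolo} that $E^U_{\mathcal{P}_p}(Q)$ depends on $Q$ only through $(\kappa_I)_{I\in\mathcal{P}_p}$ — indeed for fixed $\kappa_I$, the internal distribution $(q_i)_{i\in I}$ that minimizes $E^U(I,(q_i)_{i\in I})$ gives precisely $F^U(I,\kappa_I)$, and the chains do not interact, so $E^U_{\mathcal{P}_p}(Q)=\sum_{I\in\mathcal{P}_p}F^U(I,\kappa_I)$. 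Thus minimizing $E^U_{\mathcal{P}_p}$ over $Q\in\mathfrak{Q}$ with $n_Q=r$ is the same as minimizing $\sum_I F^U(I,\kappa_I)$ over $(\kappa_I)$ with $\sum_I\kappa_I=r$ and $\kappa_I\le p$. Second, I would apply Lemma \ref{convexsomme}: since each $F^U(I,\cdot)$ is convex on its (finite, box-constrained) domain by Assumption \ref{convex}, the minimum equals the sum of the $r$ smallest elements, counted with multiplicity, of the multiset of increments $\{F^U(I,\kappa)-F^U(I,\kappa-1)\}=\{f^U(I,\kappa): I\in\mathcal{P}_p,\ 1\le\kappa\le p\}=\Gamma_p$; this greedy selection is well-defined whenever $r\le\#\Gamma_p$, which is the stated hypothesis, and it equals $\sum_{k=1}^r a_k$ for the $\le_p$-ordered enumeration.

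The only genuinely non-routine point is checking that the greedy procedure respects the box constraints, i.e.\ that selecting the $r$ smallest increments never forces a chain $I$ to receive more particles than $\min(p,\sum_{i\in I}\lfloor l_i/l_{\rho,U}\rfloor)$ or more than the available energy levels allow. This is automatic from the convexity hypothesis: within a single chain, the increments $f^U(I,1)<f^U(I,2)<\cdots$ are listed in increasing order, so the greedy rule picks $f^U(I,\kappa)$ only after having already picked $f^U(I,1),\dots,f^U(I,\kappa-1)$ — hence the chain's occupation grows consistently — and we only ever list increments up to index $p$ and up to $\sum_{i\in I}\lfloor l_i/l_{\rho,U}\rfloor$ in $\Gamma_p$, so no infeasible assignment is ever produced. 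The role of the lexical order $\le_p$ is merely to make the enumeration $(a_k)$ canonical in the presence of ties, so that Equation (\ref{Emin}) is a statement about a well-defined quantity; any tie-breaking rule yields the same value $\sum_{k=1}^r a_k$, which is what the proof ultimately extracts. I would close by remarking that this argument is the exact analogue of the free-case computation below Equation (\ref{sommelibre}), with chains of $\mathcal{P}_p$ playing the role of single pieces and $\Gamma_p$ the role of the spectrum of $h(\Lambda)$ truncated to low energies.
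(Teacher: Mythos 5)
Your proposal is correct and follows essentially the same route as the paper: the paper's proof likewise reduces $E^{U}_{\mathcal{P}_{p}}(Q)$ to $\sum_{I}F^{U}(I,\kappa_{I})=\sum_{I}\sum_{j\leq \kappa_{I}}f^{U}(I,j)$, obtains the lower bound because these are $r$ elements of $\Gamma_{p}$, and obtains the upper bound by the greedy inductive filling that Assumption \ref{convex} makes feasible --- which is exactly the content of your ``non-routine point'' paragraph. The only difference is presentational: where you cite Lemma \ref{convexsomme}, whose hypotheses (strict convexity on all of $\N$, no box constraints, untruncated increment set) do not literally hold here, the paper reproves the constrained, truncated version of that greedy argument directly, so your adaptation step is needed and is indeed what the paper writes out.
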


\begin{proof}
Fix $ r\leq \min(n,\# \Gamma_{p})  $. Take such an occupation $ Q $. Then, by reductio ad absurdum,
\[ E^{U}_{\mathcal{P}_{p}}(Q)=\sum_{I\in \mathcal{P}_{p}}E^{U}(I,(q_{i})_{i\in I})=\sum_{I\in \mathcal{P}_{p}}F^{U}(I,\kappa_{I})=\sum_{I\in \mathcal{P}_{p}}\sum_{j=1}^{\kappa_{I}}f^{U}(I,j) \]
with $ \sum_{I\in \mathcal{P}_{p}}\kappa_{I}=r $ and $ \kappa_{I}\leq \sum_{i\in I}\Big\lfloor \frac{l_{i}}{l_{\rho,U}}\Big\rfloor \leq p  $.
In particular,
\begin{equation}
\sum_{k=1}^{r}a_{k}\leq E^{U}_{ \mathcal{P}_{p}}(Q).
\end{equation}

For the reverse inequality, we build by induction an appropriate occupation $ Q' $.  Set $ Q'(0)\equiv (0,\dots, 0) $. For $ k $ from $ 1 $ to $ r $, assume that the multi-index $ Q'(k-1)=\big(q_{i}'(k-1)\big)_{1\leq i\leq m} $ satisfies
$$ E^{U}_{\mathcal{P}_{p}}\big(Q'(k-1)\big)=\sum_{s=1}^{k-1}a_{s} \qquad \text{and} \qquad \sum_{i=1}^{m}q_{i}'(k-1)=k-1. $$
We know that $ a_{k}=f^{U}(I,j) $ meaning $ a_{k} $ is the $ j $-th energy level of the chain $ I $. Since $ f^{U}(I,.) $ is increasing, we have $ \{f^{U}(I,1),\dots ,f^{U}(I,j-1)\}= \{a_{i_{1}},\dots ,a_{i_{j-1}} \} $ for $ 1\leq i_{1}<\dots<i_{j-1}\leq k-1 $ and for every $ i>j $, $ f^{U}(I,i)>a_{k} $. In particular,
$$ \sum_{s=1}^{k}a_{s}=\sum_{s\notin \{i_{1},\dots, i_{j-1},k\}}a_{s}+\sum_{i=1}^{j}f^{U}(I,i)=\sum_{s\notin \{i_{1},\dots, i_{j-1},k\}}a_{s}+F^{U}(I,j). $$
We set $ q_{i}'(k) $ for $ i\in I $ so that $ E^{U}(I,(q_{i}'(k))_{i\in I})=F^{U}(I,j) $ and for every $ i\notin I $, $ q_{i}'(k)=q_{i}'(k-1). $ Then,
$$ E^{U}_{\mathcal{P}_{p}}\big(Q'(k)\big)=\sum_{s=1}^{k}a_{s} \qquad \text{and} \qquad \sum_{i=1}^{m}q_{i}'(k)=k. $$
We fill the coordinates in $ \mathcal{N}_{p} $ so that $ Q' $ is an occupation with $ n_{Q}=r $.

It concludes the proof of Proposition \ref{levelsomme}.
\end{proof}

\paragraph{\textbf{Remark}} We don't know yet how to prove that Assumption \ref{convex} holds when $ p\geq 3 $. The following lemma gives a partial result for chains of size $ 1 $.

\begin{lemma}\label{convexsolo}
Using the notations of Definition \ref{levelsolo}, if $ l<l_{\rho,U}^{\frac{3}{2}-\varepsilon} $ for $ \varepsilon\in (0,\frac{1}{2}) $ then for $ \rho $ small enough
\begin{equation}
\forall r\in \llbracket 1,p-1\rrbracket \qquad f^{U}\big([0,l],r\big)<f^{U}\big([0,l],r+1\big)
\end{equation}  where $ p=\lfloor \frac{l}{l_{\rho,U}} \rfloor $.
\end{lemma}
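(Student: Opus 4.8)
The plan is to compare the interacting levels with the free ones, for which strict monotonicity is explicit, and to treat the interaction as a perturbation whose discrete second difference is too small to reverse the free gap. Write $F^U(r):=F^U([0,l],r)$ and $F^0(r):=E^0([0,l],r)=\sum_{k=1}^{r}k^2\pi^2/l^2$, and note that, by Definition \ref{levelsolo},
\[ f^U([0,l],r+1)-f^U([0,l],r)=F^U(r+1)-2F^U(r)+F^U(r-1), \]
so the claim is exactly the discrete midpoint convexity of $r\mapsto F^U(r)$ on $\llbracket 0,p\rrbracket$. Set $G(r):=F^U(r)-F^0(r)$. Since $W_r\ge 0$ we have $G(r)\ge 0$, and testing $H^U([0,l],r)$ against the free Slater determinant $\psi^0([0,l],r)=\bigwedge_{j=1}^{r}\phi^{[0,l]}_j$ gives the one–sided bound $G(r)\le\langle\psi^0([0,l],r),W_r\psi^0([0,l],r)\rangle$. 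Splitting
\[ F^U(r+1)-2F^U(r)+F^U(r-1)=\big(F^0(r+1)-2F^0(r)+F^0(r-1)\big)+\big(G(r+1)-2G(r)+G(r-1)\big), \]
the first summand equals $f^0(r+1)-f^0(r)=(2r+1)\pi^2/l^2>0$, so it suffices to show the second summand has strictly smaller absolute value.

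First I would estimate $G$. Exactly as in the proof of Proposition \ref{leftover}, skew-symmetry and orthogonality of the $\phi^{[0,l]}_j$ give
\[ \langle\psi^0([0,l],r),W_r\psi^0([0,l],r)\rangle=\sum_{1\le j<k\le r}\int U(y-x)\big|\phi^{[0,l]}_j\wedge\phi^{[0,l]}_k\big|^2(x,y)\,dx\,dy, \]
and Lemma \ref{calculinteraction} bounds each term by $C(j^2+k^2)/l^3$ with $C=C(U)$ independent of $l,j,k,\rho$ (the extra power of $l$ coming from the Wronskian of two Dirichlet eigenfunctions vanishing on the diagonal $x=y$). Summing yields $0\le G(s)\le C s^4/l^3$, hence
\[ \big|G(r+1)-2G(r)+G(r-1)\big|\le 2\max_{s\le r+1}G(s)\le \frac{C(r+1)^4}{l^3}, \]
where I again absorb numerical factors into $C=C(U)$. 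Combining with the previous display,
\[ f^U([0,l],r+1)-f^U([0,l],r)\ \ge\ \frac{1}{l^2}\Big((2r+1)\pi^2-\frac{C(r+1)^4}{l}\Big). \]

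Finally I would check that the bracket is positive for every $1\le r\le p-1$ once $\rho$ is small. Since $r+1\le p$ and $2r+1\ge r+1$, one has $(r+1)^4/(2r+1)\le p^3$, so it is enough to have $l>C\pi^{-2}p^3$. By definition $p=\lfloor l/l_{\rho,U}\rfloor$ gives $l\ge p\,l_{\rho,U}$, so it suffices that $l_{\rho,U}>C\pi^{-2}p^2$; and the hypothesis $l<l_{\rho,U}^{3/2-\varepsilon}$ forces $p<l_{\rho,U}^{1/2-\varepsilon}$, so the requirement reduces to $l_{\rho,U}^{2\varepsilon}>C\pi^{-2}$. Since $l_{\rho,U}=l_\rho-(4M+6)\rho\to+\infty$ as $\rho\to 0$ and $\varepsilon>0$, this holds for all $\rho$ below some $\rho_\varepsilon$, uniformly in $l$ in the allowed range, which proves the lemma. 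The one delicate point — and the reason the argument stops at $l<l_{\rho,U}^{3/2-\varepsilon}$ — is that we only control $G$ from one side, so its second difference must be estimated by $\max G\sim r^4/l^3$; beating the free gap $\sim r/l^2$ then requires $r^3\lesssim l$, and since $r$ may be as large as $p\approx l/l_{\rho,U}$ this is precisely the stated length regime. Using $l\ge p\,l_{\rho,U}$ rather than merely $l\ge l_{\rho,U}$ is what makes the estimate close.
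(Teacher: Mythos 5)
Your proposal is correct and follows essentially the same route as the paper: write $F^{U}(r)=F^{0}(r)+G(r)$ with $0\le G(r)\le\langle \psi^{0},W_{r}\psi^{0}\rangle\le Cr^{4}l^{-3}$ via Lemma \ref{calculinteraction}, and show the free second difference $(2r+1)\pi^{2}l^{-2}$ dominates the perturbation's second difference under $r\le p$, $l\ge p\,l_{\rho,U}$ and $l<l_{\rho,U}^{3/2-\varepsilon}$. Your bookkeeping of the constants (reducing everything to $l_{\rho,U}^{2\varepsilon}>C\pi^{-2}$) is a slightly more explicit version of the paper's remark that $l^{2}l_{\rho,U}^{-3}\le l_{\rho,U}^{-2\varepsilon}$, so no genuinely different idea is involved.
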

\begin{proof}
 Assume that $ l =\pi\beta^{-1}l_{\rho,U} $.
For $ r\geq 1 $, denote $ F(r)=F^{U}\big([0,l], r\big) $ and $ \Psi (r) $ a corresponding eigenfuntion. Also set $ F^{0}(r) $ and $ \Psi^{0}(r) $ in the free case $ U\equiv 0 $. We know that
$$ \Psi^{0}(r)=\bigwedge_{i=1}^{r}\varphi_{i} $$
where $ \varphi_{i}(x)=\frac{\sqrt{2}}{\sqrt{l}}\sin\big(\frac{\pi}{l} ix\big)\mathbf{1}_{[0,l]}(x) $.
We compute
\begin{align*}
\langle W_{r}\Psi^{0}(r),\Psi^{0}(r)\rangle &= \frac{r(r-1)}{2}\int U(x_{1}-x_{2})\Psi^{0}(r)(x_{1},x_{2},Z)^{2}dx_{1}dx_{2}dZ \text{ by skew-symmetry}\\
&=\frac{r(r-1)}{2}\frac{1}{r!}\sum_{\sigma, \sigma ' \in \mathfrak{S}_{r}}\varepsilon(\sigma)\varepsilon(\sigma') \int U(x_{1}-x_{2}) \prod_{i=1}^{r}\varphi_{\sigma(i)}(x_{i})\varphi_{\sigma'(i)}(x_{i})dX \\
&= \sum_{p<q\leq r}\int U(x_{1}-x_{2})\big\vert\varphi_{p}\wedge\varphi_{q}\big\vert^{2}(x_{1},x_{2})dx_{1}dx_{2}
\end{align*}
by skew-symmetry and orthogonality of $(\varphi_{i})_{i\geq 1}$. Hence, by Lemma \ref{calculinteraction},
\begin{equation}
\langle W_{r}\Psi^{0}(r),\Psi^{0}(r)\rangle \leq \sum_{p<q\leq r}Cl^{-3}(p^2+q^2) \leq
Cl^{-3}r^{4}.
\end{equation}
Since,
$$ 0\leq F(r)-F^{0}(r)\leq  \langle W_{r}\Psi^{0}(r),\Psi^{0}(r)\rangle. $$
we have
\begin{equation}
F(r)=F^{0}(r)+O(l^{-3}r^{4})=\sum_{i=1}^{r}(\pi l^{-1}  i)^{2}+O(l^{-3}r^{4})
\end{equation}
Then,
\begin{equation}\label{convex1}
F(r+1)-2F(r)+F(r-1)=2\pi l^{-2}r\big(1+O(l_{\rho,U}^{-2\varepsilon})\big)
\end{equation}
as $ r<l.l_{\rho,U}^{-1} $ and $ l^{2}l_{\rho,U}^{-3}\leq l_{\rho,U}^{-2\varepsilon} $. Thus one gets that for $ \rho $ small enough the r.h.s is positive. This concludes the proof of Lemma \ref{convexsolo}.
\end{proof}
Combining Lemma \ref{convexsolo} and Lemma \ref{convexsomme}, we get that Assumption \ref{convex} holds when one cancels the interaction between pieces and $ p $ is less than $ \vert\log(\rho) \vert $. Without restriction on the form of the interaction, the issue occurs when the growth in the free energy is less or of the order of the interaction between two pieces. More precisely, we don't know yet how to deal with the cases where the lengths of a pair of pieces $ \{ \Delta_{i},\Delta_{j}\} $ satisfy
\begin{equation}
 \exists \, k_{i},k_{j}\in \llbracket 1, p-1\rrbracket  \qquad \Big\vert\Big(\frac{k_{i}}{l_{i}}\Big)^{2}-\Big(\frac{k_{j}}{l_{j}}\Big)^{2}\Big\vert = O(l_{\rho,U}^{-6}).
 \end{equation}
 
$ \\ $

Otherwise, let $ (\Gamma_{p},\leq_{p}) $ be the ordered set given by (\ref{Gammap}) and (\ref{orderp}), and for $ 1\leq r \leq \#\Gamma_{p} $, and let $ \mathcal{G}_{p}(r) $ be the subset of $ \mathcal{P}_{p} $ such that
\begin{equation}\label{equal}
\mathcal{G}_{p}(r)=\Big\{ I \in \mathcal{P}_{p}, \, \exists \,  1\leq k\leq p ,\, f^{U}(I,k)=\text{the } r\text{-th smallest element of }  (\Gamma_{p},\leq_{p}) \Big\}.
\end{equation}

From the proof of Proposition \ref{levelsomme} we deduce the following corollary.
For $ B\subset \llbracket 1,m\rrbracket $ and $ Q\in \N^{m} $, we denote by $ Q_{\vert B}=(q_{i})_{i\in B} $ the \textit{restriction  of the multi-index to} $ B $. 

\begin{corollary}\label{sequenceQ}
Under Assumption \ref{convex}, there exists a sequence of occupations $ \big(Q(r)\big)_{r\leq n} $ such that
\begin{enumerate}
\item the number of particles in the chains of $ \mathcal{P}_{p} $ for the occupation $ Q(r) $ is $ n_{Q(r)}= r $; 
\item the restrictions $ Q(r)_{\vert \mathcal{P}_{p}} $ and $ Q(r+1)_{\vert \mathcal{P}_{p}} $ are equal except for one chain;
\item if $ \Psi^{U} $ is a ground state of $ H^{U} $ and $ Q $ is an occupation that satisfies $ P_{Q}\Psi^{U}\neq 0 $ then 
\begin{equation}
Q_{\vert \mathcal{P}_{p}\backslash \mathcal{G}_{p}(n_{Q})}=Q(n_{Q})_{\vert \mathcal{P}_{p}\backslash \mathcal{G}_{p}(n_{Q})},
\end{equation}
where $ \mathcal{G}_{p}(n_{Q}) $ is given by (\ref{equal}).
\end{enumerate} 
\end{corollary}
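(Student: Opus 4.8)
The plan is to take for $Q(r)$ the occupation produced by the inductive (greedy) construction inside the proof of Proposition~\ref{levelsomme}; items (1)--(2) then come out by direct bookkeeping on that construction, and item~(3) follows from the rigidity of the minimisers of $E^U_{\mathcal{P}_p}$ once one unwinds what Assumption~\ref{convex} buys.

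First I would set $Q(r):=Q'(r)$, the multi-index built by induction in the proof of Proposition~\ref{levelsomme} (this makes sense for $r\le\min(n,\#\Gamma_p)$, which in our regime is $r\le n$), completed on the coordinates of $\mathcal{N}_p$ so that $Q(r)\in\mathfrak{Q}$ and $|Q(r)|_1=n$; such a completion exists for $\rho$ small because the box-capacity $\sum_{i\in\mathcal{N}_p}\lfloor l_i/l_{\rho,U}\rfloor$ of $\mathcal{N}_p$ dwarfs $n$. Item~(1) is read off immediately: every particle inserted at a step of the induction goes into a chain of $\mathcal{P}_p$ (each $a_k\in\Gamma_p$ is a level of some chain of $\mathcal{P}_p$), hence $n_{Q(r)}=r$. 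For item~(2), recall that the step from $Q(r-1)$ to $Q(r)$ writes $a_r=f^U(I,j)$ and modifies only the coordinates inside the single chain $I$, raising $\kappa_I$ from $j-1$ to $j$; Assumption~\ref{convex} is exactly what ensures that the levels of $I$ consumed by step $r$ are the initial segment $f^U(I,1),\dots,f^U(I,j)$, so that the partial occupation is admissible and $Q(r)|_{\mathcal{P}_p}$ differs from $Q(r-1)|_{\mathcal{P}_p}$ only on $I$.

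For item~(3), let $\Psi^U$ be a ground state and $Q$ an occupation with $P_Q\Psi^U\neq 0$; by Lemma~\ref{lmin}, $Q\in\mathfrak{Q}$, and by invariance of the decomposition (\ref{decomposition}) the normalised $P_Q\Psi^U$ is a ground state of $H^U(\Lambda,n,Q)$, so $E^U(\Lambda,n,Q)$ equals the ground-state energy of $H^U(\Lambda,n)$. Using the block splitting (\ref{energystatefull}): if $Q|_{\mathcal{P}_p}$ did not minimise $E^U_{\mathcal{P}_p}$ among occupations of the pieces of $\mathcal{P}_p$ carrying $n_Q$ particles and obeying $q_i\le\lfloor l_i/l_{\rho,U}\rfloor$, replacing it by such a minimiser while keeping $Q|_{\mathcal{N}_p}$ fixed would strictly lower $E^U(\Lambda,n,\cdot)$ on $\mathfrak{Q}$, a contradiction; hence by Proposition~\ref{levelsomme}, $E^U_{\mathcal{P}_p}(Q)=\sum_{k=1}^{n_Q}a_k=E^U_{\mathcal{P}_p}\bigl(Q(n_Q)\bigr)$. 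Next, by Assumption~\ref{convex} one has $E^U(I,\cdot)=F^U(I,\kappa_I)=\sum_{j\le\kappa_I}f^U(I,j)$ with $f^U(I,1)<\dots<f^U(I,p)$, so $E^U_{\mathcal{P}_p}(Q)$ is a sum of $n_Q$ elements of the multiset $\{f^U(I,k):I\in\mathcal{P}_p,\,1\le k\le p\}$ in which the contribution of each $I$ is an initial segment of its level-sequence. The elementary ``sum of the $n_Q$ smallest'' fact behind Lemma~\ref{convexsomme} then forces this multiset to consist of \emph{all} levels of value strictly below $v$ together with exactly enough levels of value $=v$ to reach $n_Q$, where $v$ denotes the value of $a_{n_Q}$; in particular no level of value $>v$ is used. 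Consequently, for a chain $I$ carrying no level of value $v$ — that is, $I\notin\mathcal{G}_p(n_Q)$, with $\mathcal{G}_p$ as in (\ref{equal}) — its levels partition into those $<v$ (all used) and those $>v$ (all unused), so the initial-segment property pins $\kappa_I(Q)$ down to the number of $I$-levels below $v$; the same computation applied to $Q(n_Q)$ gives $\kappa_I(Q)=\kappa_I(Q(n_Q))$. Finally, within such a chain the restrictions $Q|_I$ and $Q(n_Q)|_I$ are each energy-minimising configurations of $H^U(I,\cdot)$ at the common particle number $\kappa_I$ (again arguing blockwise against the ground-state property), and since that minimiser is non-degenerate — and, on the almost-sure event that the piece-lengths are pairwise distinct, unique — we conclude $Q|_I=Q(n_Q)|_I$. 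Taking the union over $I\in\mathcal{P}_p\setminus\mathcal{G}_p(n_Q)$ gives (3).

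The combinatorial core (sum of the $n_Q$ smallest $\Rightarrow$ everything below the threshold value is used and nothing above it) is routine, being Lemma~\ref{convexsomme} repackaged. The step I expect to be the main obstacle is upgrading ``the per-chain count $\kappa_I$ is forced outside $\mathcal{G}_p(n_Q)$'' to ``the full per-piece restriction $Q|_I$ is forced'', which requires uniqueness of the energy-minimising occupation of $H^U(I,\cdot)$ at fixed particle number; I would dispose of it by genericity of the random lengths (an equality $\pi^2k^2/l_i^2=\pi^2k'^2/l_j^2$ among increments is a null event), noting it is in any case transparent for $p\le 2$, where a chain of $\mathcal{P}_p$ is either a single piece — occupation determined by $\kappa_I$ — or a pair of pieces carrying one particle, which (the two lengths being a.s. distinct) sits in the shorter one. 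One last care is that ``the $r$-th smallest element of $(\Gamma_p,\le_p)$'' in (\ref{equal}) be read as the $r$-th term of the $\le_p$-ordered family of pairs $(I,k)$, so that $\mathcal{G}_p(r)$ is exactly the set of chains possessing a level whose value equals that of $a_r$.
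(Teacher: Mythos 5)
Your proposal is correct in substance and follows exactly the route the paper intends: the paper gives no separate proof of Corollary \ref{sequenceQ} beyond remarking that it is deduced from the proof of Proposition \ref{levelsomme}, and your argument — take $Q(r)$ to be the inductively built $Q'(r)$ for items (1)–(2), then for (3) use that any occupation charged by a ground state must minimise $E^{U}_{\mathcal{P}_{p}}$ at fixed $n_{Q}$ (via the splitting (\ref{energystatefull})), so that strict monotony of the levels pins down the chain counts except where a level ties in value with $a_{n_{Q}}$, i.e.\ on $\mathcal{G}_{p}(n_{Q})$ — is precisely that deduction, with welcome extra care about reading $\mathcal{G}_{p}$ as value-equality and about uniqueness of the within-chain minimiser (transparent for $p=2$, the only case in which the paper verifies Assumption \ref{convex}). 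Two incidental slips, neither affecting the logic: the completion of $Q(r)$ on $\mathcal{N}_{p}$ need not (and for small $r$ cannot) be taken inside $\mathfrak{Q}$, since the capacity $\sum_{i\in\mathcal{N}_{p}}\lfloor l_{i}/l_{\rho,U}\rfloor$ is of order $n\rho^{p}$, far below $n$ — the opposite of what you assert, though only the restriction to $\mathcal{P}_{p}$ matters; and a single particle in a pair of pieces of distinct lengths sits in the \emph{longer} piece (minimising $\pi^{2}/l^{2}$), not the shorter one.
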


The issue of the cardinal of $ \mathcal{G}_{p}(r) $, for any $ r\in \llbracket 1,\,  \#\Gamma_{p} \rrbracket  $, looks as hard to solve as the issue of order of degeneracy of the ground state of $ H^{U}(\Lambda,n) $. However, it seems relevant to assume that, except for some pathological Poisson point processes, one should get only few cases of equality for the energy levels of $ \Gamma_{p} $.

\begin{assumption}\label{casequal}
For $ 1\leq r \leq \#\Gamma_{p}  $, $ \# \mathcal{G}_{p}(r)\leq n\rho^{p-\delta} $.
\end{assumption}

The next proposition states that if Assumption \ref{convex} and Assumption \ref{casequal} are true for some $ p\geq 1 $ then the number of particles in each piece of $ \mathcal{P}_{p} $, except for at most $ 2n\rho^{p-\delta} $ chains, stays the same for any ground state.

\begin{proposition}\label{Qfix}
Set $ p\in \N^{\star} $, $ \delta \in (0,1) $ and $ \rho\in (0,\rho_{\delta}) $. Under Assumption \ref{convex} and Assumption \ref{casequal}, there exist a subset $ \mathcal{F}_{p} $ of $ \mathcal{P}_{p} $ and,  for each piece $ i $ in $ \mathcal{F}_{p} $, an integer $ q_{i}^{\mathcal{F}_{p}} $  such that
\begin{enumerate}
\item the number of chains in $ \mathcal{P}_{p}\backslash \mathcal{F}_{p} $ is less than or equal to $ 2n\rho^{p-\delta} $;
\item if $ \Psi^{U} $ is a ground state of $ H $ then it admits the decomposition $ \Psi^{U}=\Phi^{U,\mathcal{F}_{p}}\wedge \Omega^{U,\mathcal{F}_{p}^{c}} $ with
\begin{equation}
\Phi^{U,\mathcal{F}_{p}}=\bigwedge_{I\in \mathcal{F}_{p}}\psi^{U}\big(I,(q_{i}^{\mathcal{F}_{p}})_{i\in I}\big) \qquad \text{ and } \qquad \Omega^{U,\mathcal{F}_{p}^{c}}=\sum_{Q\in \mathfrak{Q}}\lambda(Q)\bigwedge_{I\notin \mathcal{F}_{p}}\psi^{U}\big(I,(q_{i})_{i\in I} \big).
\end{equation}
\end{enumerate}
\end{proposition}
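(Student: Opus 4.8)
The plan is to reduce the statement about all ground states to a statement about occupations, and then to pin the occupation down on all but a small family of chains. Since $H^{U}(\Lambda,n)$ preserves every summand $\mathfrak{H}_{Q}(\Lambda)$ of (\ref{decomposition}) and the ground state $\psi^{U}(\Lambda,n,Q)$ of $H^{U}(\Lambda,n,Q)$ is non-degenerate, the ground eigenspace of $H^{U}(\Lambda,n)$ is $\bigoplus_{Q\in\mathcal{Q}^{\star}}\mathbb{C}\,\psi^{U}(\Lambda,n,Q)$, with $\mathcal{Q}^{\star}=\{Q\in\mathfrak{Q}:E^{U}(\Lambda,n,Q)=E^{U}(\Lambda,n)\}$ (only occupations in $\mathfrak{Q}$ matter, by Lemma \ref{lmin}). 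It therefore suffices to produce $\mathcal{F}_{p}\subset\mathcal{P}_{p}$ with $\#(\mathcal{P}_{p}\setminus\mathcal{F}_{p})\le 2n\rho^{p-\delta}$ and integers $(q^{\mathcal{F}_{p}}_{i})_{i\in\mathcal{F}_{p}}$ such that $q_{i}=q^{\mathcal{F}_{p}}_{i}$ for every $Q\in\mathcal{Q}^{\star}$ and every $i\in\mathcal{F}_{p}$: then any ground state $\Psi^{U}=\sum_{Q\in\mathcal{Q}^{\star}}\lambda(Q)\psi^{U}(\Lambda,n,Q)$ has the common factor $\Phi^{U,\mathcal{F}_{p}}=\bigwedge_{I\in\mathcal{F}_{p}}\psi^{U}(I,(q^{\mathcal{F}_{p}}_{i})_{i\in I})$ split off via (\ref{statefull})--(\ref{statedecompchain}), and the announced decomposition follows with $\Omega^{U,\mathcal{F}_{p}^{c}}=\sum_{Q}\lambda(Q)\bigwedge_{I\notin\mathcal{F}_{p}}\psi^{U}(I,(q_{i})_{i\in I})$.

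To control $\mathcal{Q}^{\star}$, I would split the energy as in (\ref{energystatefull}) and organise the minimisation around the number $n_{Q}$ of particles lying in $\mathcal{P}_{p}$. Let $S(r)$ be the sum of the $r$ smallest elements of $(\Gamma_{p},\le_{p})$, which by Assumption \ref{convex} and Proposition \ref{levelsomme} is the minimum of $E^{U}_{\mathcal{P}_{p}}$ over occupations carrying $r$ particles in $\mathcal{P}_{p}$, and let $m(r)$ be the minimum of $E^{U}_{\mathcal{N}_{p}}$ over occupations carrying $n-r$ particles in $\mathcal{N}_{p}$; optimising the two regions independently (their pieces are disjoint) gives $E^{U}(\Lambda,n)=\min_{r}\bigl(S(r)+m(r)\bigr)$. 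For $Q\in\mathcal{Q}^{\star}$, chaining $E^{U}(\Lambda,n)=E^{U}_{\mathcal{P}_{p}}(Q)+E^{U}_{\mathcal{N}_{p}}(Q)\ge S(n_{Q})+m(n_{Q})\ge E^{U}(\Lambda,n)$ forces both bounds to be equalities and $n_{Q}$ to minimise $r\mapsto S(r)+m(r)$. Moreover $Q\in\mathfrak{Q}$ gives $n-n_{Q}=\sum_{i\in\mathcal{N}_{p}}q_{i}\le n\rho^{p-\delta}$ by Lemma \ref{occupoutofchains}, so every $n_{Q}$ lies in $[\,n-n\rho^{p-\delta},\,n\,]$; write $r_{-}$ and $r_{+}$ for the smallest and the largest of them, so that $r_{+}-r_{-}\le n\rho^{p-\delta}$. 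Finally, each $\psi^{U}(\Lambda,n,Q)$ with $Q\in\mathcal{Q}^{\star}$ is itself a ground state, so Corollary \ref{sequenceQ}(3) applies: $Q$ agrees with $Q(n_{Q})$ on $\mathcal{P}_{p}\setminus\mathcal{G}_{p}(n_{Q})$.

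Then I would set $\mathcal{F}_{p}=\mathcal{P}_{p}\setminus(B\cup\mathcal{G})$, where $B$ is the set of chains modified along the path $Q(r_{-})\to Q(r_{-}+1)\to\cdots\to Q(r_{+})$ of Corollary \ref{sequenceQ}, and $\mathcal{G}$ is the degeneracy set $\mathcal{G}_{p}(r)$ — which, by the point discussed in the next paragraph, is one and the same subset of $\mathcal{P}_{p}$ for all $r\in[r_{-},r_{+}]$. Corollary \ref{sequenceQ}(2) gives $\#B\le r_{+}-r_{-}\le n\rho^{p-\delta}$, and Assumption \ref{casequal} gives $\#\mathcal{G}\le n\rho^{p-\delta}$, so $\#(\mathcal{P}_{p}\setminus\mathcal{F}_{p})\le 2n\rho^{p-\delta}$, which is item (1). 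For item (2), put $q^{\mathcal{F}_{p}}_{i}=Q(r_{-})_{i}$ for $i\in\mathcal{F}_{p}$: given $Q\in\mathcal{Q}^{\star}$, Corollary \ref{sequenceQ}(3) yields $Q=Q(n_{Q})$ on $\mathcal{P}_{p}\setminus\mathcal{G}_{p}(n_{Q})\supset\mathcal{F}_{p}$, while $Q(n_{Q})=Q(r_{-})$ on $\mathcal{F}_{p}$ because no chain outside $B$ is touched between the indices $r_{-}$ and $n_{Q}$; hence $q_{i}=q^{\mathcal{F}_{p}}_{i}$ on $\mathcal{F}_{p}$, and the decomposition of the first paragraph goes through.

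The main obstacle is exactly the claim invoked above that $\mathcal{G}_{p}(r)$ is constant on $[r_{-},r_{+}]$ — in essence, that the number of particles carried by $\mathcal{P}_{p}$ does not vary across ground-state occupations, so that the minimisers of $r\mapsto S(r)+m(r)$ fill a short interval on which the relevant levels of $\Gamma_{p}$ all coincide. Here $S$ is convex by construction of $\le_{p}$, but the convexity of the \emph{interacting} $\mathcal{N}_{p}$-energy in the particle number is not among the results already proved; the natural way around this is to replace $\psi^{U}_{\mathcal{N}_{p}}$ by free states (Remark \ref{remark1}), for which the analogue of $m$ is strictly convex — the Dirichlet eigenvalues of the pieces being almost surely distinct — so that $S+m$ has a unique minimiser, the $O(n\rho^{p-\delta})$ discrepancy thus introduced lying within the error already tolerated by Proposition \ref{leftover}. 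Everything else — the spectral reduction of the first paragraph, the energy accounting of the second, and the counting of the third — is routine given Lemmas \ref{lmin} and \ref{occupoutofchains}, Propositions \ref{leftover} and \ref{levelsomme}, and Corollary \ref{sequenceQ}.
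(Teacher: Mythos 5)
Your overall strategy is the paper's: decompose any ground state over minimizing occupations, use the incremental sequence $Q(r)$ of Corollary \ref{sequenceQ} together with Assumption \ref{casequal} to freeze the occupation on all but $O(n\rho^{p-\delta})$ chains, and split off the common factor. But your construction $\mathcal{F}_{p}=\mathcal{P}_{p}\setminus(B\cup\mathcal{G})$ hinges on the claim you yourself flag: that $\mathcal{G}_{p}(r)$ is one fixed set for all $r$ in the ground-state window $\llbracket r_{-},r_{+}\rrbracket$ (equivalently, that the minimizers of $r\mapsto S(r)+m(r)$ are so rigid that all the relevant degeneracy sets coincide). Without this, the only available bound is $\#\bigl(\bigcup_{r}\mathcal{G}_{p}(r)\bigr)\leq (r_{+}-r_{-}+1)\,n\rho^{p-\delta}$, which destroys item (1). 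Your proposed repair does not close this gap: replacing $\psi^{U}_{\mathcal{N}_{p}}$ by free states (Remark \ref{remark1}) changes the energy functional on $\mathcal{N}_{p}$, hence changes which occupations are \emph{exact} minimizers; Proposition \ref{Qfix} is an exact structural statement about the true ground states of $H^{U}$, so an $O(n\rho^{p-\delta})$ energy error "tolerated by Proposition \ref{leftover}" is not tolerable here --- a tie or near-tie in the interacting $\mathcal{N}_{p}$-energy can move $n_{Q}$ across the whole window even if the free analogue of $m$ is strictly convex, and in any case strict convexity of $S$ and $m$ separately does not force a unique minimizer of $S+m$.

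The paper avoids needing any such constancy. It defines $\mathcal{F}_{p}$ directly as the set of chains $I$ for which $r\mapsto Q(r)_{\vert I}$ is constant on the \emph{doubled} window $\llbracket n-2n\rho^{p-\delta},\,n\rrbracket$; since exactly one chain changes per step of the sequence, $\#(\mathcal{P}_{p}\setminus\mathcal{F}_{p})\leq 2n\rho^{p-\delta}$ comes for free. Then Assumption \ref{casequal} is used only to show the inclusion $\bigcup_{r=n-n\rho^{p-\delta}}^{n}\mathcal{G}_{p}(r)\subset\mathcal{P}_{p}\setminus\mathcal{F}_{p}$: a chain carrying a level equal in value to the $r$-th smallest element of $\Gamma_{p}$ has that level at lexical rank within $n\rho^{p-\delta}$ of $r$ (there are at most $n\rho^{p-\delta}$ tied levels), so it is touched by the path inside the doubled window. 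With that inclusion, Corollary \ref{sequenceQ}(3) applied to each ground-state occupation $Q$ (with $n_{Q}\in\llbracket n-n\rho^{p-\delta},n\rrbracket$ by Lemma \ref{occupoutofchains}) already pins $Q_{\vert\mathcal{F}_{p}}$ to the common value, with no need to compare $r_{-}$ and $r_{+}$, no uniqueness of $n_{Q}$, and no modification of the model on $\mathcal{N}_{p}$. I suggest you adopt this definition of $\mathcal{F}_{p}$: your first two paragraphs (the spectral reduction and the localization of $n_{Q}$) then combine with it to give a complete proof, and the "main obstacle" you describe simply disappears.
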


\begin{proof}
Let $ \mathcal{F}_{p} $ be the set of chains $ I $ in $ \mathcal{P}_{p} $ such that the function $ r\mapsto Q(r)_{\vert I} $ is constant on $ \llbracket n-2n\rho^{p-\delta},n\rrbracket $.
By Corollary \ref{sequenceQ}, for $ r\leq n-1 $, there is a unique chain $ I\in \mathcal{P}_{p} $ for which $ Q(r)_{\mathcal{P}_{p}\backslash{I}}=Q(r+1)_{\mathcal{P}_{p}\backslash{I}} $. So, by induction on $ r\geq n-2n\rho^{p-\delta} $, $ \mathcal{F}_{p} $ is not empty and the numbers of chains in $ \mathcal{P}_{p}\backslash \mathcal{F}_{p} $ is less than or equal to $ 2n\rho^{p-\delta} $. Then, for any piece $ i $ in $ \mathcal{F}_{p} $, let $ q_{i}^{\mathcal{F}_{p}} $ be the common value.

Let $ \Psi^{U} $ be a ground state of $ H $. By Lemma \ref{lmin} and Definition \ref{levelsolo}, we have the decomposition
\begin{equation}\label{decompPsi1}
\Psi^{U}=\sum_{Q\in \mathfrak{Q}}\lambda(Q)\bigwedge_{I \text{ chain }}\psi^{U}\big(I,(q_{i})_{i\in I}\big)
\end{equation}
where $ \psi^{U}\big(I,(q_{i})_{i\in I}\big) $ is a normalized wave function of $ \mathfrak{H}^{q_{I}}(U_{I}) $ with $ q_{I}=\sum_{i\in I}q_{i} $ and $ U_{I}=\bigcup_{i\in I}\Delta_{i} $.

Using Lemma \ref{occupoutofchains}, if an occupation $ Q $ satisfies $ P_{Q}\Psi^{U}\neq 0 $ then the number of particles in $ \mathcal{P}_{p} $ for $ Q $ belongs to  $ \llbracket n-n\rho^{p-\delta}, n\rrbracket $. 

Under Assumption \ref{casequal}, we have
\begin{equation}
\bigcup_{r=n-n\rho^{p-\delta}}^{n} \mathcal{G}_{p}(r)\subset \mathcal{P}_{p}\backslash\mathcal{F}_{p} 
\end{equation}
because the left term, gathers all the chains that match with any $ r $-th smallest element in $ (\Gamma_{p},\leq_{p}) $ for $ r\in \llbracket n-n\rho^{p-\delta}, n\rrbracket $ (see (\ref{equal})) while the right term gathers all the chains that match with any element of $ (\Gamma_{p},\leq_{p}) $ between the $ (n-2\rho^{p-\delta}) $-th and the $ n $-th ones.

Using the third point of Corollary \ref{sequenceQ}, one shows that, for every chain $ I\in \mathcal{F}_{p} $, the restriction map $ Q\mapsto Q_{\vert I} $ is constant on $ \{ Q\in \mathfrak{Q}, P_{Q}\Psi^{U}\neq 0 \} $, equal to $ (q_{i}^{\mathcal{F}_{p}})_{i\in I} $. So, (\ref{decompPsi1}) becomes
\begin{equation}
\Psi^{U}=\bigg(\bigwedge_{I\in \mathcal{F}_{p}}\psi^{U}\big(I,(q_{i}^{\mathcal{F}_{p}})_{i\in I}\big)\bigg)\wedge\bigg(\sum_{Q\in \mathfrak{Q}}\lambda(Q)\bigwedge_{I\notin \mathcal{F}_{p}}\psi^{U}\big(I,(q_{i})_{i\in I} \big)\bigg)
\end{equation}

This concludes the proof of Proposition \ref{Qfix}.
\end{proof}

$ \\ $

\section{Proceeding with the case $p=2$}\label{split2}

 \subsection{Monotony of the energy levels}
We recall that $ \mathcal{P}_{2} $ is the set of chains each of which carries at most two particles for any ground state.

\begin{lemma}\label{nbUpsilon}
Set
\begin{equation}\label{Gamma2}
\Gamma_{2}=\Big\{ f^{U}(I,k), \, I \in \mathcal{P}_{2},\,  k\in \{1,2\} \Big\}
\end{equation}
Then, with probability $ 1-O(L^{-\infty}) $, for $ \rho $ small enough,
$$ 2n<\# \Gamma_{2} <2n\big(1+(3M+6)\rho\big). $$
\end{lemma}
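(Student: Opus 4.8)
The plan is to turn $\#\Gamma_2$ into a counting statement for $\mathcal{P}_2$ and then to check that the energy levels involved are almost surely pairwise distinct. First, each chain $I\in\mathcal{P}_2$ contributes to $\Gamma_2$ exactly the two numbers $f^U(I,1),f^U(I,2)$ and nothing else, so unconditionally $\#\Gamma_2\le 2\,\#\mathcal{P}_2$. The two announced inequalities then follow from $(\mathrm{i})$ the bounds $n<\#\mathcal{P}_2<n(1+(3M+6)\rho)$, valid with probability $1-O(L^{-\infty})$ for $\rho$ small, together with $(\mathrm{ii})$ the fact that, with probability $1$, the $2\,\#\mathcal{P}_2$ numbers $f^U(I,k)$ ($I\in\mathcal{P}_2$, $k\in\{1,2\}$) are pairwise distinct, so that $\#\Gamma_2=2\,\#\mathcal{P}_2$; indeed these give $2n<\#\Gamma_2=2\,\#\mathcal{P}_2<2n(1+(3M+6)\rho)$.

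For $(\mathrm{i})$ I would first describe $\mathcal{P}_2$ explicitly: a chain lies in $\mathcal{P}_2$ iff it is either a single piece of length in $[l_{\rho,U},3l_{\rho,U})$ isolated at distance $>M$ from every other piece of length $\geq l_{\rho,U}$, or a pair of consecutive pieces both of length in $[l_{\rho,U},2l_{\rho,U})$ at distance $\leq M$, again with no piece of length $\geq l_{\rho,U}$ within $M$ on the outside (a chain of size $\geq 3$ satisfies $\sum_{i\in I}\lfloor l_i/l_{\rho,U}\rfloor\geq 3$ and is excluded). Since a size-$2$ chain merges what would otherwise be two singleton chains, and since pieces of length $\geq 3l_{\rho,U}$, chains of size $\geq 3$, and size-$2$ chains with a piece of length $\geq 2l_{\rho,U}$ each contribute only $O(M^2 n\rho^2)$, one gets $\#\mathcal{P}_2=N_{\geq}-N_{2\text{-ch}}+O(M^2 n\rho^2)+O(L^\beta)$, where $N_{\geq}=\#\{\Delta:|\Delta|\geq l_{\rho,U}\}$ and $N_{2\text{-ch}}$ is the number of size-$2$ chains. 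I would then invoke Propositions \ref{nb1}--\ref{nb3}, exactly as in the proof of Lemma \ref{lmin}, together with $e^{-l_\rho}=\rho/(1+\rho)$, $l_{\rho,U}=l_\rho-(4M+6)\rho$ and $n/L\to\rho$, to obtain $N_{\geq}=Le^{-l_{\rho,U}}+O(L^\beta)=n\bigl(1+(4M+5)\rho+O(M^2\rho^2)\bigr)+o(n)$ and, from the renewal identity that the first piece of length $\geq l_{\rho,U}$ after a given one lies within distance $M$ with probability $(M+1)e^{-l_{\rho,U}}$, $N_{2\text{-ch}}=(M+1)Le^{-2l_{\rho,U}}+O(M^2 n\rho^2)+O(L^\beta)$. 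This yields $\#\mathcal{P}_2=n\bigl(1+(3M+4)\rho+O(M^2\rho^2)\bigr)+o(n)$; letting $L\to\infty$ absorbs the $o(n)$, and for $\rho$ small the coefficient of $n\rho$ is positive and strictly below $3M+6$, which is the slack left by the statement. The same computation also gives $\#\mathcal{P}_2'>n$ for $\rho$ small, where $\mathcal{P}_2'\subset\mathcal{P}_2$ denotes the single-piece chains of length in $[l_{\rho,U},2l_{\rho,U})$.

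For $(\mathrm{ii})$ I would observe that for any chain $I$ the level $f^U(I,1)=F^U(I,1)$ equals the bottom eigenvalue of the one-particle operator $h_I$ (a direct sum of Dirichlet Laplacians on the pieces of $I$), that is $f^U(I,1)=\pi^2/\ell_I^2$ with $\ell_I=\max_{i\in I}|\Delta_i|$; distinct chains being disjoint tuples of pieces, the $\ell_I$ are lengths of pairwise distinct pieces, hence almost surely all different, so $\{f^U(I,1):I\in\mathcal{P}_2\}$ already has $\#\mathcal{P}_2$ elements. Moreover, for a single piece $I=[0,l]$ with $l<2l_{\rho,U}$, nonnegativity of $U$ gives $f^U(I,2)\geq E^{0}([0,l],2)-\pi^2/l^2=4\pi^2/l^2>\pi^2/l_{\rho,U}^2\geq f^U(J,1)$ for every $J\in\mathcal{P}_2$, so $\{f^U(I,2):I\in\mathcal{P}_2'\}$ is disjoint from $\{f^U(J,1):J\in\mathcal{P}_2\}$; provided $l\mapsto f^U([0,l],2)$ is injective on $[l_{\rho,U},2l_{\rho,U})$ this set has $\#\mathcal{P}_2'$ elements and then already $\#\Gamma_2\geq\#\mathcal{P}_2+\#\mathcal{P}_2'>2n$. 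To upgrade this to the exact identity $\#\Gamma_2=2\,\#\mathcal{P}_2$ I would further argue that every conceivable coincidence $f^U(I,k)=f^U(J,k')$ with $(I,k)\neq(J,k')$ amounts to an exact relation between independent absolutely continuous piece lengths, and hence has probability $0$, once one knows that each level $f^U(I,k)$ is, for $\rho$ small, a strictly monotone (in particular not locally constant) function of the lengths and gap of $I$ — which for single pieces is the injectivity above and for two-piece chains follows from Proposition \ref{Epair}.

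The genuinely delicate point, and the step I expect to require the most care, is this monotonicity input. The statement of Proposition \ref{Esolo} with its $o(l^{-3})$ remainder does not by itself imply that $l\mapsto f^U([0,l],2)=4\pi^2 l^{-2}+\gamma l^{-3}+o(l^{-3})$ is strictly decreasing on the relevant window; what is needed is that the remainder is differentiable with $l$-derivative $o(l^{-3})$. I would obtain this by rescaling $[0,l]^2$ to $[0,1]^2$, which turns the two-fermion operator into $-l^{-2}\Delta+U(l(x-y))$, and applying a Feynman--Hellmann computation, using that the ground state is $C^1$ and vanishes on the diagonal so that the interaction term and its $l$-derivative are $O(l^{-3})$ and $O(l^{-4})$; this gives $\tfrac{d}{dl}f^U([0,l],2)=-8\pi^2 l^{-3}+O(l^{-4})<0$ for $l$ large, and an analogous scaling argument handles the two-piece levels via Proposition \ref{Epair}. (Alternatively, the required non-degeneracy should come out of the finer analysis underlying Section \ref{split2}.) Everything else is a routine computation with the Poisson statistics of Propositions \ref{nb1}--\ref{nb3}.
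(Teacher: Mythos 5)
Your step (i) is, by itself, essentially the paper's whole proof: the paper computes $\#\Gamma_{2}=2\#\{\{\Delta_{i}\}\in\mathcal{P}_{2}\}+2\#\{\{\Delta_{j},\Delta_{k}\}\in\mathcal{P}_{2}\}$ directly from Propositions \ref{nb1}--\ref{nb2} (i.e.\ Lemma \ref{nb3}) together with $e^{-l_{\rho,U}}=\rho\big(1+(4M+5)\rho+o(\rho)\big)$, and finds $\#\Gamma_{2}=2n\big(1+(3M+5)\rho+o(\rho)\big)$, which yields both inequalities for $\rho$ small. Your bookkeeping via $N_{\geq}-N_{2\text{-ch}}$ and your constant $3M+4$ differ only in lower-order details (for instance whether adjacent long pieces, at distance $0$, are charged to the pair count or to the isolation factor), and land in the same window below $3M+6$; this part is fine and is the same argument.

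The genuine divergence, and the gap, is your step (ii). The paper never claims, and never needs, that the levels are pairwise distinct: $\Gamma_{2}$ is counted with multiplicity, one level per pair $(I,k)$, which is how it is used afterwards (the counting function $N^{U}_{2}$ must count states, and Assumption \ref{casequal} together with Corollary \ref{corocasequal} only \emph{bounds} the number of coincidences by $O(n\rho^{2-\delta})$ --- the paper explicitly regards excluding them as being as hard as controlling the degeneracy of the ground state). So the almost-sure distinctness you rely on for the lower bound $2n<\#\Gamma_{2}$ in your set reading is both unnecessary for the lemma as it is used and not established by your argument. Concretely: your Feynman--Hellmann computation requires differentiating $l\mapsto U\big(l(x-y)\big)$, hence smoothness of $U$, while Assumption \ref{hypoU} only provides a bounded, compactly supported $U$; the $o(l^{-3})$ remainders in Propositions \ref{Esolo} and \ref{Epair} control values, not derivatives, so they give no injectivity; and the soft alternative (domain monotonicity plus unique continuation) gives strict decrease of $F^{U}([0,l],2)$ but not of the increment $f^{U}([0,l],2)=F^{U}([0,l],2)-\pi^{2}l^{-2}$, which is a difference of two decreasing quantities. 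Without that input you cannot rule out collisions among second levels, or between first and second levels of different chains, so your lower bound is not proved as written. Read $\Gamma_{2}$ as a family indexed by $(I,k)$, drop step (ii), and your step (i) is precisely the paper's proof.
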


\begin{proof}
Using $ e^{-l_{\rho,U}}=\rho\big(1+(4M+5)\rho+o(\rho)\big) $, Proposition \ref{nb1} and Proposition \ref{nb2}, we compute
\begin{align*}
\# \Gamma_{2}&=2\# \Big\{ \{ \Delta_{i}\}\in \mathcal{P}_{2}\Big\}
+2 \# \Big\{ \{\Delta_{j},\Delta_{k}\}\in \mathcal{P}_{2} \Big\}\\
&=2L(1-Me^{-l_{\rho,U}})^{2}\Big( \big(e^{-l_{\rho,U}}-e^{-3l_{\rho,U}}\big)+M \big(e^{-l_{\rho,U}}-e^{-2l_{\rho,U}}\big)^{2}\Big)\\
&=2L\big(1-2M\rho+o(\rho)\big)\rho\big(1+(4M+5)\rho+o(\rho)\big)\big(1+M\rho+o(\rho)\big) \\
&=2n\big(1+(3M+5)\rho +o(\rho)\big)
\end{align*}
It concludes the proof of Lemma \ref{nbUpsilon}.
\end{proof}
We now prove that Assumption $ \ref{convex} $ holds when $ p=2 $. 
 \begin{lemma}\label{convfor2}
For $ I\in \mathcal{P}_{2} $, $ f^{U}(I,2)>f^{U}(I,1) $.
\end{lemma}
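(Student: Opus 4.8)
The statement to prove is that for every chain $I\in\mathcal{P}_2$ we have $f^U(I,2)>f^U(I,1)$, equivalently $F^U(I,2)-F^U(I,1)>F^U(I,1)-F^U(I,0)=F^U(I,1)$, i.e. the energy levels of a chain carrying at most two particles are strictly increasing. By Assumption \ref{hypoU} the interaction $U$ is nonnegative, so the interaction can only raise energies; the point is that the free part already forces strict convexity and the repulsive perturbation cannot destroy it. The plan is to treat separately the two shapes a chain in $\mathcal{P}_2$ can have, as recalled just before Proposition \ref{Esolo}: either $I=\{\Delta_i\}$ is a single piece with $\lfloor l_i/l_{\rho,U}\rfloor\le 2$, or $I=\{\Delta_j,\Delta_k\}$ is a pair of pieces with $\lfloor l_j/l_{\rho,U}\rfloor=\lfloor l_k/l_{\rho,U}\rfloor=1$, hence at most one particle per piece.

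\textbf{Single-piece case.} Here $F^U(I,1)=f^U(I,1)=\pi^2/l_i^2$ is just the ground state energy of the Dirichlet Laplacian on $[0,l_i]$ (one particle feels no interaction), and $F^U(I,2)=E^U([0,l_i],2)$. First I would note the trivial lower bound from nonnegativity of $U$: $F^U(I,2)\ge E^0([0,l_i],2)=\pi^2/l_i^2+4\pi^2/l_i^2=5\pi^2/l_i^2$, so $f^U(I,2)=F^U(I,2)-\pi^2/l_i^2\ge 4\pi^2/l_i^2>\pi^2/l_i^2=f^U(I,1)$. This already settles the single-piece case with room to spare, without even needing Proposition \ref{Esolo}; one only uses that dropping the (nonnegative) interaction term lowers the quadratic form, together with the explicit first two Dirichlet eigenvalues $\pi^2/l^2$ and $4\pi^2/l^2$. (If one prefers a cleaner statement one can alternatively invoke Proposition \ref{Esolo}, but $l_i$ here is only of order $l_{\rho,U}$, not necessarily ``large'' in a $\rho$-free sense, so the crude bound is the safe route.)

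\textbf{Pair-of-pieces case.} Now $I=\{\Delta_j,\Delta_k\}$ with $l_j,l_k\in[l_{\rho,U},2l_{\rho,U})$, and $\mathbf d_{j,k}\le M$. One particle goes in the shorter piece, say $\Delta_j$, giving $f^U(I,1)=\pi^2/l_j^2$ (again a single particle, no interaction; and $F^U(I,1)$ is the minimum of $\pi^2/l_j^2$ and $\pi^2/l_k^2$, which is $\pi^2/l_j^2$ when $l_j\le l_k$). For two particles, the only admissible occupation is one particle in each piece (each piece carries at most one), so $F^U(I,2)=E^U(\{\Delta_j,\Delta_k\},(1,1))$ and, by nonnegativity of $U$ again, $F^U(I,2)\ge \pi^2/l_j^2+\pi^2/l_k^2$. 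Hence $f^U(I,2)=F^U(I,2)-\pi^2/l_j^2\ge \pi^2/l_k^2\ge \pi^2/(2l_{\rho,U})^2$, while $f^U(I,1)=\pi^2/l_j^2<\pi^2/l_{\rho,U}^2$. This gives $f^U(I,2)\ge \pi^2/l_k^2$ but I still need $\pi^2/l_k^2>\pi^2/l_j^2$, which is false in general (the pieces could have nearly equal length). So the crude bound alone does not close this case: when $l_j$ and $l_k$ are very close, $\pi^2/l_k^2$ and $\pi^2/l_j^2$ are comparable and I must show the interaction term tips the balance, i.e. that $E^U(\{\Delta_j,\Delta_k\},(1,1))-\pi^2/l_j^2-\pi^2/l_k^2>\pi^2/l_j^2-\pi^2/l_k^2$ at least when the right-hand side is nonnegative. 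Here is where Proposition \ref{Epair} enters: it gives $E^U(\{\Delta_j,\Delta_k\},(1,1))=\pi^2/l_j^2+\pi^2/l_k^2+\sigma(\mathbf d_{j,k})(l_j/l_k)^3 l_k^{-6}(1+o(1))$ with $\sigma>0$ for distances below $M$ — more precisely the error term is of size $l_{\rho,U}^{-6}$ while $\pi^2/l_j^2-\pi^2/l_k^2=\pi^2(l_k^2-l_j^2)/(l_jl_k)^2$. The argument splits: if $l_k^2-l_j^2$ is bounded below by a fixed multiple of $l_{\rho,U}^{-4}$ (times $l_{\rho,U}^4$... careful) — concretely, compare the gap $\pi^2(l_k-l_j)(l_k+l_j)/(l_jl_k)^2 \asymp (l_k-l_j) l_{\rho,U}^{-3}$ against the interaction contribution $\asymp \sigma(\mathbf d_{j,k}) l_{\rho,U}^{-6}$: when $l_k-l_j \gtrsim l_{\rho,U}^{-3}$ the free gap dominates and we are done trivially (but with $f^U(I,2)<f^U(I,1)$? no — recheck sign: $f^U(I,2)=\pi^2/l_k^2+(\text{interaction})$, $f^U(I,1)=\pi^2/l_j^2$, so $f^U(I,2)-f^U(I,1)=\pi^2/l_k^2-\pi^2/l_j^2+(\text{interaction})<0$ from the free part when $l_k>l_j$!).

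\textbf{Correcting the pairing convention and locating the real obstacle.} I misassigned the single particle. For one particle the ground state puts it in the \emph{longer} piece to minimize kinetic energy, so with $l_j\le l_k$ one has $f^U(I,1)=F^U(I,1)=\pi^2/l_k^2$, and then $f^U(I,2)-f^U(I,1)=(\pi^2/l_j^2+\pi^2/l_k^2+\text{interaction})-\pi^2/l_k^2=\pi^2/l_j^2+\text{interaction}>0$ immediately, since $\pi^2/l_j^2>0$ and the interaction term is $\ge 0$. So in fact the pair case is also essentially free: $f^U(I,2)=F^U(I,2)-F^U(I,1)\ge \pi^2/\min(l_j,l_k)^2 - \bigl(\pi^2/\max(l_j,l_k)^2 - \pi^2/\max(l_j,l_k)^2\bigr)$... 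I should just write $F^U(I,2)\ge \pi^2/l_j^2+\pi^2/l_k^2$ and $F^U(I,1)=\pi^2/\max(l_j,l_k)^2$, hence $f^U(I,2)\ge \pi^2/\min(l_j,l_k)^2>0=$ well, $>\pi^2/\max(l_j,l_k)^2$? No: I need $f^U(I,2)>f^U(I,1)=\pi^2/\max(l_j,l_k)^2$, i.e. $\pi^2/\min^2 > \pi^2/\max^2$, which is true since $\min\le\max$. Thus the bound $F^U(I,2)\ge \pi^2/l_j^2+\pi^2/l_k^2$ (pure nonnegativity of $U$) suffices and the strict inequality is immediate. \emph{So the plan is:} (1) recall the classification of chains in $\mathcal P_2$; (2) in the single-piece case bound $F^U(I,2)\ge 5\pi^2/l^2$ by nonnegativity of $U$ and compare with $f^U(I,1)=\pi^2/l^2$; (3) in the two-piece case observe the unique two-particle occupation is $(1,1)$, bound $F^U(I,2)\ge \pi^2/l_j^2+\pi^2/l_k^2$ by nonnegativity of $U$, note $F^U(I,1)=\pi^2/\max(l_j,l_k)^2$, and conclude $f^U(I,2)\ge \pi^2/\min(l_j,l_k)^2>\pi^2/\max(l_j,l_k)^2=f^U(I,1)$. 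The main (minor) subtlety to get right is the bookkeeping of which piece the single particle occupies in $F^U(I,1)$ and verifying that there is genuinely no admissible two-particle configuration other than $(1,1)$ for pairs in $\mathcal P_2$; Proposition \ref{Esolo} and Proposition \ref{Epair} are not actually needed for this lemma, only the nonnegativity of $U$ from Assumption \ref{hypoU} and the explicit low-lying Dirichlet spectrum. If one wanted quantitative strictness uniformly in $\rho$ one would invoke those propositions, but for the qualitative statement of Lemma \ref{convfor2} the elementary argument closes it.
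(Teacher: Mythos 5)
Your single-piece case is sound, and in fact more elementary than the paper's: since $W_{2}\ge 0$, $F^{U}(I,2)\ge E^{0}([0,l_{i}],2)=5\pi^{2}/l_{i}^{2}$, hence $f^{U}(I,2)\ge 4\pi^{2}/l_{i}^{2}>\pi^{2}/l_{i}^{2}=f^{U}(I,1)$, whereas the paper invokes the expansion of Proposition \ref{Esolo}. The genuine gap is in the two-piece case, at the very last step. With $l_{j}\le l_{k}$ you correctly identify $F^{U}(I,1)=\pi^{2}/l_{k}^{2}$ and obtain $F^{U}(I,2)\ge \pi^{2}/l_{j}^{2}+\pi^{2}/l_{k}^{2}$ from nonnegativity of $U$, so $f^{U}(I,2)\ge \pi^{2}/l_{j}^{2}\ge \pi^{2}/l_{k}^{2}=f^{U}(I,1)$; but the strict inequality you assert (``$\pi^{2}/\min^{2}>\pi^{2}/\max^{2}$\dots true since $\min\le\max$'') fails precisely when $l_{j}=l_{k}$. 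In that tie case nonnegativity alone only gives $f^{U}(I,2)\ge f^{U}(I,1)$, and strictness requires showing that the interaction strictly raises the $(1,1)$ ground state energy of two pieces at distance $\mathbf{d}_{jk}\le M$. That is exactly what the paper's proof extracts from Proposition \ref{Epair}: $f^{U}(\{\Delta_{j},\Delta_{k}\},2)=\max(\pi^{2}/l_{j}^{2},\pi^{2}/l_{k}^{2})+\tau(\mathbf{d}_{jk})\,l_{j}^{-3}l_{k}^{-3}(1+o(1))$ with a strictly positive correction, which is what settles the equal-length configuration. So your concluding claim that Propositions \ref{Esolo} and \ref{Epair} are not needed is wrong as far as \ref{Epair} is concerned: without it (or a separate argument that the two-particle ground state has strictly positive interaction energy, or an explicit appeal to the almost-sure absence of exactly equal lengths, which the paper does not make) the strict inequality of the lemma is not established.

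A secondary, repairable point: your justification that $(1,1)$ is ``the only admissible two-particle occupation'' for a pair does not follow from Definition \ref{levelsolo}, since $F^{U}(I,2)$ is a minimum over all distributions with $\kappa_{I}=2$, including $(2,0)$ and $(0,2)$. Your lower bound survives anyway, because for $l_{j},l_{k}\in[l_{\rho,U},2l_{\rho,U})$ one has $E^{U}(\Delta_{i},2)\ge 5\pi^{2}/l_{i}^{2}\ge \pi^{2}/l_{j}^{2}+\pi^{2}/l_{k}^{2}$ (using $l_{i}<2l_{\rho,U}\le 2\min(l_{j},l_{k})$), but this needs to be said rather than asserted through the occupation bound of Lemma \ref{lrhoU}, which concerns ground states of the full system, not the minimization defining $F^{U}(I,2)$.
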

\begin{proof}
If $ I\in \mathcal{P}_{2} $, then we have two cases.
\begin{enumerate}
\item[(i)] Either $ I= (i) $ is a unique piece of length $ l_{i}\in [l_{\rho,U},3l_{\rho,U}) $. The first energy level of $ \Delta_{i} $ is
$$ f^{U}(\Delta_{i},1)=\frac{\pi^{2}}{l_{i}^{2}} $$
For the second energy level of $ \Delta_{i} $, we use Proposition \ref{Esolo}.
\begin{equation}\label{e(l,2)}
f^{U}(\Delta_{i},2)=\frac{4\pi^{2}}{l_{i}^{2}}+\frac{\gamma}{l_{i}^{3}}+o(l^{-3})> f^{U}(\Delta_{i},1).
\end{equation}

$ \\ $

\item[(ii)] Or $ I=(j,k) $ is a pair of pieces of length $ l_{j},l_{k}\in [l_{\rho,U},2l_{\rho,U}) $ separated by a gap of length $ \textbf{d}_{jk}\leq M $. The first energy level of the pair $\{\Delta_{j},\Delta_{k}\} $ is
\begin{equation}\label{e(j,k,1)}
f^{U}\Big(\{\Delta_{j},\Delta_{k}\},1\Big)=\min \Big(\frac{\pi^{2}}{l_{j}^{2}},\frac{\pi^{2}}{l_{k}^{2}}\Big).
\end{equation}
Concerning the second energy level of this pair, we use Proposition \ref{Epair}.
\begin{equation}\label{e(j,k,2)}
f^{U}\Big(\{\Delta_{j},\Delta_{k}\},2\Big)=\max \Big(\frac{\pi^{2}}{l_{j}^{2}},\frac{\pi^{2}}{l_{k}^{2}}\Big)+\frac{\tau (\textbf{d}_{jk})}{l^{3}l'^{3}}\Big(1+o(1)\Big)>f^{U}\Big(\{\Delta_{j},\Delta_{k}\},1\Big).
\end{equation}
\end{enumerate}
This completes the proof of Lemma \ref{convfor2}.
\end{proof}

Combining Lemma \ref{convfor2} and Proposition \ref{levelsomme}, we get the following corollary.
\begin{corollary}\label{case2}
For $ r\leq n $, the minimum of $ E^{U}_{\mathcal{P}_{2}} $ when there are exactly $ r $ particles in the chains of $ \mathcal{P}_{2} $ is equal to the sum of the $ r $ smallest elements of $ \Gamma_{2} $.
\end{corollary}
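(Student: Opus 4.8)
The plan is to recognize Corollary \ref{case2} as the instance $p=2$ of Proposition \ref{levelsomme}, once one has checked that the hypothesis of that proposition --- Assumption \ref{convex} --- holds for $p=2$, and that the range $r\le n$ claimed in the corollary coincides with the range $r\le \min(n,\#\Gamma_{2})$ appearing in Proposition \ref{levelsomme}.

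First I would observe that $\Gamma_{2}$ as defined in (\ref{Gamma2}) is literally $\Gamma_{p}$ of (\ref{Gammap}) specialised to $p=2$, so that the ordered set $(\Gamma_{2},\le_{2})$ and its enumeration $(a_{k})$ make sense via (\ref{orderp}). Next, Lemma \ref{convfor2} asserts precisely that for every chain $I\in\mathcal{P}_{2}$ one has $f^{U}(I,1)<f^{U}(I,2)$; since a chain of $\mathcal{P}_{2}$ carries at most two particles, this single inequality is exactly the statement that $r\mapsto f^{U}(I,r)$ is increasing on the relevant range, i.e.\ Assumption \ref{convex} holds with $p=2$. (For $p=2$ the only non-vacuous monotonicity is the comparison between the first and second energy levels, and this is what Lemma \ref{convfor2} delivers, via Proposition \ref{Esolo} for single pieces and Proposition \ref{Epair} for pairs of pieces.)

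Then I would invoke Proposition \ref{levelsomme} with $p=2$: for every $r\le \min(n,\#\Gamma_{2})$, any occupation $Q$ minimising $E^{U}_{\mathcal{P}_{2}}$ subject to $n_{Q}=r$ and to the per-piece bounds $q_{i}\le\big\lfloor l_{i}/l_{\rho,U}\big\rfloor$ satisfies $E^{U}_{\mathcal{P}_{2}}(Q)=\sum_{k=1}^{r}a_{k}$, the sum of the $r$ smallest elements of $\Gamma_{2}$. It then remains to note that for a chain $I$ in $\mathcal{P}_{2}$ the per-piece bounds automatically force $\kappa_{I}=\sum_{i\in I}q_{i}\le\sum_{i\in I}\big\lfloor l_{i}/l_{\rho,U}\big\rfloor\le 2$, so that the constrained minimisation in Proposition \ref{levelsomme} is exactly the minimisation over all ways of placing $r$ particles in the chains of $\mathcal{P}_{2}$ referred to in the corollary.

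Finally, by Lemma \ref{nbUpsilon} one has $\#\Gamma_{2}>2n\ge n$ for $\rho$ small (and $n$ large), hence $\min(n,\#\Gamma_{2})=n$, and the conclusion of Proposition \ref{levelsomme} is available for the full range $r\le n$ asserted in the corollary. There is no genuine obstacle at this stage: all the substance has been front-loaded into Lemma \ref{convfor2} (together with the expansions of Propositions \ref{Esolo} and \ref{Epair}) and into Proposition \ref{levelsomme}; the only points requiring a line of care are the identification of the two constraint sets and the appeal to Lemma \ref{nbUpsilon} to pin down the admissible range of $r$.
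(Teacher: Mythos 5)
Your proposal is correct and follows essentially the same route as the paper, which obtains the corollary precisely by combining Lemma \ref{convfor2} (which verifies Assumption \ref{convex} for $p=2$) with Proposition \ref{levelsomme}. Your additional remark that Lemma \ref{nbUpsilon} gives $\#\Gamma_{2}>2n$, so that $\min(n,\#\Gamma_{2})=n$ and the full range $r\le n$ is covered, is a small point the paper leaves implicit but is consistent with its argument.
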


 \subsection{Distribution of the energy levels}

By Corollary \ref{case2}, we need to understand the distribution of the energy levels in $ \Gamma_{2} $.
For $ \lambda>0 $, we define
\begin{equation}
 N_{2}^{U}(L,\lambda):=\frac{1}{L}\# \big\{  \, x\in \Gamma_{2}, \,  x\in (-\infty, \lambda] \big\}\quad  \text{ and }\quad N_{2}^{U}(\lambda):=\lim_{L\to \infty} N_{2}^{U}(L,\lambda).
\end{equation}
$ N_{2}^{U} $ is called the counting function of $ \Gamma_{2} $. We evaluate it in the following proposition.

\begin{proposition}\label{NUlambda}
Define the application $ J $ by, for $ \lambda>0 $,
\begin{align}\label{J}
J(\lambda)&:=(1-Me^{-l_{\rho,U}})^{2}\bigg(\int_{\mathcal{D}_{1}(\lambda)}e^{-u}\, du+\int_{\mathcal{D}_{2}(\lambda)}e^{-u}\,du \\
&\quad +\int_{0}^{M}\int_{\mathcal{D}_{3}(\lambda)}2 e^{-(u+v)}\,dtdudv +\int_{0}^{M}\int_{\mathcal{D}_{4}(\lambda,t)}2e^{-(u+v)}\,dtdudv \bigg).\nonumber
\end{align}
where
\begin{align*}
\mathcal{D}_{1}(\lambda)=\Big[\max\Big(l_{\rho,U},\frac{\pi}{\sqrt{\lambda}}\Big), 3l_{\rho,U}\Big]&, \qquad \mathcal{D}_{3}(\lambda)=\bigg\{(x,y)\in \big[l_{\rho,U},2l_{\rho,U}\big]^{2}, y\geq \max\Big(x,\frac{\pi}{\sqrt{\lambda}}\Big)\bigg\} \\
\mathcal{D}_{2}(\lambda)=\Big[ \max\Big(2l_{\rho,U},\frac{2\pi}{\sqrt{\lambda}}+\frac{\gamma}{8\pi^{2}}\Big),3l_{\rho,U}\Big]&, \qquad
\mathcal{D}_{4}(\lambda,t) =\bigg\{ (x,y)\in \big[l_{\rho,U},2l_{\rho,U}\big]^{2} ,\, y\geq x\geq \Big(\frac{\pi}{\sqrt{\lambda}}+\frac{\sigma(t)}{2y^{3}}\Big) \bigg\}.
\end{align*}
and $ \gamma $ (resp. $ \sigma(t) $) is given in Proposition \ref{Esolo} (resp. Proposition \ref{Epair}).

Then, with probability $ 1-O(L^{-\infty}) $, for every $ \beta>1 $ and $ \lambda>0 $, the counting function of $ \Gamma_{2} $ satisfies
$$ N^{U}_{2}(\lambda)=J(\lambda)+R_{\beta} $$
with $ R_{\beta}=O(\rho^{\beta}). $
\end{proposition}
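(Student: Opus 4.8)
The plan is to compute $N_2^U(\lambda)$ by counting separately the four types of ``energy-level sources'' in $\Gamma_2$, matching the four integrals $\int_{\mathcal D_i}$ that appear in $J(\lambda)$. Recall that a chain of $\mathcal P_2$ is either a single piece of length in $[l_{\rho,U},3l_{\rho,U})$, contributing the two levels $f^U([0,u],1)=\pi^2/u^2$ and $f^U([0,u],2)=4\pi^2/u^2+\gamma/u^3+o(u^{-3})$, or a pair of pieces of lengths in $[l_{\rho,U},2l_{\rho,U})$ at gap $t\le M$, contributing $f^U(\{[-u,0],[t,t+v]\},1)=\min(\pi^2/u^2,\pi^2/v^2)$ and $f^U(\{\cdots\},2)=\max(\pi^2/u^2,\pi^2/v^2)+\sigma(t)(a l)^{-3}l^{-3}(1+o(1))$. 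For each of the four sources I would translate the condition ``this level is $\le\lambda$'' into a condition on the piece lengths $u$ (and $v,t$): e.g. $\pi^2/u^2\le\lambda\iff u\ge\pi/\sqrt\lambda$, which (intersected with the window $[l_{\rho,U},3l_{\rho,U})$) gives exactly $\mathcal D_1(\lambda)$; the second-level condition $4\pi^2/u^2+\gamma/u^3+o(u^{-3})\le\lambda$ inverts, to leading order, to $u\ge 2\pi/\sqrt\lambda+\gamma/(8\pi^2)+o(1)$, giving $\mathcal D_2(\lambda)$; and similarly for the pair levels with $\max,\min$ producing the ordered-coordinates domains $\mathcal D_3,\mathcal D_4$. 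The error terms $o(u^{-3})$, $o(1)$ in the level expansions must be absorbed; since $u\asymp l_{\rho,U}\asymp|\log\rho|$ these are uniformly small on the relevant windows, so the shift they induce in the boundary of each $\mathcal D_i$ is $o(1)$ in the threshold variable, which is harmless at the stated precision $O(\rho^\beta)$.

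Next I would invoke the statistical-distribution results for the Poisson point process (Proposition~\ref{nb1}, Proposition~\ref{nb2}, and Proposition~\ref{nb3}) to count the chains of each type whose associated level lies below $\lambda$. The key facts are: the number of pieces with length in a Borel set $A\subset[0,\infty)$ is $L\int_A e^{-u}\,du+O(L^\beta)$ with probability $1-O(L^{-\infty})$; the number of pairs of pieces with lengths in $A\times B$ and gap in $[0,M]$ that moreover form a genuine chain (no other long piece within $M$) carries the extra factor $(1-Me^{-l_{\rho,U}})^2$ coming from the probability that the two flanking regions of length $\le M$ contain no piece of length $\ge l_{\rho,U}$, together with a factor $2$ from the two orderings $u\le v$ vs. $v\le u$. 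Dividing by $L$ and passing to the limit turns each count into the corresponding integral $(1-Me^{-l_{\rho,U}})^2\int_{\mathcal D_i}e^{-(\cdots)}$, and summing the four contributions reproduces $J(\lambda)$ exactly. The $O(L^\beta)$ fluctuation terms, divided by $L$, vanish in the thermodynamic limit, so they do not even enter $R_\beta$; the genuine source of $R_\beta$ is the comparison between the finite-$\rho$ level formulas and their leading asymptotics, plus the truncation of the windows at $3l_{\rho,U}$ and $2l_{\rho,U}$ (chains too large to lie in $\mathcal P_2$ are excluded, but also chains of $\mathcal P_2$ whose levels we approximated) — each of these defects is bounded by a quantity of order $e^{-cl_{\rho,U}}=O(\rho^c)$ for a suitable $c>1$, hence by $O(\rho^\beta)$ for every $\beta>1$ after possibly shrinking $\rho_\delta$ and relabelling; here one uses $\#\Gamma_2\le 2n(1+(3M+6)\rho)$ from Lemma~\ref{nbUpsilon} to control the total mass being redistributed.

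The main obstacle I expect is bookkeeping the pair-of-pieces contributions carefully: one must make sure that (a) the $(1-Me^{-l_{\rho,U}})^2$ factor is correctly attached — it encodes exactly conditions (iii) and (iv) of Definition~\ref{chain}, i.e. that the pair is not part of a longer chain — and that this is consistent with the single-piece count using the \emph{same} prefactor (a single piece of length $\ge l_{\rho,U}$ is a chain iff nothing long sits within $M$ on either side, whence the same $(1-Me^{-l_{\rho,U}})^2$); (b) the overlap/double-counting between ``a chain of size $2$ whose second level is $\le\lambda$'' and ``the same chain's first level'' is handled by simply counting levels, not chains, so each chain of $\mathcal P_2$ contributes at most $2$ to $\#\Gamma_2$ and we count the two events independently; and (c) the $\max/\min$ over $\pi^2/u^2,\pi^2/v^2$ is resolved by the ordering constraint $y\ge x$ built into $\mathcal D_3$ and $\mathcal D_4$, with the factor $2$ restoring the unordered count. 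Once these domains are pinned down, evaluating the limit is routine Fubini plus the Poisson statistics, and collecting all error contributions into a single $R_\beta=O(\rho^\beta)$ is a matter of noting each is $O(e^{-cl_{\rho,U}})$ with $c>1$.
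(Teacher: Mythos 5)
Your overall skeleton is the same as the paper's: translate each of the four level conditions into conditions on the piece lengths and the gap via Propositions \ref{Esolo} and \ref{Epair}, count with the Poisson statistics (Lemma \ref{nb3}), and pass to the limit; your handling of the prefactor $(1-Me^{-l_{\rho,U}})^{2}$, of the factor $2$ from the two orderings, and of the $\max/\min$ resolution in $\mathcal{D}_{3},\mathcal{D}_{4}$ is correct. The genuine gap is in the error control, i.e. in the claim $R_{\beta}=O(\rho^{\beta})$ for \emph{every} $\beta>1$. The assertion that the defects are ``of order $e^{-cl_{\rho,U}}=O(\rho^{c})$ for a suitable $c>1$, hence $O(\rho^{\beta})$ for every $\beta>1$ after relabelling'' is a non sequitur: a bound by a fixed power $\rho^{c}$ does not dominate $\rho^{\beta}$ for $\beta>c$. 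More seriously, the remainders in Propositions \ref{Esolo} and \ref{Epair} are only $o(l^{-3})$ (resp.\ a relative $o(1)$) with no rate, so inverting the threshold in the continuum, as you do, shifts the boundary of $\mathcal{D}_{2}$ by an unquantified $o(1)$ (and of $\mathcal{D}_{4}$ by $o(l_{\rho,U}^{-3})$); since the Poisson density at those boundaries is of order $e^{-2l_{\rho,U}}\asymp\rho^{2}$, the best you can extract this way is an error $o(\rho^{2})$. That suffices for $\beta\le 2$ but not for the stated ``every $\beta>1$'', and the later uses of the proposition genuinely need more (the Fermi level construction takes $\beta>2$, Corollary \ref{corocasequal} takes $\beta>2$, Proposition \ref{thlimU} takes $\beta>3$).

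The missing device is the paper's discretization. The paper does not work with the continuum domains directly: it discretizes the parameters $l_{i}$, resp.\ $(l_{j},l_{k},\mathbf{d}_{jk})$, on a mesh of step $\rho^{\beta}$, producing an approximate family $\Gamma_{2}^{\beta}$; Lemma \ref{dinfty} bounds $d_{\infty}(\Gamma_{2},\Gamma_{2}^{\beta})$ by $C\rho^{\beta}$, which combined with $\#\Gamma_{2}\le 2n\big(1+O(\rho)\big)$ (Lemma \ref{nbUpsilon}) makes the two counting functions agree up to $O(\rho^{\beta+1})$; the conditions ``level $\le\lambda$'' are then inverted on the discrete indices by comparing the expansion remainders with the gaps between consecutive \emph{discretized} levels, so that an error in the inversion costs at most a mesh step, i.e.\ Poisson mass $O(\rho^{\beta+1})$; finally Lemma \ref{nb3} is applied cell by cell and the Riemann sums are compared with the integrals over the $\mathcal{D}_{i}$ at cost $O(\rho^{\beta})$. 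Note also that Propositions \ref{nb1}--\ref{nb2} and Lemma \ref{nb3} count pieces with lengths and gaps in products of intervals, so some partition into small rectangles is needed anyway to count over the non-rectangular domains $\mathcal{D}_{3}$ and $\mathcal{D}_{4}$; your ``routine Fubini plus Poisson statistics'' hides exactly this step. To repair your argument, introduce the $\rho^{\beta}$-mesh (or an equivalent quantitative device) instead of inverting the thresholds in the continuum, and route all remainder terms through the comparison with the discrete level spacing.
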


\begin{proof}
A chain of $ \mathcal{P}_{2} $  is either a single piece $ \Delta_{i} $ or a pair $ \{\Delta_{i},\Delta_{j}\} $. In the first case, the energy levels of $ \Delta_{i} $ are functions of a single parameter, the length $ l_{i}\in [l_{\rho,U},3l_{\rho,U}] $. When $ I=\{\Delta_{i},\Delta_{j}\} $, the energy levels of $ I $ are given by the triplet of parameters $ (l_{i},l_{j},\textbf{d}_{ij})\in [l_{\rho,U},3l_{\rho,U}]\times[l_{\rho,U},3l_{\rho,U}]\times [0,M] $.

Fix $ \beta>1 $. We set a discretization of the above parameters with a constant step $ \rho^{\beta} $. We get a sequence of approximated energy levels $ \Gamma_{2}^{\beta} $. We prove that the Hausdorff distance between $ \Gamma_{2} $ and $ \Gamma_{2}^{\beta} $ is of order $ O(\rho^{\beta}) $. So it is sufficient to compute the counting function of $ \Gamma_{2}^{\beta} $ at order $ O(\rho^{\beta}) $. Since the Poisson process fix the statistics of pieces, one knows how many times each approximated energy level appears in $ \Gamma_{2}^{\beta} $. We will use the expansion of the energy levels given by Proposition \ref{Esolo} and Proposition \ref{Epair} to replace the condition "below $ \lambda $" by some conditions on the parameters.

We now give the details.
For $ I\in \mathcal{P}_{2} $, we distinguish two cases.
\begin{enumerate}
\item[(i)] If $ I=\{\Delta_{i}\} $ then $ l_{i}\in [k\rho^{\beta},(k+1)\rho^{\beta}) $ for some $ k  $ and for $ a\in \{1,2\} $, we approximate the $ a $-th energy level of the piece $ \Delta_{i} $ by
\begin{equation}
f_{a}(k)=f^{U}\big([0,k\rho^{\beta}],a\big) .
\end{equation}
The parameter $ k $ goes from $ K_{1}=\lfloor l_{\rho,U}\rho^{-\beta}\rfloor $ to $ K_{3}=\lfloor 3l_{\rho,U}\rho^{-\beta}\rfloor $. For $ a\in \{1,2\} $, we define 
\begin{equation}\label{nb4}
p_{a}(k)=\# \Big\{ \{\Delta_{i}\}\in \mathcal{P}_{2} , \, l_{i}\in [k\rho^{\beta},(k+1)\rho^{\beta}) \Big\};
\end{equation}
\item[(ii)] if $ I=\{\Delta_{j},\Delta_{k}\} $ then $ l_{j}\in [r\rho^{\beta},(r+1)\rho^{\beta}) $, $ l_{k}\in [s\rho^{\beta},(s+1)\rho^{\beta}) $ and $ \textbf{d}_{j,k}\in [d\rho^{\beta},(d+1)\rho^{\beta}) $ for some $ r,s $ and $ d $ and, for $ a\in \{1,2\} $, we approximate the $ a $-th energy level of the pair $ (\Delta_{j},\Delta_{k}) $ by
 \begin{equation}\label{nb5}
  g_{a}(r,s,d)=f^{U}\Big(\big\{[-r\rho^{\beta},0],[d\rho^{\beta},d\rho^{\beta}+s\rho^{\beta}]\big\},1\Big)
 \end{equation}
Here the parameters $ r $, $ s $ go from $ K_{1} $ to $ K_{2}=\lfloor 2l_{\rho,U}\rho^{-\beta}\rfloor $ and the parameter $ d $ goes from $ 0 $ to $ D=\lfloor M\rho^{-\beta}\rfloor $. For $ a\in \{1,2\} $, we set
\begin{equation}
q_{a}(r,s,d)=\# \Big\{ \{\Delta_{j},\Delta_{k}\}\in \mathcal{P}_{2},\, l_{j}\in [r\rho^{\beta},(r+1)\rho^{\beta}),\, l_{k}\in [s\rho^{\beta},(s+1)\rho^{\beta}),\, \textbf{d}_{j,k}\in [d\rho^{\beta},(d+1)\rho^{\beta})   \Big\}.
\end{equation}
\end{enumerate}
Let $ \Gamma_{2}^{\beta} $ to be the sequence of approximated energy levels.
\begin{lemma}\label{dinfty}
Recall the definition of the Hausdorff distance $ d_{\infty} $ on $ \mathcal{P}(\R) $. For $ (A,B)\in \mathcal{P}(\R)^{2} $,
$$ d_{\infty}(A,B):=\sup_{a\in A} \inf_{b\in B} \vert a-b\vert. $$
For $ \beta>1 $, there exists $ C>0 $ such that
$$ d_{\infty}(\Gamma_{2},\Gamma_{2}^{\beta})\leq C\rho^{\beta}. $$
\end{lemma}
\begin{proof}\textit{(of Lemma \ref{dinfty})}
By construction of $ \Gamma_{2}^{\beta} $, from $ x\in \Gamma_{2} $ we compute $ x^{\beta}\in\Gamma_{2}^{\beta} $. We study the cases separately.
\begin{enumerate}
\item[(i)] Either $ x^{\beta}=f_{1}(k) $, then $ x $ belongs to $ [f_{1}(k+1),f_{1}(k)] $. Note that
\begin{equation}\label{eq1}
 f_{1}(k)-f_{1}(k+1)=\frac{2\pi^{2}}{k^{3}\rho^{2\beta}}+O\Big(\frac{1}{k^{4}\rho^{2\beta}}\Big),
\end{equation}
\item[(ii)] Or $ x^{\beta}=f_{2}(k) $, then $ x $ belongs to $ [f_{2}(k+1),f_{2}(k)] $. Using (\ref{e(l,2)}), one  computes that
\begin{equation}\label{eq2}
 f_{2}(k)-f_{2}(k+1)= \frac{8\pi^{2}}{k^{3}\rho^{2\beta}}+O\Big(\frac{1}{k^{4}\rho^{2\beta}}\Big),
\end{equation}
\item[(iii)] Or $ x^{\beta}=g_{1}(r,s,d) $. Without lost of generality, assume that $ r<s $. Then $ x $ belongs to $ [g_{1}(r,s+1,d),g_{2}(r,s,d)] $. Using (\ref{e(j,k,1)}), one computes that
\begin{equation}\label{eq3}
g_{1}(r,s,d)-g_{1}(r,s+1,d)=\frac{2\pi^{2}}{s^{3}\rho^{2\beta}}+O\Big(\frac{1}{s^{4}\rho^{2\beta}}\Big),
\end{equation}
\item[(iv)] Or  $ x^{\beta}=g_{2}(r,s,d) $. Without lost of generality, assume that $ r<s $.  Then $ x $ belongs to $ [g_{2}(r+1,s,d),g_{2}(r,s,d)] $. Using (\ref{e(j,k,2)}), one computes that
\begin{equation}\label{eq4}
g_{2}(r,s,d)-g_{2}(r+1,s,d)=\frac{2\pi^{2}}{r^{3}\rho^{2\beta}}+O\Big(\frac{1}{r^{4}\rho^{2\beta}}\Big),
\end{equation}
\end{enumerate}
So
$$ \inf_{b\, \in \Gamma_{2}^{\beta}}\vert x-b \vert \leq C\frac{1}{r^{3}\rho^{2\beta}} $$
Since $ k $ (resp. $ r $ and $ s $) is of order $ O(l_{\rho,U}\rho^{-\beta}) $, we conclude
$$ \forall x\in \Gamma_{2} \qquad \inf_{b\, \in \Gamma_{2}^{\beta}}\vert x-b \vert \leq C\rho^{\beta} $$
\end{proof}
By Lemma \ref{nbUpsilon} and Lemma \ref{dinfty}, for $ \beta>1 $,
\begin{align}
\frac{1}{L}\Bigg\vert\# \bigg\{ x\in \Gamma_{2}, x\in (-\infty,\lambda]\bigg\}-\# \bigg\{ x\in \Gamma_{2}^{\beta}, x\in (-\infty,\lambda]\bigg\}\Bigg\vert&\leq \frac{\# \Gamma_{2}}{L}d_{\infty}(\Gamma_{2},\Gamma_{2}^{\beta})\\
&\leq C\rho^{\beta+1}\nonumber
\end{align}
Let $ N^{U}_{2,\beta} $ be the counting function of $ \Gamma_{2}^{\beta} $. Then, for $ \beta>1 $,
\begin{equation}\label{UUbeta}
N^{U}_{2}(\lambda)= N^{U}_{2,\beta}(\lambda)+O(\rho^{\beta+1}). 
\end{equation}

We estimate  $ N^{U}_{2,\beta} $ the counting function of $ \Gamma_{2}^{\beta} $. Set $ \lambda\in (\min \Gamma_{2}^{\beta},\max \Gamma_{2}^{\beta})  $. We translate the condition "energy level smaller than $ \lambda $" in term of bounds for the parameters of the discretization. For $ k\in \llbracket K_{1}, K_{3}-1 \rrbracket $,
\begin{equation}\label{condit1}
f_{1}(k)\leq \lambda \qquad
\Leftrightarrow \qquad  k\geq \frac{\pi}{\sqrt{\lambda}}\rho^{-\beta}
\end{equation}
Using the asymptotic (\ref{e(l,2)}), for large $ k $, we compute that
\begin{equation}\label{eq5}
f_{2}(k)=\frac{4\pi^{2}}{\Big(k-\frac{\gamma}{8\pi^{2}\rho^{\beta}}\Big)^{2}\rho^{2\beta}}+R_{k}
\end{equation}
with $ R_{k}=o(\frac{1}{k^{3}\rho^{2\beta}}). $

The remainder $ R_{k} $ is negligible with respect to the gap between $ f_{2}(k+1) $ and $ f_{2}(k) $ (see (\ref{eq2})). It yields
\begin{align}\label{condit2}
f_{2}(k)\leq \lambda \qquad
&\Leftrightarrow \qquad \frac{4\pi^{2}}{\Big(k-\frac{\gamma}{8\pi^{2}\rho^{\beta}}\Big)^{2}}\leq \lambda \rho^{2\beta}\\
&\Leftrightarrow \qquad k\geq \Big(\frac{2\pi}{\sqrt{\lambda}}+\frac{\gamma}{8\pi^{2}}\Big)\rho^{-\beta} \nonumber
\end{align}
For $ r,s \in \llbracket K_{1}, K_{2}-1 \rrbracket   $ and $ d\in \llbracket 0, D-1 \rrbracket $, assuming $ r\leq s $
\begin{equation}\label{condit3}
g_{1}(r,s,d)\leq \lambda \qquad
\Leftrightarrow \qquad s\geq \max\Big(r, \frac{\pi}{\sqrt{\lambda}}\rho^{-\beta}\Big)
\end{equation}
Using the asymptotic (\ref{e(j,k,2)}), for large $ r< s $ and $ d\in \llbracket 0, D\rrbracket $, we compute
\begin{equation}\label{eq6}
g_{2}(r,s,d)=\ \frac{\pi^{2}}{\Big(r-\frac{\sigma(d\rho^{\beta})}{2s^{3}\rho^{4\beta}}\Big)^{2}\rho^{2\beta}}+S_{r,s,d}
\end{equation}
with $ S_{r,s,d}=o(\frac{1}{r^{3}\rho^{2\beta}}) $.

The remainder $ S_{r,s,d} $ is negligible with respect to the gap between $ g_{2}(r+1,s,d) $ and $ g_{2}(r,s,d) $ (see (\ref{eq4})). It yields, for large $ r\leq s $,
\begin{align}\label{condit4}
g_{2}(r,s,d)\leq \lambda \qquad
&\Leftrightarrow \qquad \frac{\pi^{2}}{\Big(r-\frac{\sigma(d\rho^{\beta})}{2s^{3}\rho^{4\beta}}\Big)^{2}}\leq \lambda\rho^{2\beta}\\
&\Leftrightarrow \qquad r\geq \Big(\frac{\pi}{\sqrt{\lambda}}+\frac{\sigma(d\rho^{\beta})}{2s^{3}\rho^{3\beta}}\Big)\rho^{-\beta} \nonumber
\end{align}
$ \\ $
Thus, combining (\ref{condit1}), (\ref{condit2}), (\ref{condit3}) and (\ref{condit4}), for $ \lambda\in (\min \Gamma_{2}^{\beta}, \max \Gamma_{2}^{\beta} ) $,
\begin{align}\label{Ubetaa}
\# \bigg\{ x\in \Gamma_{2}^{\beta}, x\in (-\infty,\lambda]\bigg\}&=\sum_{k=k_{1}(\lambda)}^{K_{3}-1}p_{1}(k) + \sum_{k=k_{2}(\lambda)}^{K_{3}-1}p_{2}(k)+\sum_{d=0}^{D-1}\sum_{(r,s)\, \in\,  B(\lambda)}
\varepsilon(r,s) q_{1}(r,s,d) \\
&\quad +\sum_{d=0}^{D-1}\sum_{(r,s)\, \in\,  C(\lambda,d)}\varepsilon(r,s) q_{2}(r,s,d) \nonumber
\end{align}
where $ p_{a}(k) $ (resp. $ g_{a}(r,s,d) $) is given by (\ref{nb4}) (resp. (\ref{nb5})) and
\begin{align*}
k_{1}(\lambda)&:= \bigg\lceil\frac{\pi}{\sqrt{\lambda}}\rho^{-\beta}\bigg\rceil, \quad \qquad k_{2}(\lambda):=\bigg\lceil\Big(\frac{2\pi}{\sqrt{\lambda}}+\frac{\gamma}{8\pi^{2}}\Big)\rho^{-\beta}\bigg\rceil, \\
B(\lambda)&:=\bigg\{ (u,v)\in \llbracket K_{1},K_{2}-1\rrbracket^{2} ,\, v\geq \max\Big(u,\frac{\pi}{\sqrt{\lambda}}\rho^{-\beta}\Big) \bigg\}, \\
C(\lambda,d)&:=\bigg\{ (u,v)\in \llbracket K_{1},K_{2}-1\rrbracket^{2} ,\, v\geq u\geq \Big(\frac{\pi}{\sqrt{\lambda}}+\frac{\sigma(d\rho^{\beta})}{2v^{3}\rho^{3\beta}}\Big)\rho^{-\beta} \bigg\}, \\
\varepsilon(r,s)&:=2 \text{ if } r\neq s \text{ and } \varepsilon(r,s):=1 \text{ otherwise}.
\end{align*}
By Lemma \ref{nb3}, for $ \eta\in (\frac{2}{3},1) $, with probability $ 1-O(L^{-\infty}) $, we have for $ a\in \{1,2\} $ and for $ k,r,s,d $
\begin{align}
p_{a}(k)&=L\big(1-Me^{-l_{\rho,U}}\big)^{2}e^{-k\rho^{\beta}}(1-e^{-\rho^{\beta}})+r_{a}(k)L^{\eta} \\
q_{a}(r,s,d)&=L\big(1-Me^{-l_{\rho,U}}\big)^{2}e^{-(r+s)\rho^{\beta}}\rho^{\beta}\big(1-e^{-\rho^{\beta}}\big)^{2}+ s_{a}(r,s,d)L^{\eta}\nonumber
\end{align}
with $ r_{a}(k) $ and $ s_{a}(r,s,d) $ bounded for every $ k,r,s $ and $ d $.

Using dominated convergence theorem, we get
\begin{align}\label{NU2beta}
N^{U}_{2,\beta}(\lambda)&=\lim_{L\to \infty}\frac{1}{L}\# \bigg\{ x\in \Gamma_{2}^{\beta}, x\in (-\infty,\lambda]\bigg\}\\
&=\Big(1-Me^{-l_{\rho,U}}\Big)^{2}\bigg(\sum_{k=k_{1}(\lambda)}^{K_{3}-1}e^{-k\rho^{\beta}}(1-e^{-\rho^{\beta}})+\sum_{k=k_{2}(\lambda)}^{K_{3}-1}e^{-k\rho^{\beta}}(1-e^{-\rho^{\beta}})\nonumber\\
&\quad +\sum_{d=0}^{D-1}\sum_{(r,s)\, \in B(\lambda)}\varepsilon(r,s)e^{-(r+s)\rho^{\beta}}\rho^{\beta}(1-e^{-\rho^{\beta}})^{2}\nonumber\\
&\quad + \sum_{d=0}^{D-1}\sum_{(r,s)\, \in C(\lambda,d)}\varepsilon(r,s)e^{-(r+s)\rho^{\beta}}\rho^{\beta}(1-e^{-\rho^{\beta}})^{2} \bigg)\nonumber
\end{align}
Let $ \Sigma_{3} $ be the third sum in Equation (\ref{NU2beta}).
\begin{align}
\Sigma_{3}&:=\sum_{d=0}^{D-1}\sum_{(r,s)\, \in B(\lambda)}\varepsilon(r,s)e^{-(r+s)\rho^{\beta}}\rho^{\beta}(1-e^{-\rho^{\beta}})^{2}\\
&=\int_{0}^{D\rho^{\beta}}\bigg( \sum_{(r,s)\, \in B(\lambda)} \int_{r\rho^{\beta}}^{(r+1)\rho^{\beta}}\int_{s\rho^{\beta}}^{(s+1)\rho^{\beta}}2e^{-(u+v)}\,dudv\bigg)dt -D\rho^{\beta}(1-e^{-\rho^{\beta}})^{2} \sum_{r=k_{1}(\lambda)}^{K_{2}-1}e^{-2r\rho^{\beta}} \nonumber \\
&=\int_{0}^{D\rho^{\beta}}\int_{\mathcal{B}_{\beta}(\lambda)}2e^{-(u+v)}dudv - \frac{D}{2}\rho^{2\beta} e^{-2k_{1}(\lambda)\rho^{\beta}}\big(1+o(1)\big) 
\end{align}
where
\[\mathcal{B}_{\beta}(\lambda)=\bigg\{(x,y)\in \Big[K_{1}\rho^{\beta},K_{2}\rho^{\beta}\Big],\, y\geq \max\Big(x,\Big\lceil\frac{\pi}{\sqrt{\lambda}}\rho^{-\beta}\Big\rceil\rho^{\beta}\Big)\bigg\}. \]
Set
\[ \mathcal{B}(\lambda)=\bigg\{(x,y)\in [l_{\rho,U},2l_{\rho,U}]^{2}, y\geq \max\Big(x,\frac{\pi}{\sqrt{\lambda}}\Big)\bigg\}. \]
Using that, for any $ x>0 $,
\[ \Big\vert x-\lceil x\rho^{-\beta}\big\rceil \rho^{\beta} \big\vert \leq \rho^{\beta} \qquad \text{ and } \qquad \Big\vert x-\lfloor x\rho^{-\beta}\big\rfloor \rho^{\beta} \big\vert \leq \rho^{\beta} \]
we get
\begin{align}
\bigg\vert \, \Sigma_{3}-\int_{0}^{M}\int_{\mathcal{B}(\lambda)}2 e^{-(u+v)}\,dtdudv \,  \bigg\vert&\leq \rho^{\beta}\bigg( 2e^{-2l_{\rho,U}}\int_{\mathcal{B}(\lambda)}du+2M\int_{\mathcal{B}(\lambda)\backslash \mathcal{B}_{\beta}(\lambda)}du +Me^{-2l_{\rho,U}} \bigg) \\
&\leq \rho^{\beta}\bigg( 2e^{-2l_{\rho,U}}(2l_{\rho,U})^{2}+8M\rho^{\beta} +Me^{-2l_{\rho,U}} \bigg)\nonumber\\
&\leq C\rho^{\beta}.\nonumber
\end{align}

The other terms in Equation (\ref{NU2beta}) can be handled in much the same way.

So, for $ \lambda\in (\min\Gamma_{2}^{\beta},\max\Gamma_{2}^{\beta}) $,
\begin{align}\label{Ubeta}
N^{U}_{2,\beta}(\lambda)&=\Big(1-Me^{-l_{\rho,U}}\Big)^{2}\bigg(\int_{\frac{\pi}{\sqrt{\lambda}}}^{3l_{\rho,U}}e^{-u}\, du+\int_{\big(\frac{2\pi}{\sqrt{\lambda}}+\frac{\gamma}{8\pi^{2}}\big)}^{3l_{\rho,U}}e^{-u}\,du \\
&\quad +\int_{0}^{M}\int_{\mathcal{B}(\lambda)}2 e^{-(u+v)}\,dtdudv \nonumber \\
&\quad +\int_{0}^{M}\int_{\mathcal{C}(\lambda,t)}2e^{-(u+v)}\,dtdudv\bigg)+O(\rho^{\beta})\nonumber 
\end{align}
where 
\begin{align*}
\mathcal{B}(\lambda)&=\bigg\{(x,y)\in [l_{\rho,U},2l_{\rho,U}]^{2}, y\geq \max\Big(x,\frac{\pi}{\sqrt{\lambda}}\Big)\bigg\}\\
\mathcal{C}(\lambda,d)&=\bigg\{ (x,y)\in [l_{\rho,U},2l_{\rho,U}]^{2} ,\, y\geq x\geq \Big(\frac{\pi}{\sqrt{\lambda}}+\frac{\sigma(d)}{2y^{3}}\Big) \bigg\}.
\end{align*}
Combining (\ref{UUbeta}) and (\ref{Ubeta}), it yields
\begin{equation}\label{NU}
N^{U}_{2}(\lambda)=J(\lambda)+O(\rho^{\beta}).
\end{equation}
where $ J $ is given by (\ref{J}).
It concludes the proof of Proposition \ref{NUlambda}.
\end{proof}

The following corollary states that Assumption \ref{casequal} is true for $ p=2 $.
\begin{corollary}\label{corocasequal}
Set $ \delta\in (0,1) $. For every $ x\in\Gamma_{2} $ and in the thermodynamic limit,
\begin{equation}
\frac{1}{n}\#\big\{ y\in \Gamma_{2},\,  y=x \big\}=O(\rho^{2-\delta})
\end{equation}
\end{corollary}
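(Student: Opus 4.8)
The plan is to read the degeneracy of a level off the jump of the (finite-volume) counting function, and to bound that jump using the expansion $N_2^U = J + O(\rho^{\beta})$ of Proposition \ref{NUlambda} together with the continuity of $J$. For a fixed realization and a fixed large $L$, the map $\lambda\mapsto N_2^U(L,\lambda)=\frac{1}{L}\#\{y\in\Gamma_2:y\le\lambda\}$ is a right-continuous step function whose jump at $x$ equals $\frac{1}{L}\#\{y\in\Gamma_2:y=x\}$. Since $\frac{n}{L}\to\rho$, it therefore suffices to show that, in the thermodynamic limit, this jump is $O(\rho^{\beta})$ for every $\beta>1$; choosing $\beta=3-\delta\in(1,\infty)$ (legitimate since $\delta\in(0,1)$) and dividing by $n/L=\rho\,(1+o(1))$ then gives $\frac{1}{n}\#\{y\in\Gamma_2:y=x\}=O(\rho^{2-\delta})$.

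First I would extract from the \emph{proof} of Proposition \ref{NUlambda} that its remainder is uniform in $\lambda$: collecting the $C\rho^{\beta}$-type bounds appearing there and the $L^{\eta-1}$ fluctuation coming from Lemma \ref{nb3} ($\eta\in(\tfrac23,1)$), one has, with probability $1-O(L^{-\infty})$ and for every $\beta>1$,
\[
N_2^U(L,\lambda) = J(\lambda) + O(\rho^{\beta}) + O(L^{\eta-1}) \qquad \text{uniformly in } \lambda>0 .
\]
Second, $J$ is continuous (indeed locally Lipschitz) in $\lambda$, since the domains $\mathcal{D}_1(\lambda),\dots,\mathcal{D}_4(\lambda,t)$ vary continuously with $\lambda$, hence so do the four integrals defining $J$. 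Applying the displayed estimate at $\lambda=x$ and letting $\lambda\uparrow x$, the $J$-terms cancel by continuity, leaving
\[
\frac{1}{L}\#\{y\in\Gamma_2:y=x\} = N_2^U(L,x) - \lim_{\lambda\uparrow x}N_2^U(L,\lambda) = O(\rho^{\beta}) + O(L^{\eta-1}).
\]
Passing to the thermodynamic limit removes the $O(L^{\eta-1})$ term and yields the bound. Via Lemma \ref{convfor2} (each chain of $\mathcal{P}_2$ contributes at most one element of $\Gamma_2$ equal to a given value), this also gives $\#\mathcal{G}_{2}(r)\le n\rho^{2-\delta}$, i.e.\ Assumption \ref{casequal} for $p=2$.

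The only delicate point is the uniformity of the remainder at the extreme levels $x\approx\min\Gamma_2$ or $x\approx\max\Gamma_2$, where the proof of Proposition \ref{NUlambda} is phrased for $\lambda$ strictly inside $(\min\Gamma_2^{\beta},\max\Gamma_2^{\beta})$; there one checks directly that all four domains $\mathcal{D}_i$ are empty once $\pi/\sqrt{\lambda}>3l_{\rho,U}$, so $J(\lambda)=0$ below the spectrum, and that $J$ saturates at the total mass $\#\Gamma_2/L+O(\rho^{\beta})$ above it, which makes the jump bound valid there too. I do not expect a genuine obstacle: the corollary is essentially a harvest from Proposition \ref{NUlambda}. (Alternatively, for a.e.\ Poisson realization all elements of $\Gamma_2$ are pairwise distinct, since equality of two energy levels attached to distinct chains is a codimension-one condition on the lengths and gaps; but making this quantitative would require control of the $o(\cdot)$ remainders in Propositions \ref{Esolo} and \ref{Epair}, so the counting-function route is cleaner.)
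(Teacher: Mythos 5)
Your proposal is correct and follows essentially the same route as the paper: bound the multiplicity of $x$ by an increment of the counting function, invoke Proposition \ref{NUlambda} together with the continuity of $J$, and divide by $n/L\to\rho$ after choosing $\beta$ large enough. The only (inessential) difference is the order of limits -- you take the jump at fixed $L$ and need a uniform-in-$\lambda$ remainder (whose constant in the $L^{\eta-1}$ term in fact carries a $\rho$-dependent factor from the number of discretization cells, harmless since $\rho$ is fixed as $L\to\infty$), whereas the paper fixes $h>0$, uses only the pointwise limit $N^{U}_{2}(L,x\pm h)\to N^{U}_{2}(x\pm h)$, and sends $h\to 0$ last, thereby using Proposition \ref{NUlambda} exactly as stated.
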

\begin{proof}
Note that each domain of integration in the RHS of (\ref{J})  is smooth for $ \lambda\in (0,+\infty) $. So, $ J $ is continuous on  $ (\min \Gamma_{2},\max \Gamma_{2}) $. By Proposition \ref{NUlambda}, we compute for $ \beta >1 $, $ h>0 $ and $ x\in \Gamma_{2} $
\begin{align}
\frac{1}{n}\#\Big\{ y\in \Gamma_{2},\,  y=x \Big\}&\leq \frac{L}{n} \Big\vert N^{U}_{2}(L,x+h)-N^{U}_{2}(L,x-h)\Big\vert \\
&\leq \frac{L}{n}\bigg( \Big\vert N^{U}_{2}(L,x+h)-N^{U}_{2}(x+h)\Big\vert+\Big\vert N^{U}_{2}(L,x-h)-N^{U}_{2}(x-h)\Big\vert+\nonumber\\
& \qquad \Big\vert J(x+h)-J(x-h)\Big\vert +O(\rho^{\beta})\bigg)\nonumber\\
&\rightarrow_{\substack{L\to +\infty\\ \frac{n}{L}\to \rho}} \frac{1}{\rho}\Big\vert J(x+h)-J(x-h)\Big\vert +O(\rho^{\beta-1})
\end{align}
Taking $ \beta>2 $ and $ h\rightarrow 0 $, we conclude the proof of Corollary \ref{corocasequal}.
\end{proof}

\subsection{Construction of an approximated ground state}\label{approximatestate}

We use the counting function $ N^{U} $ to build an approximate ground state for $ H^{U}(\Lambda,n) $.

Note that, for $ d\in [0,M] $ and $ \min \Gamma_{2}<\lambda < \mu <\max \Gamma_{2} $,
\begin{equation}
\forall i\in \{1,2,3\} \quad \mathcal{D}_{i}(\lambda)\varsubsetneq \mathcal{D}_{i}(\mu )\qquad \text{ and }\qquad \mathcal{D}_{4}(\lambda,d)\varsubsetneq \mathcal{D}_{4}(\mu,d).
\end{equation}
So $ J $ is increasing on $ (\min \Gamma_{2},\max \Gamma_{2}) $. Remark also that, by Lemma \ref{nbUpsilon}, we have, for $ \lambda> \max \Gamma_{2} $, $ N^{U}_{2}(\lambda)>2\rho $ and, for $ 0<\lambda<\min \Gamma_{2} $, $ N^{U}_{2}(\lambda)=0 $.
Hence, by Proposition \ref{NUlambda} and the continuity of $ J $, for a fixed $ \beta>2 $, there exists a unique $ \lambda^{\beta}_{\rho}\in (\min \Gamma_{2}, \max \Gamma_{2}) $ such that $ J(\lambda^{\beta}_{\rho})=\rho-R_{\beta+1} $ or equivalently
\begin{equation}\label{NUkappa}
 N^{U}_{2}(\lambda^{\beta}_{\rho})=\rho.
\end{equation}
This unique $ \lambda_{\rho}^{\beta} $ is our \textit{Fermi energy level}. 

Consider all energy levels of $ \Gamma_{2} $ below $ \lambda_{\rho}^{\beta} $ and fill the chains by induction following the proof of the Proposition \ref{levelsomme}. Then, by definition, we get an occupation $ Q^{\beta} $ for which the number of particles in $\mathcal{P}_{2} $ is equal to $ n_{Q^{\beta}}=\min(n,LN^{U}_{2}(L,\lambda^{\beta}_{\rho})) $.
For $ L $ large enough (that depends on $ \rho $ and $ \beta $), 
\begin{equation}
\vert N^{U}_{2}(L,\lambda_{\rho}^{\beta})- N^{U}_{2}(\lambda_{\rho}^{\beta})\vert \leq \rho^{\beta+1}.
\end{equation}
So, using (\ref{NUkappa}), in the thermodynamic limit, the number of particles in the chains of $ \mathcal{N}_{2} $ is less than $ Cn\rho^{\beta} $ for some constant $ C>0 $.
Remembering $ \beta>2 $ and the left inequality of (\ref{encadrement}), for $ \rho $ small enough, one can set the restriction $ Q^{\beta}_{\vert \mathcal{N}_{2}} $ so that the occupation $ Q^{\beta} $ belongs to  $ \mathfrak{Q} $.

Set $ \delta_{\rho}^{\beta}=\frac{\pi}{\sqrt{\lambda_{\rho}^{\beta}}} $. Using Proposition \ref{NUlambda} and more specifically the R.H.S of (\ref{J}), one can get an approximate description of $ Q^{\beta} $ in term of the pieces' lengths and $ l_{\rho}^{\beta} $. Disregarding $ O(n\rho^{\beta}) $ particles, it means that
\begin{enumerate}
\item[$ \bigstar $] for a piece $ \Delta_{i} \in \mathcal{P}_{2} $
\begin{enumerate}
\item if $ l_{i}<\delta_{\rho}^{\beta} $, then $ q^{\beta}_{i}=0 $
\item if $ l_{i}\in \Big[\delta_{\rho}^{\beta},\, 2\delta_{\rho}^{\beta}+\frac{\gamma}{8\pi^{2}}\Big) $ then $ q^{\beta}_{i}=1 $ 
\item if $ l_{i}\geq 2\delta_{\rho}^{\beta}+\frac{\gamma}{8\pi^{2}} $ then $ q^{\beta}_{i}=2 $;
\end{enumerate} 
\item[$ \bigstar $] for a pair $ (\Delta_{j},\Delta_{k})\in \mathcal{P}_{2} $, assume $ l_{j}\leq l_{k} $
\begin{enumerate}
\item if $ l_{k}<\delta_{\rho}^{\beta} $ then $ q^{\beta}_{j}=q^{\beta}_{k}=0 $,
\item if $ l_{j}\in \Big[\delta_{\rho}^{\beta},\, \delta_{\rho}^{\beta}+\frac{\sigma(\textbf{d}_{j,k})}{2l_{k}^{3}}\Big) $ then $ q^{\beta}_{j}=0 $ and $ q^{\beta}_{k}=1 $
\item if $ l_{j}\geq \delta_{\rho}^{\beta}+\frac{\sigma(\textbf{d}_{j,k})}{2l_{k}^{3}} $ then  $ q^{\beta}_{j}=q^{\beta}_{k}=1 $
\end{enumerate}  
\end{enumerate}

We can compare the occupation $ Q^{\beta} $ with the occupation of the free operator $ Q^{0} $. Recall that in $ Q^{0} $ there are $ k $ particles in pieces of length between $ kl_{\rho} $ and $ (k+1)l_{\rho} $ where $ l_{\rho} $ is given by (\ref{lrho}). We compute
\begin{align}
\int_{\mathcal{D}_{1}(E_{\rho})}e^{-u}\, du+\int_{\mathcal{D}_{2}(E_{\rho})}e^{-u}\,du&=e^{-l_{\rho}}-e^{-3l_{\rho,U}}+e^{-2l_{\rho}-\frac{\gamma}{8\pi^{2}}}-e^{-3l_{\rho,U}}\\
&=\rho\Big(1-\rho+O(\rho^{2})\Big)\Big(1+e^{-\frac{\gamma}{8\pi^{2}}}\rho+O(\rho^{2})\Big),\nonumber \\
\int_{0}^{M}\int_{\mathcal{D}_{3}(E_{\rho})}2 e^{-(u+v)}\,dtdudv&=2M\int_{l_{\rho,U}}^{l_{\rho}}\int_{l_{\rho}}^{2l_{\rho,U}}e^{-(u+v)}\,dudv+2M\int_{2l_{\rho,U}\geq v\geq u\geq l_{\rho}}e^{-(u+v)}\,dudv\\
&=M\rho^{2}\big(1+O(\rho)\big),\nonumber \\ 
\int_{0}^{M}\int_{\mathcal{D}_{4}(E_{\rho},t)}2e^{-(u+v)}\,dtdudv& \leq 2M\int_{2l_{\rho,U}\geq v\geq u\geq l_{\rho}}e^{-(u+v)}\,dudv\\
&=M\rho^{2}\big(1+O(\rho)\big).\nonumber
\end{align}
So,
\begin{align}
N^{U}_{2}(E_{\rho})&\leq \rho\Big(1-2M\rho+O(\rho^{2})\Big)\bigg(\Big(1-\rho+O(\rho^{2})\Big)\Big(1+e^{-\frac{\gamma}{8\pi^{2}}}\rho+O(\rho^{2})\Big)+2M\rho\Big(1+O(\rho)\Big)\bigg)\\
&= \rho \Big(1+\rho\big(e^{-\frac{\gamma}{8\pi^{2}}}-1\big)+O(\rho^{2})\Big)\nonumber\\
&<\rho. \nonumber
\end{align}
Thus, $ E_{\rho}<\lambda_{\rho}^{\beta} $ meaning that $ l_{\rho,U}<\delta_{\rho}^{\beta}<l_{\rho} $. For $ \rho $ small enough, $ 2l_{\rho,U}+\frac{\gamma}{8\pi^{2}} \geq 2l_{\rho} $ so $ 2\delta_{\rho}^{\beta}+\frac{\gamma}{8\pi^{2}} \geq 2l_{\rho} $.
It means that when interactions are on, we remove one particle from pieces of length close to $ 2l_{\rho} $ but larger and put it in empty pieces of length close to $ l_{\rho} $ but smaller. Similarly, for pair of pieces of length close to $ l_{\rho} $, one takes one particle out of the pair to fill a smaller piece that does not interact.

$ \\ $

Hence, using (\ref{statefull}) and (\ref{statedecompchain}), we define the approximated ground state
\begin{equation}\label{betastate1}
\Psi^{\beta}(\Lambda,n)=\Psi^{U}(\Lambda,n,Q^{\beta}). 
\end{equation}

\begin{proposition}\label{thlimbeta}
Using the notations of Proposition \ref{NUlambda}, define the map $ \mathcal{J} $ by
\begin{align}\label{Jcurv}
\mathcal{J}(\lambda)&=L(1-Me^{-l_{\rho,U}})^{2}\bigg(\int_{\mathcal{D}_{1}(\lambda_{\rho}^{\beta})}f^{U}([0,u],1)e^{-u}\, du+\int_{\mathcal{D}_{2}(\lambda_{\rho}^{\beta})}f^{U}([0,u],2)e^{-u}\,du \\
&\quad +\int_{0}^{M}\int_{\mathcal{D}_{3}(\lambda_{\rho}^{\beta})}2 e^{-(u+v)}f^{U}(\{[-u,0],[t,v+t]\},1)\,dtdudv \nonumber \\
&\quad +\int_{0}^{M}\int_{\mathcal{D}_{4}(\lambda_{\rho}^{\beta},t)}2e^{-(u+v)}f^{U}(\{[-u,0],[t,v+t]\}2)\,dtdudv
\bigg).\nonumber
\end{align}
For $ \beta>2 $, for $ \lambda_{\rho}^{\beta} $ and $ \Psi^{\beta}(\Lambda,n) $ defined as above, for $ \delta\in (0,1) $ and $ 0<\rho<\rho_{\delta} $ small enough, then, in the thermodynamic limit, with probability $ 1 $,
\begin{equation}\label{energyapp}
\lim_{\substack{L\to +\infty\\ \frac{n}{L}\to \rho}}\frac{\big\langle H^{U}(\Lambda,n)\Psi^{\beta}(\Lambda,n),\Psi^{\beta}(\Lambda,n)\big\rangle}{n}=\frac{1}{\rho}\mathcal{J}(\lambda_{\rho}^{\beta}) +O(\rho^{2-\delta}).
\end{equation}
\end{proposition}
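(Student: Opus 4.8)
The starting point is the identity
\[
\big\langle H^{U}(\Lambda,n)\Psi^{\beta}(\Lambda,n),\Psi^{\beta}(\Lambda,n)\big\rangle=E^{U}(\Lambda,n,Q^{\beta})=E^{U}_{\mathcal{P}_{2}}(Q^{\beta})+E^{U}_{\mathcal{N}_{2}}(Q^{\beta}),
\]
which follows from (\ref{betastate1}), the invariance of the decomposition (\ref{decomposition}) under $H^{U}(\Lambda,n)$, and (\ref{energystatefull}). Since $Q^{\beta}\in\mathfrak{Q}$ by construction, Proposition \ref{leftover} applied with $p=2$ gives $E^{U}_{\mathcal{N}_{2}}(Q^{\beta})\le n\rho^{2-\delta}$, hence a contribution $O(\rho^{2-\delta})$ to the energy per particle. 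So the whole problem reduces to the asymptotics of $\frac{1}{n}E^{U}_{\mathcal{P}_{2}}(Q^{\beta})$.

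For that term, Lemma \ref{convfor2} tells us that Assumption \ref{convex} holds for $p=2$, so the inductive filling of Proposition \ref{levelsomme} (summarised in Corollary \ref{case2}) applies. By the very construction of $Q^{\beta}$ --- fill the chains of $\mathcal{P}_{2}$ with all the energy levels of $\Gamma_{2}$ lying below $\lambda_{\rho}^{\beta}$, truncated to at most $n$ particles --- we get
\[
E^{U}_{\mathcal{P}_{2}}(Q^{\beta})=\sum_{x\in\Gamma_{2},\ x\le\lambda_{\rho}^{\beta}}x\;+\;O\big(|n-n_{Q^{\beta}}|\,l_{\rho,U}^{-2}\big),
\]
the error coming from the $|n-n_{Q^{\beta}}|$ levels near the threshold, each of order $l_{\rho,U}^{-2}$ since the pieces of a chain in $\mathcal{P}_{2}$ have length in $[l_{\rho,U},3l_{\rho,U}]$. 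By the discussion preceding the statement (which uses (\ref{NUkappa}), the relation $n_{Q^{\beta}}=\min(n,LN^{U}_{2}(L,\lambda_{\rho}^{\beta}))$ and the left inequality of (\ref{encadrement})), one has $|n-n_{Q^{\beta}}|=O(n\rho^{\beta})$, so this error is $o(n\rho^{2-\delta})$ for $\beta>2$.

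It remains to compute $\frac{1}{L}\sum_{x\in\Gamma_{2},\ x\le\lambda}x$ in the thermodynamic limit. I would reproduce verbatim the discretisation of the geometric parameters (piece lengths and gaps) with step $\rho^{\beta}$ used in the proof of Proposition \ref{NUlambda}, producing the same family $\Gamma_{2}^{\beta}$. By Lemma \ref{dinfty} one has $d_{\infty}(\Gamma_{2},\Gamma_{2}^{\beta})=O(\rho^{\beta})$; together with $\#\Gamma_{2}/L=O(\rho)$ (Lemma \ref{nbUpsilon}), the uniform bound $x=O(l_{\rho,U}^{-2})$ for $x\in\Gamma_{2}$, and the continuity of $J$ established in Corollary \ref{corocasequal} (which rules out an accumulation of more than $O(L\rho^{\beta})$ levels in any window of width $O(\rho^{\beta})$ around $\lambda_{\rho}^{\beta}$), pairing $x\leftrightarrow x^{\beta}$ shows that the weighted truncated sums for $\Gamma_{2}$ and for $\Gamma_{2}^{\beta}$ differ by $O(\rho^{\beta})$ after division by $L$. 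For $\Gamma_{2}^{\beta}$ the multiplicities $p_{a}(k)$, $q_{a}(r,s,d)$ are given by the Poisson statistics of Lemma \ref{nb3}, the conditions ``level $\le\lambda$'' translate into the parameter conditions (\ref{condit1})--(\ref{condit4}) defining $\mathcal{D}_{1},\dots,\mathcal{D}_{4}$, and each term now carries the extra weight $f^{U}([0,u],a)$ or $f^{U}(\{[-u,0],[t,v+t]\},a)$; a dominated convergence and Riemann-sum argument identical to the one leading to (\ref{Ubeta}) then gives $\frac{1}{L}\sum_{x\in\Gamma_{2},\ x\le\lambda}x\to\mathcal{J}(\lambda)+O(\rho^{\beta})$. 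Taking $\lambda=\lambda_{\rho}^{\beta}$, multiplying by $L/n\to1/\rho$, and collecting all the error terms (each $o(\rho^{2-\delta})$ for $\beta>2$) yields (\ref{energyapp}). The almost sure statement follows in the usual way: each ingredient (Lemma \ref{lmin}, Proposition \ref{leftover}, the Poisson counts of Lemma \ref{nb3}) holds with probability $1-O(L^{-\infty})$, so by Borel--Cantelli it holds for all large $L$, and the empirical distributions of the Poisson points converge almost surely.

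The step I expect to be the main obstacle is this last comparison between the weighted truncated sums of $\Gamma_{2}$ and $\Gamma_{2}^{\beta}$: in Proposition \ref{NUlambda} one merely counts levels, whereas here a single level displaced across $\lambda_{\rho}^{\beta}$ changes the sum by its full value $\sim l_{\rho,U}^{-2}$, so one genuinely needs a quantitative control of the number of levels in a $O(\rho^{\beta})$-window around $\lambda_{\rho}^{\beta}$, i.e. the local Lipschitz character of $J$; all remaining steps are a bookkeeping adaptation of the already established Propositions \ref{leftover} and \ref{NUlambda}.
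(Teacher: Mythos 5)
Your route is the paper's own: write $\big\langle H^{U}\Psi^{\beta},\Psi^{\beta}\big\rangle=E^{U}_{\mathcal{P}_{2}}(Q^{\beta})+E^{U}_{\mathcal{N}_{2}}(Q^{\beta})$ via (\ref{energystatefull}), dispose of the $\mathcal{N}_{2}$ part with Proposition \ref{leftover}, and evaluate $E^{U}_{\mathcal{P}_{2}}(Q^{\beta})$ as the sum of the levels of $\Gamma_{2}$ below $\lambda_{\rho}^{\beta}$ through the discretization $\Gamma_{2}^{\beta}$ of Proposition \ref{NUlambda}, now weighted by the level values; your treatment of the truncation $\vert n-n_{Q^{\beta}}\vert=O(n\rho^{\beta})$ is also fine. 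You moreover correctly isolate the delicate point, namely that a level displaced across $\lambda_{\rho}^{\beta}$ contributes its full value $\sim l_{\rho,U}^{-2}$ and not merely its $O(\rho^{\beta})$ displacement.

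However, your resolution of that point is quantitatively too weak for the statement as quantified. You bound the number of levels of $\Gamma_{2}$ in a window of width $O(\rho^{\beta})$ around $\lambda_{\rho}^{\beta}$ by $O(L\rho^{\beta})$ --- which is indeed all one can extract from the statement of Proposition \ref{NUlambda}, since its remainder $R_{\beta}$ is itself $O(\rho^{\beta})$, and Corollary \ref{corocasequal} only controls exact coincidences, not a window. With each such level of size $O(l_{\rho,U}^{-2})$, your crossing error is $O\big(L\rho^{\beta}l_{\rho,U}^{-2}\big)$, i.e.\ $O\big(\rho^{\beta-1}l_{\rho,U}^{-2}\big)$ per particle after dividing by $n\sim L\rho$. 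This is \emph{not} $O(\rho^{2-\delta})$ whenever $\beta-1<2-\delta$, that is for any $\beta\in(2,3-\delta)$, so your closing claim that every error term is $o(\rho^{2-\delta})$ for all $\beta>2$ does not follow from your own bounds, whereas the proposition is asserted for every $\beta>2$ and every $\delta\in(0,1)$. The repair is to count the near-threshold levels inside the discretization itself rather than through $N^{U}_{2}$: the cells of $\Gamma_{2}^{\beta}$ whose representative level lies within $C\rho^{\beta}$ of $\lambda_{\rho}^{\beta}$ number $O(l_{\rho,U}^{3})$ for single pieces (consecutive cell levels are spaced by $\approx\rho^{\beta}l_{\rho,U}^{-3}$, see (\ref{eq1})--(\ref{eq2})) and $O(l_{\rho,U}^{4}\rho^{-2\beta})$ for pairs, while by Lemma \ref{nb3} each cell carries $O(L\rho^{1+\beta})$, resp.\ $O(L\rho^{2+3\beta})$, chains; multiplying by the level size $O(l_{\rho,U}^{-2})$ gives a crossing error $O\big(L\rho^{\beta+1}l_{\rho,U}\big)$, which is the $O(L\rho^{\beta+1})$ (up to logarithms) remainder recorded in the paper and is $o(n\rho^{2-\delta})$ for every $\beta>2$. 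Alternatively you could restrict to $\beta\geq 3-\delta$ (or $\beta>3$, as in Proposition \ref{thlimU}), but then you would not have proved the proposition in the stated range.
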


\begin{proof}
Fix $ \beta>2 $. By construction of $ \Psi^{\beta}(\Lambda,n) $ and using (\ref{energystatefull}), we write
\begin{equation}
\big\langle H^{U}(\Lambda,n)\Psi^{\beta}(\Lambda,n),\Psi^{\beta}(\Lambda,n)\big\rangle=E^{U}(\Lambda,n,Q^{\beta})= E^{U}_{\mathcal{P}_{2}}(Q^{\beta})+E^{U}_{\mathcal{N}_{2}}(Q^{\beta})
\end{equation}
By Proposition \ref{leftover}, we know that, for $ \delta\in (0,1) $ and $ \rho\in (0,\rho_{\delta}), $
\begin{equation}
E^{U}_{\mathcal{N}_{2}}(Q^{\beta})\leq n\rho^{2-\delta}
\end{equation}
It gives the amount of energy produced by particles we do not control precisely. One can check that it fits with the remaining part in (\ref{energyapp}).

Otherwise, we compute $ E_{\mathcal{P}_{2}}(Q^{\beta}) $ using $ \Gamma_{2}^{\beta} $, the approximate sequence of levels of energy for the good pieces that we introduced in the proof of Proposition \ref{NUlambda}. Following the method and the notations of Proposition \ref{NUlambda}, one derives the next formula. With probability $ 1-O(L^{-\infty}) $ and $ \eta\in (\frac{2}{3},1) $,
\begin{align}
 E_{\mathcal{P}_{2}}(Q^{\beta})
 &= L(1-Me^{-l_{\rho,U}})^{2}\bigg(\int_{\mathcal{D}_{1}(\lambda_{\rho}^{\beta})}f^{U}([0,u],1)e^{-u}\, du+\int_{\mathcal{D}_{2}(\lambda_{\rho}^{\beta})}f^{U}([0,u],2)e^{-u}\,du \\
&\quad +\int_{0}^{M}\int_{\mathcal{D}_{3}(\lambda_{\rho}^{\beta})}2 e^{-(u+v)}f^{U}(\{[-u,0],[t,v+t]\},1)\,dtdudv \nonumber \\
&\quad +\int_{0}^{M}\int_{\mathcal{D}_{4}(\lambda_{\rho}^{\beta},t)}2e^{-(u+v)}f^{U}(\{[-u,0],[t,v+t]\}2)\,dtdudv
\bigg) +O(L\rho^{\beta+1})+O(L^{\eta}).\nonumber
\end{align}

Thus, in the thermodynamic limit, one derives
\begin{align}
\lim_{\substack{L\to +\infty\\ \frac{n}{L}\to \rho}}\frac{\big\langle H^{U}(\Lambda,n)\Psi^{\beta}(\Lambda,n),\Psi^{\beta}(\Lambda,n)\big\rangle}{n}&=\frac{1}{\rho}\mathcal{J}(\lambda_{\rho}^{\beta})+O(\rho^{2-\delta})
\end{align}
It concludes the proof of Proposition \ref{thlimbeta}.
\end{proof}

\begin{remark}
One could also set
\begin{equation}\label{betastate2}
\Psi^{\beta}(\Lambda,n)=\bigg(\bigwedge_{I\, \in \,  \mathcal{P}_{2}}\psi^{U}\Big(I,(q_{i}^{\beta})_{i\in I}\Big)\bigg)\wedge \bigg(\bigwedge_{I\, \in \, \mathcal{N}_{2}}\bigwedge_{i\in I} \psi^{0}\Big(\Delta_{i},q_{i}^{\beta}\Big)\bigg)
\end{equation}
meaning that, outside of $ \mathcal{P}_{2} $, it behaves like a free state. By Remark \ref{remark1}, both states (\ref{betastate1}) and (\ref{betastate2}) give, up to the order $ O(\rho^{2-\delta}) $, the same amount of energy per particle in the thermodynamic limit.
\end{remark}

\subsection{Comparing the ground state energy to the approximated ground state energy}\label{compenergy}
We compare our approximate ground state energy with the ground state energy, in the thermodynamic limit.

\begin{proposition}\label{thlimU}
For $ L>0 $, let $ \Psi^{U}(\Lambda,n) $ be a ground state of $ H^{U}(\Lambda,n) $. For $ \delta\in (0,1) $ and $ \beta>3 $, the approximated ground state $ \Psi^{\beta}(L,n) $, given in Subsection \ref{approximatestate},  satisfies in the thermodynamic limit, with probability $ 1-O(L^{-\infty}) $,
\begin{equation}
\frac{\langle H^{U}(\Lambda,n)\Psi^{U}(\Lambda,n),\Psi^{U}(\Lambda,n)\rangle}{n}=\frac{\langle H^{U}(\Lambda,n)\Psi^{\beta}(L,n),\Psi^{\beta}(L,n)\rangle}{n}+O(\rho^{2-\delta}).
\end{equation} 
\end{proposition}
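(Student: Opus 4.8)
The plan is to show a two-sided inequality, matching the ground state energy $E^{U}(\Lambda,n)$ and the test-state energy $\langle H^{U}(\Lambda,n)\Psi^{\beta},\Psi^{\beta}\rangle$ up to $O(n\rho^{2-\delta})$. The direction $E^{U}(\Lambda,n)\leq \langle H^{U}\Psi^{\beta},\Psi^{\beta}\rangle$ is trivial since $\Psi^{\beta}$ is an admissible test function. The content is the reverse inequality, \emph{i.e.} a lower bound on the ground state energy. First I would invoke Lemma~\ref{lmin} to restrict attention to occupations $Q\in\mathfrak{Q}$, so that any ground state decomposes along chains and $E^{U}(\Lambda,n,Q)=E^{U}_{\mathcal{P}_{2}}(Q)+E^{U}_{\mathcal{N}_{2}}(Q)$ as in (\ref{energystatefull}). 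Writing $n_{Q}$ for the number of particles carried by $\mathcal{P}_{2}$, Lemma~\ref{occupoutofchains} (with $p=2$) gives $n-n\rho^{2-\delta}\leq n_{Q}\leq n$, and Proposition~\ref{leftover} bounds $E^{U}_{\mathcal{N}_{2}}(Q)\leq n\rho^{2-\delta}$; the same bound controls $E^{U}_{\mathcal{N}_{2}}(Q^{\beta})$. So both energies differ from their $\mathcal{P}_{2}$-parts by $O(n\rho^{2-\delta})$, and it remains to compare $E^{U}_{\mathcal{P}_{2}}(Q)$ with $E^{U}_{\mathcal{P}_{2}}(Q^{\beta})$.

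\medskip

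Next I would use the convexity input. By Lemma~\ref{convfor2}, Assumption~\ref{convex} holds for $p=2$, so Corollary~\ref{case2} applies: for any admissible $Q$ with $n_{Q}=r$ particles in $\mathcal{P}_{2}$, one has $E^{U}_{\mathcal{P}_{2}}(Q)\geq \sum_{k=1}^{r}a_{k}$, where $(a_{k})$ is the nondecreasing enumeration of $\Gamma_{2}$. Since the ground state minimizes the total energy, and since moving particles between $\mathcal{P}_{2}$ and $\mathcal{N}_{2}$ costs energy bounded below by Proposition~\ref{leftover}-type estimates, a ground state occupation has $n_{Q}=r_{0}$ for some $r_{0}\in\llbracket n-n\rho^{2-\delta},n\rrbracket$, and therefore $E^{U}(\Lambda,n)\geq \sum_{k=1}^{r_{0}}a_{k}-O(n\rho^{2-\delta})$. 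On the other hand $Q^{\beta}$ was built precisely by filling chains with all energy levels of $\Gamma_{2}$ below the Fermi level $\lambda_{\rho}^{\beta}$, so $E^{U}_{\mathcal{P}_{2}}(Q^{\beta})=\sum_{k=1}^{n_{Q^{\beta}}}a_{k}$ with $n_{Q^{\beta}}=\min\{n,LN^{U}_{2}(L,\lambda_{\rho}^{\beta})\}$, and by construction (see (\ref{NUkappa}) and the large-$L$ estimate in Subsection~\ref{approximatestate}) $|n_{Q^{\beta}}-n|=O(n\rho^{\beta})=O(n\rho^{2-\delta})$. Thus both $r_{0}$ and $n_{Q^{\beta}}$ lie within $O(n\rho^{2-\delta})$ of $n$.

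\medskip

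The remaining point is that the partial sums $\sum_{k=1}^{r}a_{k}$ agree to the required precision when $r$ ranges over an interval of length $O(n\rho^{2-\delta})$ near $n$. Each individual energy level $a_{k}$ is $O(l_{\rho,U}^{-2})=O(|\log\rho|^{-2})$, being of the form $f^{U}(I,\kappa)$ with $I\in\mathcal{P}_{2}$ and $\kappa\in\{1,2\}$ (see Lemma~\ref{convfor2}), so adding or removing $O(n\rho^{2-\delta})$ terms changes the sum by $O(n\rho^{2-\delta}|\log\rho|^{-2})=o(n\rho^{2-\delta})$. Hence
\[
\Big|\sum_{k=1}^{r_{0}}a_{k}-\sum_{k=1}^{n_{Q^{\beta}}}a_{k}\Big|=O(n\rho^{2-\delta}),
\]
and combining with the two preceding paragraphs,
\[
\langle H^{U}(\Lambda,n)\Psi^{U}(\Lambda,n),\Psi^{U}(\Lambda,n)\rangle
= E^{U}_{\mathcal{P}_{2}}(Q^{\beta})+O(n\rho^{2-\delta})
= \langle H^{U}(\Lambda,n)\Psi^{\beta}(L,n),\Psi^{\beta}(L,n)\rangle+O(n\rho^{2-\delta}).
\]
Dividing by $n$ and passing to the thermodynamic limit (all estimates holding with probability $1-O(L^{-\infty})$ via Propositions~\ref{nb1}--\ref{nb3}) yields the claim.

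\medskip

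\emph{Main obstacle.} The delicate step is pinning down $n_{Q}$ for a genuine ground state occupation and controlling the energy cost of the particles in $\mathcal{N}_{2}$ \emph{from below} — Proposition~\ref{leftover} is an upper bound, so to conclude that a minimizer does not dump many particles into $\mathcal{P}_{2}$ (thereby inflating $E^{U}_{\mathcal{P}_{2}}$ past the point where the comparison closes) one must argue that the marginal cost of each extra particle in $\mathcal{P}_{2}$, namely the next available level $a_{r}$ with $r$ near $n$, exceeds the cost saved in $\mathcal{N}_{2}$; this is where the monotonicity from Corollary~\ref{case2} together with the lower bound $a_{r}\geq \pi^{2}/(3l_{\rho,U})^{2}\sim |\log\rho|^{-2}$ and the upper bound of Lemma~\ref{occupoutofchains} must be balanced carefully against the requirement $\beta>3$.
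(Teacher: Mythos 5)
Your proposal is correct and takes essentially the same route as the paper's proof: the variational inequality for the test state $\Psi^{\beta}$, the lower bound $\langle H^{U}\Psi^{U},\Psi^{U}\rangle\geq \min_{\mathfrak{Q}}E^{U}_{\mathcal{P}_{2}}$, Proposition \ref{leftover} for the $\mathcal{N}_{2}$ contribution, and Proposition \ref{levelsomme} (via Corollary \ref{case2}) together with Lemma \ref{occupoutofchains} to identify both $\mathcal{P}_{2}$-energies as partial sums of the ordered levels $a_{k}$ differing in at most $O(n\rho^{2-\delta})$ terms, each of size at most $\max\Gamma_{2}=O(l_{\rho,U}^{-2})$, exactly as in (\ref{enca1})--(\ref{enca2}). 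The ``main obstacle'' you flag in fact dissolves: since $E^{U}_{\mathcal{N}_{2}}\geq 0$ and every $Q\in\mathfrak{Q}$ satisfies $n_{Q}\geq n-n\rho^{2-\delta}$, no lower bound on the $\mathcal{N}_{2}$ energy and no precise identification of the ground state's $n_{Q}$ is needed, which is precisely how the paper closes the estimate.
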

\begin{proof}
We drop the indices "$ \Lambda $" and "$ n $". 
Let $ \Psi^{U} $ be a ground state of $ H^{U} $. Using the notations of Subsection \ref{decompositiondelambda}, we have
\begin{equation}
\Psi^{U}=\sum_{Q\in \mathfrak{Q}}\lambda(Q)\psi^{U}_{\mathcal{P}_{2}}(Q)\wedge \psi^{U}_{\mathcal{N}_{2}}(Q)
\end{equation}
with $ \lambda(Q)\in \C $, $ \psi^{U}_{\mathcal{P}_{2}}(Q)\in \mathfrak{H}_{\mathcal{P}_{2}}(Q) $ and $ \psi^{U}_{\mathcal{N}_{2}}(Q)\in \mathfrak{H}_{\mathcal{N}_{2}}(Q) $,
Then,
\begin{align}
\langle H^{U}\Psi^{U},\Psi^{U}\rangle&=\sum_{Q\in \mathfrak{Q}} \vert\lambda(Q)\vert^{2}\Big(E^{U}_{\mathcal{P}_{2}}(Q)+E^{U}_{\mathcal{N}_{2}}(Q)\Big)\\
&\geq \min_{\mathfrak{Q}}E^{U}_{\mathcal{P}_{2}}\nonumber
\end{align}
Fix $ \beta>3 $ and $ \delta\in (0,1) $. Let $ \Psi^{\beta}=\Psi^{U}(Q^{\beta}) $ be the state given by the construction of Subsection \ref{approximatestate}. We compute
\begin{align}\label{ineqpsi}
0\leq \langle H^{U}\Psi^{\beta},\Psi^{\beta}\rangle-\langle H^{U}\Psi^{U},\Psi^{U}\rangle &\leq E^{U}_{\mathcal{P}_{2}}(Q^{\beta})+E^{U}_{\mathcal{N}_{2}}(Q^{\beta})-\min_{\mathfrak{Q}}E_{\mathcal{P}_{2}} \\
&\leq E^{U}_{\mathcal{P}_{2}}(Q^{\beta})-\min_{\mathfrak{Q}}E^{U}_{\mathcal{P}_{2}}+n\rho^{2-\delta} \nonumber
\end{align}
if $ \rho\in (0, \rho_{\delta}) $. We used Proposition \ref{leftover} for the last inequality.
If $ Q $ is an occupation that minimizes $ E_{\mathcal{P}_{2}} $ on $ \mathfrak{Q} $ then, by Proposition \ref{levelsomme},
\begin{equation}
 E^{U}_{\mathcal{P}_{2}}(Q^{\beta})-E^{U}_{\mathcal{P}_{2}}(Q)= \sum_{k=n_{Q}}^{n_{Q^{\beta}}}a_{k}.
\end{equation}
So,
\begin{equation}\label{enca1}
(\min \Gamma_{2}) \frac{n_{Q^{\beta}}-n_{Q}}{L} \leq \frac{E^{U}_{\mathcal{P}_{2}}(Q^{\beta})-E^{U}_{\mathcal{P}_{2}}(Q)}{L}\leq (\max \Gamma_{2} )\frac{n_{Q^{\beta}}-n_{Q}}{L}.
\end{equation}
By Lemma \ref{occupoutofchains},
\begin{equation}\label{enca2}
  0\leq \frac{n_{Q^{\beta}}-n_{Q}}{L}\leq \frac{n-n_{Q}}{L}\leq \frac{n\rho^{2-\delta}}{L}
\end{equation}
for $ \rho\in (0,\rho_{\delta}) $. Combining (\ref{enca1}) and (\ref{enca2}) we get
\begin{equation}\label{thlimit1}
\lim_{\substack{L\to +\infty\\ \frac{n}{L}\to \rho}} \frac{E^{U}_{\mathcal{P}_{2}}(Q^{\beta})-\min_{\mathfrak{Q}}E^{U}_{\mathcal{P}_{2}}}{n}=O(\rho^{2-\delta}).
\end{equation}
Thus, using (\ref{ineqpsi}) and (\ref{thlimit1}), one proves that, in the thermodynamic limit,
\begin{equation}
\lim_{\substack{L\to +\infty\\ \frac{n}{L}\to \rho}} \frac{\langle H^{U}\Psi^{\beta},\Psi^{\beta}\rangle-\langle H^{U}\Psi^{U},\Psi^{U}\rangle }{n}=O(\rho^{2-\delta})
\end{equation}
It concludes the proof of Proposition \ref{thlimU}.
\end{proof}
Combining Proposition \ref{thlimbeta}  and Proposition \ref{thlimU}, we get Theorem \ref{mainth}.

\subsection{Comparing a true ground state to the approximated ground state}

We recall that for $ \Psi \in \mathfrak{H}^{n}(\Lambda) $, we define its $ 1 $-particle density $ \gamma^{(1)}_{\Psi} $ (resp. $ 2 $-particle density $ \gamma^{(2)}_{\Psi} $) as the operator on $ \mathfrak{H}^{1}(\Lambda) $ (resp. $ \mathfrak{H}^{2}(\Lambda) $) given by (\ref{defgamma1}) (resp. \ref{defgamma2}). The following lemma deals with the case of a vector $ \Psi\in \mathfrak{H}^{n}(\Lambda) $ which factorizes with respect to a given partition of $ \Lambda $.

\begin{lemma}\label{gamma2}\cite{Klopp2020}
Consider $ (U_{i})_{1\leq i\leq r} $ a family of closed sets of $ \R $ where $ U_{i}\cap U_{j}=\emptyset $ holds for every $ i\neq j $ and $ \vert U_{i} \vert $ is finite. Set, for $ (q_{i})_{1\leq i\leq r}\in \N^{r} $,
\begin{equation}
\Psi=\bigwedge_{i=1}^{r}\psi(i,q_{i})
\end{equation}
where $ \psi(i,k) $ is a state that belongs to $ \mathfrak{H}^{k}(U_{i}) $, the $ k $-particle space on $ U_{i} $. Then the $ 1 $-particle $ \gamma^{(1)}_{\Psi} $ and the $ 2 $-particle $ \gamma^{(2)}_{\Psi} $ admit the following decompositions
\begin{equation}\label{gamma1pure}
\gamma^{(1)}_{\Psi}=\sum_{i=1}^{r}\gamma^{(1)}_{\psi(i,q_{i})}
\end{equation}
and
\begin{equation}\label{gamma2pure}
\gamma^{(2)}_{\Psi}=\sum_{i=1}^{r}\bigg(\gamma^{(2)}_{\psi(i,q_{i})}-\frac{1}{2}\,\gamma^{(1)}_{\psi(i,q_{i})}\otimes\, \gamma^{(1)}_{\psi(i,q_{i})}+\frac{1}{2}\Big(\gamma^{(1)}_{\psi(i,q_{i})}\otimes\,\gamma^{(1)}_{\psi(i,q_{i})}\Big)\circ \tau \bigg)+\frac{1}{2}\,\gamma^{(1)}_{\Psi}\otimes\,\gamma^{(1)}_{\Psi}-\frac{1}{2}\Big(\gamma^{(1)}_{\Psi}\otimes\, \gamma^{(1)}_{\Psi}\Big)\circ \tau
\end{equation}
with $ \tau(x_{1},x_{2},y_{1},y_{2})=(x_{1},x_{2},y_{2},y_{1}). $
\end{lemma}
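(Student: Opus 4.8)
The plan is to compute both reduced densities sector by sector, exploiting the disjointness of the $U_i$, and I would find it cleanest to pass to the second-quantized picture. Since all $n=\sum_i q_i$ particles of $\Psi$ sit in $\bigcup_i U_i$, the one-particle space effectively involved is $\bigoplus_{i=1}^r L^2(U_i)$; write $a^\dagger_x,a_x$ for the associated CAR operators and $\mathcal N_i$ for the number operator of the sector $U_i$. With the conventions (\ref{defgamma1})--(\ref{defgamma2}) one checks directly that, for every $\Phi\in\mathfrak H^n(\Lambda)$,
\[ \gamma^{(1)}_\Phi(x,y)=\langle \Phi,\, a^\dagger_y a_x\,\Phi\rangle, \qquad \gamma^{(2)}_\Phi(x_1,x_2,y_1,y_2)=\tfrac12\langle \Phi,\, a^\dagger_{y_1}a^\dagger_{y_2}a_{x_2}a_{x_1}\,\Phi\rangle. \]
Under the canonical identification of the fermionic Fock space over $\bigoplus_i L^2(U_i)$ with the graded tensor product of the Fock spaces over the $L^2(U_i)$, the vector $\Psi=\bigwedge_i\psi(i,q_i)$ becomes the (graded) tensor product of the normalized $\psi(i,q_i)$; hence $\mathcal N_i\Psi=q_i\Psi$ for every $i$, and any monomial in $a^\dagger,a$ supported in $U_i$ acts on the $i$-th tensor factor only, up to the global Koszul sign $(-1)^{\sum_{k<i}\mathcal N_k}$.

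First I would handle $\gamma^{(1)}$. For $x\in U_i$, $y\in U_j$ with $i\neq j$, the vector $a^\dagger_y a_x\Psi$ has $\mathcal N_i$-eigenvalue $q_i-1$ and is therefore orthogonal to $\Psi$, so $\gamma^{(1)}_\Psi$ is block-diagonal with respect to $\bigoplus_i L^2(U_i)$. On the block $U_i\times U_i$ the two operators act only on the $i$-th factor (the Koszul signs cancel in pairs) and the remaining factors contribute $\prod_{k\neq i}\|\psi(k,q_k)\|^2=1$, whence $\gamma^{(1)}_\Psi\!\restriction_{U_i\times U_i}=\gamma^{(1)}_{\psi(i,q_i)}$. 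This is (\ref{gamma1pure}).

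Then $\gamma^{(2)}$: expand $\langle\Psi,a^\dagger_{y_1}a^\dagger_{y_2}a_{x_2}a_{x_1}\Psi\rangle$ according to the sectors containing $x_1,x_2,y_1,y_2$. Conservation of every $\mathcal N_i$ forces each annihilation operator to be matched with a creation operator in the same sector, which leaves exactly three families of nonvanishing contributions: (a) all four arguments in one $U_i$, yielding $\gamma^{(2)}_{\psi(i,q_i)}(x_1,x_2,y_1,y_2)$; (b) $x_1,y_1\in U_i$ and $x_2,y_2\in U_j$ with $i\neq j$; (c) $x_1,y_2\in U_i$ and $x_2,y_1\in U_j$ with $i\neq j$. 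For (b), moving the two $U_j$-operators past the two $U_i$-operators costs an even number of anticommutations, so the expectation equals $\gamma^{(1)}_{\psi(i,q_i)}(x_1,y_1)\,\gamma^{(1)}_{\psi(j,q_j)}(x_2,y_2)$; for (c) the reordering is odd, so it equals $-\gamma^{(1)}_{\psi(i,q_i)}(x_1,y_2)\,\gamma^{(1)}_{\psi(j,q_j)}(x_2,y_1)$. Summing over ordered pairs $i\neq j$ and multiplying by $\tfrac12$ gives
\[ \gamma^{(2)}_\Psi=\sum_i\gamma^{(2)}_{\psi(i,q_i)} +\tfrac12\sum_{i\neq j}\gamma^{(1)}_{\psi(i,q_i)}\otimes\gamma^{(1)}_{\psi(j,q_j)} -\tfrac12\sum_{i\neq j}\Big(\gamma^{(1)}_{\psi(i,q_i)}\otimes\gamma^{(1)}_{\psi(j,q_j)}\Big)\circ\tau . \]
Finally, using $\gamma^{(1)}_\Psi=\sum_i\gamma^{(1)}_{\psi(i,q_i)}$ from the first step, I would write $\sum_{i\neq j}=\sum_{i,j}-\sum_{i=j}$ in the last two sums: the $\sum_{i,j}$ parts recombine into $\tfrac12\gamma^{(1)}_\Psi\otimes\gamma^{(1)}_\Psi-\tfrac12(\gamma^{(1)}_\Psi\otimes\gamma^{(1)}_\Psi)\circ\tau$, and the diagonal $i=j$ contribution is precisely $-\tfrac12\gamma^{(1)}_{\psi(i,q_i)}\otimes\gamma^{(1)}_{\psi(i,q_i)}+\tfrac12(\gamma^{(1)}_{\psi(i,q_i)}\otimes\gamma^{(1)}_{\psi(i,q_i)})\circ\tau$, which absorbs into the $\sum_i$ term and yields (\ref{gamma2pure}).

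The step I expect to be the main obstacle is the sign bookkeeping in (b) and (c): one must verify, uniformly in the sector labels $i,j$ and in particular regardless of whether $i<j$ or $i>j$, that the permutation bringing the $U_i$-operators to the left of the $U_j$-operators is even in case (b) and odd in case (c) — this odd sign in (c) is exactly what produces the antisymmetrizing $-(\cdot)\circ\tau$, and it must not be conflated with the signs already built into the definition of $\gamma^{(2)}$. If one prefers to avoid Fock space and expand the antisymmetrized product $\bigwedge_i\psi(i,q_i)$ directly as a signed sum over assignments of the $n$ integration variables to the groups, the same difficulty reappears as the task of identifying the pairs of assignments that survive after integrating out $n-1$ (resp. $n-2$) variables — disjointness of the $U_i$ annihilates all the others by orthogonality — and of collecting their signatures; everything else is a routine normalization check.
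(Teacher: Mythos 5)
Your argument is correct, but note that the paper itself gives no proof of this lemma: it is quoted verbatim from \cite{Klopp2020}, where it is established by the direct first-quantized route you sketch at the end, namely expanding the wedge product $\bigwedge_i\psi(i,q_i)$ as a signed sum over assignments of the integration variables to the groups, using the disjointness of the supports to kill all cross terms by orthogonality, and collecting signatures. Your second-quantized version is a genuinely different and arguably cleaner organization of the same combinatorics: the identities $\gamma^{(1)}_\Phi(x,y)=\langle\Phi,a^\dagger_ya_x\Phi\rangle$ and $\gamma^{(2)}_\Phi=\tfrac12\langle\Phi,a^\dagger_{y_1}a^\dagger_{y_2}a_{x_2}a_{x_1}\Phi\rangle$ match the normalizations (\ref{defgamma1})--(\ref{defgamma2}), sector-wise particle-number conservation replaces the orthogonality bookkeeping, and your parity count is right (two pure anticommutations in case (b), three in case (c), uniformly in the labels $i,j$ because disjointness of the $U_i$ excludes any contact terms), after which the resummation $\sum_{i\neq j}=\sum_{i,j}-\sum_{i=j}$ combined with (\ref{gamma1pure}) reproduces (\ref{gamma2pure}) exactly. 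What the Fock-space route buys is that the troublesome signatures are localized in a four-operator monomial instead of a permutation of all $n$ variables; what the first-quantized route buys is that it needs no identification of $\mathfrak{H}^{n}(\Lambda)$ with a sector of Fock space and no operator-valued distributions (your $a_x$, $a^\dagger_y$ should strictly be read in smeared form or at the level of kernels, a standard but worth-stating caveat). I see no gap in your proposal.
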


We compare the $ 1 $-particle density and the $ 2 $-particle density of our approximate ground state with those of any ground state. The following Proposition is a reformulation of Proposition \ref{compgamma1} and Proposition \ref{compgamma2}.

\begin{proposition}\label{compgamma1et2}
Let $ \Psi^{U}(\Lambda,n) $ be a ground state of $ H^{U}(\Lambda,n) $. For $ \delta\in (0,1) $, $ \rho\in (0,\rho_{\delta})$ and $ \beta>3 $, set the approximate ground state $ \Psi^{\beta}(\Lambda,n) $ given in Subsection \ref{approximatestate}. Then, in the thermodynamic limit, with probability $ 1-O(L^{-\infty}) $, one has
\begin{equation}
\frac{1}{n}\Big\Vert \gamma^{(1)}_{\Psi^{U}(\Lambda,n)}-\gamma^{(1)}_{\Psi^{\beta}(\Lambda,n)} \Big\Vert_{\text{tr}} \leq 10\rho^{2-\delta}
\end{equation}
and
\begin{equation}
\frac{1}{n^{2}}\Big\Vert \gamma^{(2)}_{ \Psi^{U}(\Lambda,n)}-\gamma^{(2)}_{\Psi^{\beta}(\Lambda,n)}\Big\Vert_{\text{tr}}\leq 45\rho^{2-\delta}.
\end{equation}
\end{proposition}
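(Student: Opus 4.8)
The plan is to reduce the comparison of densities to the small part of the two states that does not live on the ``frozen'' chains of $ \mathcal{F}_{2} $, and then to bound a handful of trace norms using Lemma \ref{gamma2}.

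First I would set up a common factor for the two states. Since the interaction is repulsive, Assumption \ref{convex} holds for $ p=2 $ by Lemma \ref{convfor2}, and Assumption \ref{casequal} holds for $ p=2 $ by Corollary \ref{corocasequal}; hence Proposition \ref{Qfix} applies with $ p=2 $. It provides a subset $ \mathcal{F}_{2}\subset\mathcal{P}_{2} $ with $ \#(\mathcal{P}_{2}\setminus\mathcal{F}_{2})\leq 2n\rho^{2-\delta} $ and integers $ (q_{i}^{\mathcal{F}_{2}}) $ such that every ground state factorizes as $ \Psi^{U}(\Lambda,n)=\Phi\wedge\Omega^{U} $, where $ \Phi=\bigwedge_{I\in\mathcal{F}_{2}}\psi^{U}\big(I,(q_{i}^{\mathcal{F}_{2}})_{i\in I}\big) $ and $ \Omega^{U} $ is a vector of $ \mathfrak{H}^{k_{0}}(U_{0}) $, with $ U_{0} $ the union of the pieces outside $ \mathcal{F}_{2} $ and $ k_{0} $ the (fixed) number of particles it carries. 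I would then check that the approximate state shares the \emph{same} factor $ \Phi $: by the construction of Subsection \ref{approximatestate}, $ \Psi^{\beta}(\Lambda,n)=\Psi^{U}(\Lambda,n,Q^{\beta}) $ with $ Q^{\beta}=Q(n_{Q^{\beta}}) $ in the notation of Corollary \ref{sequenceQ} and $ n-n_{Q^{\beta}}=O(n\rho^{\beta})\leq 2n\rho^{2-\delta} $ for $ \rho $ small; since $ \mathcal{F}_{2} $ is exactly the set of chains on which $ r\mapsto Q(r)_{\vert I} $ is constant over $ \llbracket n-2n\rho^{2-\delta},n\rrbracket $, we get $ Q^{\beta}_{\vert I}=(q_{i}^{\mathcal{F}_{2}})_{i\in I} $ for every $ I\in\mathcal{F}_{2} $, so by (\ref{statefull}) $ \Psi^{\beta}(\Lambda,n)=\Phi\wedge\Omega^{\beta} $ with $ \Omega^{\beta}\in\mathfrak{H}^{k_{0}}(U_{0}) $. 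Counting particles in $ U_{0} $: at most $ 2\cdot 2n\rho^{2-\delta} $ coming from $ \mathcal{P}_{2}\setminus\mathcal{F}_{2} $ (at most two per chain) and, by Lemma \ref{occupoutofchains}, at most $ n\rho^{2-\delta} $ coming from $ \mathcal{N}_{2} $, so $ k_{0}\leq 5n\rho^{2-\delta} $, and this holds with probability $ 1-O(L^{-\infty}) $ for both states.

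For the $ 1 $-particle density I would apply Lemma \ref{gamma2} to the partition of $ \Lambda $ into the chains of $ \mathcal{F}_{2} $ together with the single block $ U_{0} $ (the lemma makes no purity assumption on the block states). By (\ref{gamma1pure}) the $ \mathcal{F}_{2} $-contributions are identical for $ \Psi^{U} $ and $ \Psi^{\beta} $, so $ \gamma^{(1)}_{\Psi^{U}(\Lambda,n)}-\gamma^{(1)}_{\Psi^{\beta}(\Lambda,n)}=\gamma^{(1)}_{\Omega^{U}}-\gamma^{(1)}_{\Omega^{\beta}} $; both operators on the right are nonnegative with trace $ k_{0} $, hence the difference has trace norm at most $ 2k_{0}\leq 10\,n\rho^{2-\delta} $, which gives the first inequality after dividing by $ n $. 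For the $ 2 $-particle density I would use (\ref{gamma2pure}) on the same partition; the $ \mathcal{F}_{2} $-sums again cancel in the difference and one is left with
\begin{align*}
\gamma^{(2)}_{\Psi^{U}(\Lambda,n)}-\gamma^{(2)}_{\Psi^{\beta}(\Lambda,n)}&=\big(\gamma^{(2)}_{\Omega^{U}}-\gamma^{(2)}_{\Omega^{\beta}}\big)-\tfrac12\big(D_{\Omega}-D_{\Omega}\circ\tau\big)+\tfrac12\big(D_{\Psi}-D_{\Psi}\circ\tau\big),
\end{align*}
with $ D_{\Omega}=\gamma^{(1)}_{\Omega^{U}}\otimes\gamma^{(1)}_{\Omega^{U}}-\gamma^{(1)}_{\Omega^{\beta}}\otimes\gamma^{(1)}_{\Omega^{\beta}} $ and $ D_{\Psi}=\gamma^{(1)}_{\Psi^{U}}\otimes\gamma^{(1)}_{\Psi^{U}}-\gamma^{(1)}_{\Psi^{\beta}}\otimes\gamma^{(1)}_{\Psi^{\beta}} $. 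Using the identity $ A\otimes A-B\otimes B=(A-B)\otimes A+B\otimes(A-B) $, the bound $ \Vert\gamma^{(1)}_{\Psi^{U}}-\gamma^{(1)}_{\Psi^{\beta}}\Vert_{\text{tr}}\leq 10\,n\rho^{2-\delta} $ just obtained, $ \Vert\gamma^{(1)}_{\Psi^{U}}\Vert_{\text{tr}}=\Vert\gamma^{(1)}_{\Psi^{\beta}}\Vert_{\text{tr}}=n $, and the fact that $ \tau $ is unitary, one gets $ \Vert D_{\Psi}\Vert_{\text{tr}}\leq 20\,n^{2}\rho^{2-\delta} $, so the last term contributes at most $ 20\,n^{2}\rho^{2-\delta} $; since $ \Vert\gamma^{(2)}_{\Omega^{U}}\Vert_{\text{tr}}=\binom{k_{0}}{2} $ and $ \Vert D_{\Omega}\Vert_{\text{tr}}\leq 2k_{0}^{2} $, the remaining terms contribute $ O(k_{0}^{2})=O(n^{2}\rho^{2(2-\delta)}) $, which is $ \leq 25\,n^{2}\rho^{2-\delta} $ for $ \rho<\rho_{\delta} $. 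Adding up and dividing by $ n^{2} $ gives the bound $ 45\,\rho^{2-\delta} $. All estimates are uniform in $ L $, so they persist in the thermodynamic limit.

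The main obstacle is the first step: one must be sure that the approximate occupation $ Q^{\beta} $ coincides with the ground-state occupation on \emph{all} of $ \mathcal{F}_{2} $, so that $ \Psi^{U} $ and $ \Psi^{\beta} $ genuinely share the factor $ \Phi $. This is exactly where Proposition \ref{Qfix} — and, beneath it, Corollary \ref{sequenceQ} together with the monotonicity of the energy levels (Lemma \ref{convfor2}) and the scarcity of coincidences in $ \Gamma_{2} $ (Corollary \ref{corocasequal}) — is used. Once the two states share $ \Phi $, what remains is routine trace-norm bookkeeping on the $ O(n\rho^{2-\delta}) $ particles that are not controlled.
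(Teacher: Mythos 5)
Your proposal is correct and follows essentially the same route as the paper: Proposition \ref{Qfix} (justified for $p=2$ via Lemma \ref{convfor2} and Corollary \ref{corocasequal}) gives the common factor over $\mathcal{F}_{2}$, and Lemma \ref{gamma2} reduces everything to trace-norm bounds on the $O(n\rho^{2-\delta})$ uncontrolled particles, with the same counts ($k_{0}\leq 5n\rho^{2-\delta}$) and the same constants $10$ and $45$. Your grouping of the $2$-particle terms through $D_{\Psi}$ and $D_{\Omega}$ is only a cosmetic variant of the paper's expansion into $\Phi$--$\Omega$ cross terms, and your explicit check that $Q^{\beta}$ agrees with $(q_{i}^{\mathcal{F}_{2}})$ on $\mathcal{F}_{2}$ merely spells out what the paper asserts ``by construction''.
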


\begin{proof}
Let $ \Psi^{U}(\Lambda,n) $ be a ground state of $ H^{U}(\Lambda,n) $ for large $ n $ and $ L $. The proof uses that both $ \Psi^{U} $ and $ \Psi^{\beta} $ admit a factor that fixes all but $ O(n\rho^{2-\delta}) $ particles. Indeed, by Lemma \ref{convfor2} and Corollary \ref{corocasequal}, both Assumption \ref{convex} and Assumption \ref{casequal} hold for $ p=2 $. So, we apply Proposition \ref{Qfix}. We have the factorization
\begin{equation}\label{decompPsi2}
\Psi^{U}=\bigg(\bigwedge_{I \in \mathcal{F}_{2}}\psi^{U}\big(I,(q_{i}^{\mathcal{F}_{2}})_{i\in I}\big)\bigg)\wedge \Omega^{U,\mathcal{F}_{2}^{c}}
\end{equation}
where
\begin{equation}
\Omega^{U,\mathcal{F}_{2}^{c}}=\sum_{Q\in \mathfrak{Q}}\lambda(Q)\bigwedge_{I \notin \mathcal{F}_{2}}\psi^{U}\big(I,(q_{i})_{i\in I}\big).
\end{equation}
Set
\begin{equation}
n^{\mathcal{F}_{2}}=\sum_{I\in \mathcal{F}_{2}}\sum_{i\in I}q_{i}^{\mathcal{F}_{2}}
\end{equation} the number of particles in $ \mathcal{F}_{2} $.

Let $ \Psi^{\beta} $ be our approximated ground state. By construction, we know
\begin{equation}
\forall I\in \mathcal{F}_{2} \quad \forall i\in I \qquad q_{i}^{\beta}=q_{i}^{\mathcal{F}_{2}}.
\end{equation}
 As in (\ref{decompPsi2}), we have
\begin{equation}
\Omega^{\beta,\mathcal{F}_{2}^{c}}=\bigwedge_{I\notin \mathcal{F}_{2}}\phi^{U}(I,(q_{i}^{\beta})_{i\in I})
\end{equation}
so that
\begin{equation}
\Psi^{\beta}=\bigg(\bigwedge_{I \in \mathcal{F}_{2}}\psi^{U}\big(I,(q_{i}^{\mathcal{F}_{2}})_{i\in I}\big)\bigg)\wedge \Omega^{\beta,\mathcal{F}_{2}^{c}}.
\end{equation}
We deal with the $ 1 $-particle densities and $ 2 $-particle densities separately.
\begin{enumerate}
\item[(i)]By Lemma \ref{gamma2}, the $ 1 $-particle density of $ \Psi^{U} $ satisfies
\begin{equation}
\gamma^{(1)}_{\Psi^{U}}=\sum_{I\in \, \mathcal{F}_{2}}\gamma^{(1)}_{\psi^{U}\big(I,(q_{i}^{\mathcal{F}_{2}})_{i\in I}\big)}+\gamma^{(1)}_{\Omega^{U,\mathcal{F}_{2}^{c}}}.
\end{equation}

For any $ \phi\in \mathfrak{H}^{n}(\Lambda) $, $ \vert \phi><\phi\vert $ is a rank one projector and
\begin{equation}
\Big\Vert  \vert \phi><\phi \vert \Big\Vert_{\text{tr}} = \int_{\Lambda^{n}} \vert \phi(X)\vert^{2}\,  dX.
\end{equation}
So its $ 1$-particle $ \gamma^{(1)}_{\phi} $ is trace class with
\begin{equation}
\Big\Vert \gamma^{(1)}_{\phi} \Big\Vert_{\text{tr}}=\int_{\Lambda}\gamma^{(1)}_{\phi}(x,x)\, dx 
\end{equation}
Since $ \Omega^{U,\mathcal{F}_{2}^{c}} $ is a normalized wave function of $ \mathfrak{H}^{n-n^{\mathcal{F}_{2}}}(\Lambda) $, we compute
\begin{equation}
\Big\Vert \gamma^{(1)}_{\Omega^{U,\mathcal{F}_{2}^{c}}} \Big\Vert_{\text{tr}}=n-n^{\mathcal{F}_{2}}\leq \max_{Q\in \mathfrak{Q}}\sum_{i\in \mathcal{N}}q_{i} + 2\#  \mathcal{P}_{2}\backslash \mathcal{F}_{2}\leq 5n\rho^{2-\delta}.
\end{equation}
Thus,
\begin{align}
\Big\Vert \gamma^{(1)}_{\Psi^{U}}-\gamma^{(1)}_{\Psi^{\beta}}\Big\Vert_{\text{tr}}&=\Big\Vert \gamma^{(1)}_{\Omega^{U,\mathcal{F}_{2}^{c}}} -\gamma^{(1)}_{\Omega^{\beta,\mathcal{F}_{2}^{c}}} \Big\Vert_{\text{tr}}  \\
&\leq \Big\Vert \gamma^{(1)}_{\Omega^{U,\mathcal{F}_{2}^{c}}} \Big\Vert_{\text{tr}}+\Big\Vert \gamma^{(1)}_{\Omega^{\beta,\mathcal{F}_{2}^{c}}} \Big\Vert_{\text{tr}}\nonumber \\
&\leq 10n\rho^{2-\delta}
\end{align}
\item[(ii)] We expand the $ 2 $-particle density of $ \Psi^{U} $ according to Lemma \ref{gamma2}.
 \begin{equation}
\gamma^{(2)}_{\Psi^{U}}=\gamma^{(2)}_{\Phi^{U,\mathcal{F}_{2}}}+\gamma^{(2)}_{\Omega^{U,\mathcal{F}_{2}^{c}}}+ \frac{1}{2}\Big( \gamma^{(1)}_{\Phi^{U,\mathcal{F}_{2}}}\otimes\gamma^{(1)}_{\Omega^{U,\mathcal{F}_{2}^{c}}}+\gamma^{(1)}_{\Omega^{U,\mathcal{F}_{2}^{c}}}\otimes\gamma^{(1)}_{\Phi^{U,\mathcal{F}_{2}}}-\big(\gamma^{(1)}_{\Phi^{U,\mathcal{F}_{2}}}\otimes\gamma^{(1)}_{\Omega^{U,\mathcal{F}_{2}^{c}}}\big)\circ \tau -\big(\gamma^{(1)}_{\Omega^{U,\mathcal{F}_{2}^{c}}}\otimes\gamma^{(1)}_{\Phi^{U,\mathcal{F}_{2}}} \big)\circ \tau  \Big).
 \end{equation}
 
For $ \phi\in \mathfrak{H}^{n}(\Lambda) $, the corresponding $ 2$-particle $ \gamma^{(2)}_{\phi} $ is trace class and it satisfies
\begin{equation}
\Big\Vert \gamma^{(2)}_{\phi} \Big\Vert_{\text{tr}}=\int_{\Lambda}\gamma^{(2)}_{\phi}(x_{1},x_{2},x_{1},x_{2})\, dx 
\end{equation} 
Then,
\begin{equation}
\Big\Vert \gamma^{(2)}_{\Omega^{U,\mathcal{F}_{2}^{c}}} \Big\Vert_{\text{tr}}=\frac{\big(n-n^{\mathcal{F}_{2}}\big)\big(n-n^{\mathcal{F}_{2}}-1\big)}{2}\leq \frac{25}{2}n^{2}\rho^{4-2\delta}
\end{equation}
and
\begin{equation}
\Big\Vert \gamma^{(1)}_{\Phi^{U,\mathcal{F}_{2}}}\otimes\gamma^{(1)}_{\Omega^{U,\mathcal{F}_{2}^{c}}} \Big\Vert_{\text{tr}}=\Big\Vert \Big(\gamma^{(1)}_{\Phi^{U,\mathcal{F}_{2}}}\otimes\gamma^{(1)}_{\Omega^{U,\mathcal{F}_{2}^{c}}}\Big)\circ \tau \Big\Vert_{\text{tr}}=n^{\mathcal{F}_{2}}\big(n-n^{\mathcal{F}_{2}}\big)\leq 5n^{2}\rho^{2-\delta}.
\end{equation}
The same inequalities hold for $ \Phi^{\beta,\mathcal{F}_{2}} $ and $ \Omega^{\beta,\mathcal{F}_{2}^{c}} $.
So,
\begin{equation}
\Big\Vert \gamma^{(2)}_{\Psi^{U}}-\gamma^{(2)}_{\Psi^{\beta}}\Big\Vert_{\text{tr}}\leq 45n^{2}\rho^{2-\delta}.
\end{equation}
\end{enumerate}

It concludes the proof of Proposition \ref{compgamma1et2}.
\end{proof}

$ \\ $

\section{Appendix}\label{appendix}

\subsection{Convex functions and discrete optimization}

\begin{definition}
A function $ F:\N \rightarrow \R $ is convex (resp. strictly convex) iff for every $ k\geq 1 $, $$ F(k+1)-F(k)\geq  F(k)-F(k-1) \qquad \text{(resp. } F(k+1)-F(k)> F(k)-F(k-1) \text{).} $$

\end{definition}

\begin{lemma}\label{convexsomme}
Let $ (F_{i})_{1\leq i\leq p} $ be nonnegative functions defined on $ \N $, with $ F_{i}(0)=0 $. Define 
\begin{equation}
F: \begin{cases}
 \qquad \N^{p} &\longrightarrow \quad \R_{+} \\ (x_{1},\dots, x_{p})&\longmapsto \sum_{i=1}^{p}F_{i}(x_{i})
\end{cases} \quad \text{ and } \quad G:\begin{cases} \, \N &\longrightarrow \quad \R_{+} \\ \, r&\longmapsto \min_{x_{1}+\dots+x_{p}=r}\,F(x_{1},\dots,x_{p}) \end{cases}
\end{equation}
Assume that, for every $ i $, $ F_{i} $ is strictly convex. Then,
\begin{enumerate}
\item  the function $ G $ is convex;
\item  for $ r\geq 1 $, $ G(r) $ is exactly the sum of the $ r $ smallest elements of $$ \Gamma=\big\{ F_{i}(k+1)-F_{i}(k), \, i\in \llbracket 1,m \rrbracket , \, k\in \N \big\}, $$ taken with multiplicity.
\end{enumerate}
\end{lemma}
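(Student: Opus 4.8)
The plan is to prove item (2) first by pinning down $G(r)$ exactly, and then to read off item (1) with no extra work. The starting observation is that, for each $i$, strict convexity of $F_i$ says precisely that the increment sequence $\Delta_i(k):=F_i(k+1)-F_i(k)$ is strictly increasing in $k\in\N$; together with $F_i(0)=0$ this gives $F_i(x)=\sum_{k=0}^{x-1}\Delta_i(k)$, so that $F_i(x)$ is the sum of the $x$ smallest increments of the $i$-th family. Hence for any $(x_1,\dots,x_p)\in\N^p$ with $\sum_i x_i=r$ we have $F(x_1,\dots,x_p)=\sum_{i=1}^p\sum_{k=0}^{x_i-1}\Delta_i(k)$, which is a sum of exactly $r$ elements of the multiset $\Gamma$ (one for each labelled pair $(i,k)$ with $k<x_i$). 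Listing $\Gamma$ in nondecreasing order with multiplicity as $\gamma_1\le\gamma_2\le\cdots$, any such sum is at least $\gamma_1+\dots+\gamma_r$, and minimizing over admissible $(x_i)$ yields $G(r)\ge\sum_{j=1}^r\gamma_j$.

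For the matching upper bound I would construct an allocation that attains $\sum_{j=1}^r\gamma_j$. Choose a set $S$ of $r$ labelled pairs $(i,k)$ realizing the values $\gamma_1,\dots,\gamma_r$, taking $S$ to contain every pair whose value is strictly below the threshold $t:=\gamma_r$ (there are at most $r-1$ such pairs) together with enough pairs of value exactly $t$ to reach cardinality $r$. I claim $S$ can be taken \emph{down-closed}: if $(i,k)\in S$ and $k'<k$ then $\Delta_i(k')<\Delta_i(k)\le t$ by strict monotonicity, so $(i,k')$ is a pair of value $<t$ and hence already lies in $S$. This is where strict convexity is essential: within a single family no two increments are equal, so ties in $\Gamma$ occur only between different families, and the extra pairs of value $t$ can be added freely without breaking down-closedness. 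Now set $x_i:=\#\{k:(i,k)\in S\}$; down-closedness gives $(i,k)\in S\iff 0\le k<x_i$, whence $\sum_i x_i=|S|=r$ and $F(x_1,\dots,x_p)=\sum_{(i,k)\in S}\Delta_i(k)=\sum_{j=1}^r\gamma_j$. Therefore $G(r)\le\sum_{j=1}^r\gamma_j$, and combined with the lower bound this proves item (2); the normalization $F_i(0)=0$ gives $G(0)=0$, consistent with the empty sum.

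Item (1) is then immediate: for every $r\ge 1$ one has $G(r+1)-G(r)=\gamma_{r+1}\ge\gamma_r=G(r)-G(r-1)$ because the enumeration of $\Gamma$ is nondecreasing, so $G$ is convex. The only delicate point in the whole argument is the tie-breaking used to build the down-closed set $S$, and as noted this is harmless precisely because strict convexity forbids repeated increments inside any single family; everything else is bookkeeping.
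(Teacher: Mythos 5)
Your proof is correct, but it takes a genuinely different route from the paper. The paper proves item (1) first, by an exchange/induction argument: it shows that a minimizer for $r+1$ particles can be obtained from a minimizer for $r$ particles by incrementing a single coordinate, and then deduces $G(r+1)-G(r)\geq G(r)-G(r-1)$ from the convexity of each $F_{i}$; item (2) is then obtained by observing that the increments $G(r+1)-G(r)$ form a nondecreasing sequence of elements of $\Gamma$ and ruling out, by contradiction, that any smaller element of $\Gamma$ is skipped. You instead prove item (2) directly: the telescoping identity $F_{i}(x)=\sum_{k<x}\big(F_{i}(k+1)-F_{i}(k)\big)$ exhibits every admissible $F(x_{1},\dots,x_{p})$ as a sum of $r$ distinct labelled elements of the multiset $\Gamma$, giving the lower bound by the sorted sum, and your down-closed selection $S$ (with the correct remark that strict convexity forbids ties within a family, so the tie-breaking at the threshold $\gamma_{r}$ cannot break down-closedness) realizes the upper bound; item (1) then falls out since $G(r+1)-G(r)=\gamma_{r+1}$. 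Your argument is shorter, handles multiplicities more transparently, and avoids both the inductive construction of minimizers and the reductio ad absurdum; the only point worth a sentence is why the $r$ smallest elements of the infinite multiset $\Gamma$ exist (within each family the increments are strictly increasing, so only the first $r$ increments of each of the $p$ families can compete), a point the paper leaves equally implicit. What the paper's route buys, and yours does not, is the byproduct that minimizers can be arranged in a chain in which exactly one coordinate increases at each step; this structural fact is precisely what is reused later (in Proposition \ref{levelsomme} and Corollary \ref{sequenceQ}) to build the sequence of occupations $Q(r)$, so your proof, while complete for the lemma as stated, would not by itself supply that auxiliary construction.
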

\begin{proof}
\begin{enumerate}
\item For $ r\geq 1 $, choose $ (x_{1}^{r},\dots, x_{p}^{r})\in \N^{p} $ so that
$$ G(r)=F(x_{1}^{r},\dots, x_{p}^{r}). $$
We prove that one can set $ (x_{1}^{r+1},\dots, x_{p}^{r+1})\in \N^{p} $ satisfying
\begin{equation}
\exists ! j_{r+1}\in \llbracket 1, p \rrbracket \qquad \big( x_{j_{r+1}}^{r+1}=x_{j_{r+1}}^{r}+1 \big) \quad \text{and} \quad \big(\forall i\neq j_{r+1} \qquad x_{i}^{r+1}=x_{i}^{r}\big).
\end{equation}
Pick $ (y_{1},\dots ,y_{p})\in \N^{p} $ with $ \sum_{i=1}^{p}y_{i}=r+1 $. Assume that there is $ y_{i_{0}}>x_{i_{0}}^{r}+1 $. Without loss of generality we consider $ i_{0}=1 $.  Then
\begin{align*}
F(y_{1},\dots ,y_{p})-F(x_{1}^{r}+1,x_{2}^{r}\dots ,x_{p}^{r})=&F(y_{1}-1,y_{2},\dots, r+1-\sum_{i=1}^{p-1}y_{i})- F(x_{1}^{r},\dots, r-\sum_{i=1}^{p-1}x_{i}^{r}) \\
& +f_{1}(y_{1})-f_{1}(y_{1}-1)+f_{1}(x_{1}^{r})-f_{1}(x_{1}^{r}+1)\\
>&0
\end{align*}
by definition of $ (x_{i}^{r})_{1\leq i\leq p} $ and because $ f_{1} $ is strictly convex from $ 0 $ to $ r+1 $.

So $ x_{i}^{r+1}\leq x_{i}^{r}+1 $ for all $ i $. Since $ \sum_{i=1}^{p}x_{i}^{r+1}=\sum_{i=1}^{p}x_{i}^{r}+1 $, there is $ j_{0} $ so that $ x_{j_{0}}^{r+1}=x_{j_{0}}^{r}+1 $. Without loss of generality we can consider $ j_{0}=1 $. Pick $ (y_{1},\dots ,y_{p})\in \N^{p} $ with $ \sum_{i=1}^{p}y_{i}=r+1 $ and $ y_{1}=x_{1}^{r}+1 $. Then, the same calculus gives $ F(y_{1},\dots ,y_{p})\geq F(x_{1}^{r}+1, x_{2}^{r},\dots, x_{p}^{r}) $ meaning $ (x_{1}^{r}+1, x_{2}^{r},\dots, x_{p}^{r}) $ is a minimizer of $ F $. Thus we set $ (x_{i}^{r})_{r\geq 1} $ by induction and we compute
\begin{align*}
G(r+1)-G(r)&=f_{1}(x_{1}^{r}+1)-f_{1}(x_{1}^{r})\\
&> f_{1}(x_{1}^{r})-f_{1}(x_{1}^{r}-1)
\end{align*}
and for all $ j\in \llbracket 2, p\rrbracket $
$$ G(r+1)-G(r)\geq f_{j}(x_{j}^{r})-f_{j}(x_{j}^{r}-1) $$
because
$$ \sum_{i\notin \{ 1,j\}}f_{i}(x_{i}^{r})+f_{j}(x_{j}^{r})+f_{1}(x_{1}^{r})\leq \sum_{i\notin \{ 1,j\}}f_{i}(x_{i}^{r})+f_{j}(x_{j}^{r}-1)+f_{1}(x_{1}^{r}+1) $$
Hence,
$$ G(r+1)-G(r)\geq G(r)-G(r-1). $$
\item In particular, the sequence $ \big(G(r+1)-G(r)\Big)_{r\geq 0} $ is non decreasing and it belongs to $ \Gamma $. By \textit{reductio ad absurdum}, assume that there is $ a\in \Gamma\cap\big\{ G(r+1)-G(r),r\geq 1 \big\}^{c} $. Let $ r_{a} $ be such that $ G(r_{a})-G(r_{a}-1)\leq a < G(r_{a}+1)-G(r_{a}) $, and $ (i_{a}, x_{a}) $ such that $ a=F_{i_{a}}(x_{a}+1)-F_{i_{a}}(x_{a}) $. Then, $ x_{i_{a}}^{r_{a}}=x_{a} $ and
 $$ F(x_{1}^{r_{a}}, \dots, x_{a}+1, \dots , x_{p}^{r_{a}})=G(r_{a})+a<G(r_{a}+1). $$ Contradiction.
\end{enumerate}
It concludes the proof of Lemma \ref{convexsomme}.
\end{proof}

\subsection{Statistical distribution of the pieces} We recall some results about the statistical distribution of pieces.

\begin{proposition}\label{taillemax}\cite{Klopp2020}
With probability $ 1-O(L^{-\infty}) $, the largest piece has a length bounded by 
\newline
 $ \log(L)\log(\log(L)) $.
\end{proposition}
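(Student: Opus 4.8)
The plan is a union-bound (covering) argument over sub-intervals of the background $\Lambda=[0,L]$, using only the elementary void probabilities of a rate-one Poisson process: for any interval $J$ of length $\ell$ one has $\mathbb{P}\big(X(w)\cap J=\emptyset\big)=e^{-\ell}$, and conditioning on $x_0(w)=0$ does not alter the law of $X(w)$ on $(0,\infty)$.

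First I would record the geometric observation that each piece $\Delta_i(w)=[x_{i-1}(w),x_i(w)]$, $i\in\llbracket 1,m(w)\rrbracket$, lies inside $[0,L]$ (because $0=x_0\le x_{i-1}<x_i\le L$) and that its interior contains no point of $X(w)$. Hence, if the largest piece has length $>t$, then $[0,L]$ contains an open sub-interval of length $>t$ that is void of Poisson points. Next I would fix $t:=\log(L)\log(\log(L))$ and partition an interval $[0,Nt/2]\supseteq[0,L]$, with $N=O(L/t)$, into the $N$ consecutive blocks $J_1,\dots,J_N$ of length $t/2$; since a sub-interval of length $>t=2\cdot(t/2)$ necessarily contains some block $J_\ell$ up to its endpoints, that block is empty of Poisson points. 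A union bound then yields, for $L$ large and an absolute constant $C$,
\[
\mathbb{P}\big(\text{largest piece}>t\big)\ \le\ \sum_{\ell=1}^{N}\mathbb{P}\big(X(w)\cap J_\ell=\emptyset\big)\ =\ N\,e^{-t/2}\ \le\ CL\,e^{-t/2}.
\]
Finally, substituting $t=\log(L)\log(\log(L))$ gives $e^{-t/2}=L^{-\frac12\log\log L}$, so the bound is $CL^{\,1-\frac12\log\log L}$; since $1-\frac12\log\log L\to-\infty$, this is $O(L^{-A})$ for every $A>0$, i.e. $O(L^{-\infty})$, which is exactly the claim (the complementary event being that every piece has length $\le t$).

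I do not anticipate a real difficulty; the only points deserving care are bookkeeping ones. The number of pieces $m(w)$ is itself random and correlated with the lengths, which is why I would bound the probability of an empty length-$t$ window of $[0,L]$ directly rather than union-bounding over the pieces themselves; and the first block $J_1$ contains the conditioned point $x_0=0$, but this only makes one term of the sum smaller, so the estimate is unaffected. A shorter variant in the style of the rest of the paper would be to first pass to the event $\{m(w)\le 2L\}$, which holds with probability $1-O(L^{-\infty})$ by the large-deviation estimate $m(w)=L+O(L^{2/3})$ recalled in Section~2, and then union-bound the first $\lceil 2L\rceil$ i.i.d.\ $\mathrm{Exp}(1)$ gaps $x_i-x_{i-1}$ via $\mathbb{P}(x_i-x_{i-1}>t)=e^{-t}$; this gives the same conclusion with $e^{-t}$ in place of $e^{-t/2}$.
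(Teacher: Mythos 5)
Your argument is correct: the covering/union-bound via the Poisson void probability $e^{-\ell}$ (or, in your variant, over the i.i.d.\ $\mathrm{Exp}(1)$ gaps after restricting to $\{m(w)\le 2L\}$) gives $\mathbb{P}(\text{largest piece}>\log(L)\log\log(L))\le CL^{1-\frac12\log\log L}=O(L^{-\infty})$, and the bookkeeping points you flag (randomness of $m(w)$, the conditioned point $x_{0}=0$) are handled properly. Note that the paper does not prove this proposition itself but defers it to Appendix~A of \cite{Klopp2020}; your proof is the standard elementary argument for this estimate and is of the same nature as the one in that reference, so nothing further is needed.
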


\begin{proposition}\label{nb1}\cite{Klopp2020}
Fix $ \beta\in (\frac{2}{3},1) $. For $ L $ large and $ a,b\in[0,\log(L)\log(\log(L))] $, with probability $ 1-O(L^{-\infty}) $ the number of pieces of length contained in $ [a,b] $ is equal to
$$ L(e^{-a}-e^{-b})+R_{L}L^{\beta} $$
where $ \vert R_{L} \vert $ is bounded.
\end{proposition}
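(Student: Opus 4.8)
This is a classical concentration estimate for a renewal process, so the plan is to reduce it to Chernoff--Hoeffding bounds for sums of i.i.d.\ indicator variables, with a short discretization step to make the bound uniform in $(a,b)$. Because $x_{0}=0$ is a point of the process, the piece lengths $\ell_{i}:=\vert\Delta_{i}\vert$ are i.i.d.\ $\mathrm{Exp}(1)$ random variables (the last piece, truncated by the right endpoint $L$, contributes only a harmless $O(1)$ to every count and can be ignored), and the number of pieces $m=m(w)$ is the largest $k$ with $\ell_{1}+\dots+\ell_{k}\leq L$. For $k\in\N$ and $0\leq a\leq b$ put
\[
\tilde N_{k}([a,b]):=\#\{i\leq k:\ \ell_{i}\in[a,b]\},\qquad p_{a,b}:=e^{-a}-e^{-b}=\PP(\ell_{1}\in[a,b]),
\]
so that the quantity to estimate is $\tilde N_{m}([a,b])$, and $k\mapsto\tilde N_{k}([a,b])$ is non-decreasing.

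First I would control $m$. By a Cram\'er-type large deviation bound for the partial sums $\ell_{1}+\dots+\ell_{k}$ (equivalently, for a $\mathrm{Poisson}(L)$ variable) one has $\PP(\vert m-L\vert>L^{2/3})=O(L^{-\infty})$; this is exactly the fact recalled when the model is introduced. On the complementary event, with $k^{-}=\lfloor L-L^{2/3}\rfloor$ and $k^{+}=\lceil L+L^{2/3}\rceil$, monotonicity gives $\tilde N_{k^{-}}([a,b])\leq\tilde N_{m}([a,b])\leq\tilde N_{k^{+}}([a,b])$ for every $a\leq b$, so it suffices to control the $O(L^{2/3})$ deterministic quantities $\tilde N_{k}$, $k\in\{k^{-},\dots,k^{+}\}$.

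For a fixed deterministic $k$ and a fixed interval $[a,b]$, $\tilde N_{k}([a,b])$ is a sum of $k$ independent $\mathrm{Bernoulli}(p_{a,b})$ variables, so Hoeffding's inequality gives $\PP\bigl(\vert\tilde N_{k}([a,b])-kp_{a,b}\vert>s\bigr)\leq 2e^{-2s^{2}/k}$. To gain uniformity in $(a,b)$, I would fix $\beta\in(2/3,1)$, set $\eta=L^{\beta-1}$, and let $G\subset[0,\log(L)\log(\log(L))]$ be the grid of mesh $\eta$, so that $\#G$ is polynomial in $L$. Applying the Hoeffding bound with $s=\tfrac12 L^{\beta}$ to each of the polynomially many triples $(a',b',k)\in G^{2}\times\{k^{-},\dots,k^{+}\}$, and using $e^{-2s^{2}/k}\leq e^{-cL^{2\beta-1}}$ with $2\beta-1>1/3>0$, a union bound shows that with probability $1-O(L^{-\infty})$ one has $\vert\tilde N_{k}([a',b'])-kp_{a',b'}\vert\leq\tfrac12 L^{\beta}$ simultaneously over all these triples.

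Finally I would remove the discretization: given arbitrary $a\leq b$ in the window, choose $a',a'',b',b''\in G$ with $a'\leq a\leq a''$, $b'\leq b\leq b''$ and all gaps $\leq\eta$; then $[a'',b']\subseteq[a,b]\subseteq[a',b'']$, whence $\tilde N_{k}([a'',b'])\leq\tilde N_{k}([a,b])\leq\tilde N_{k}([a',b''])$, while the $1$-Lipschitz property of $x\mapsto e^{-x}$ gives $\vert p_{a',b''}-p_{a,b}\vert,\vert p_{a'',b'}-p_{a,b}\vert\leq 2\eta$. Combining this with the grid estimate, with $\vert k-L\vert\leq L^{2/3}+1\leq L^{\beta}$, and with $p_{a,b}\leq 1$, one gets $\vert\tilde N_{k}([a,b])-Lp_{a,b}\vert\leq CL^{\beta}$ for all relevant $k$ and all $a\leq b$; specialising $k$ to $m$ via the first step yields $\#\{i\leq m:\ \ell_{i}\in[a,b]\}=L(e^{-a}-e^{-b})+R_{L}L^{\beta}$ with $\vert R_{L}\vert$ bounded. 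The only delicate point is the bookkeeping in the union bound, and it is precisely here that the hypothesis $\beta>2/3$ enters: the overwhelmingly likely fluctuation $O(L^{2/3})$ of the number of pieces $m$ has to be absorbed into $L^{\beta}$, whereas the Bernoulli fluctuations (of size $\sim\sqrt{L\log L}$ at the $O(L^{-\infty})$ probability level) are automatically smaller and impose no further constraint.
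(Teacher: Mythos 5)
Your argument is correct, and it is worth noting that the paper itself does not prove Proposition \ref{nb1}: it is quoted from \cite{Klopp2020}, with the proof deferred to Appendix A of that reference. Your proposal therefore supplies a self-contained proof along the standard lines one would expect there: the Palm conditioning $x_{0}=0$ makes the spacings i.i.d.\ $\mathrm{Exp}(1)$, the random number of pieces is sandwiched between deterministic indices $k^{\pm}=L\pm O(L^{2/3})$ on an event of probability $1-O(L^{-\infty})$ (this is exactly the large deviation fact recalled in Section 2), monotonicity of $k\mapsto\tilde N_{k}([a,b])$ reduces the problem to deterministic $k$, and Hoeffding plus a union bound over a polynomial grid handles the binomial fluctuations, which at the $O(L^{-\infty})$ level are of size $o(L^{\beta})$ for any $\beta>\tfrac12$. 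You also identify correctly where the hypothesis $\beta>\tfrac23$ is actually used, namely to absorb the $O(L^{2/3})$ uncertainty in the number of pieces into the error term $R_{L}L^{\beta}$; the discretization step and the $1$-Lipschitz bound $\vert p_{a',b''}-p_{a,b}\vert\leq 2\eta$ are routine and handled properly. Two cosmetic remarks: by monotonicity you only need the two extreme indices $k^{-},k^{+}$ in the union bound rather than the whole range (harmless either way), and when you invoke $\tilde N_{k^{+}}$ you are implicitly extending the spacing sequence beyond $[0,L]$ using the Poisson process on all of $\R$, which is legitimate but could be said explicitly. In addition, your proof gives the estimate uniformly in $(a,b)$ over the allowed window, which is slightly stronger than the statement as quoted and is in fact the form in which the proposition is used later (e.g.\ in the proof of Proposition \ref{NUlambda}, where it is applied simultaneously to $O(\rho^{-\beta})$ discretization cells).
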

\begin{proposition}\label{nb2}\cite{Klopp2020}
Fix $ \beta\in (\frac{2}{3},1)  $ and $ r\geq 2 $. For $ L $ large and $ (a_{i})_{1\leq i\leq r} ,(b_{i})_{1\leq i \leq r},(c_{i})_{1\leq i\leq r-1} $ and $ (d_{i})_{1\leq i \leq r-1} $ some positive sequences, with probability $ 1-O(L^{-\infty}) $, the number of pieces such that the length of $ i $-th piece (from left to right) is contained in $ [a_{i},b_{i}] $, the distance with the $(i+1) $-th piece is contained in $ [c_{i},d_{i}] $, is equal to
$$ L\prod_{i=1}^{r-1}(d_{i}-c_{i})\prod_{j=1}^{r}(e^{-a_{j}}-e^{-b_{j}})+R_{L}L^{\beta} $$
where $ \vert R_{L} \vert $ is bounded.
\end{proposition}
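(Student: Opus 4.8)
This is the statement of \cite{Klopp2020} recorded above; it is the $r$-point analogue of Proposition~\ref{nb1}, and the plan is to recall how it is obtained. By an \emph{admissible $r$-tuple} I mean a family of pieces $(\Delta_{k_1},\dots,\Delta_{k_r})$ with $k_1<\dots<k_r$, with $|\Delta_{k_i}|\in[a_i,b_i]$ for $1\le i\le r$ and $\textbf{d}_{k_i,k_{i+1}}\in[c_i,d_i]$ for $1\le i\le r-1$; write $N_r$ for the number of admissible tuples. The argument splits into an exact computation of $\mathbb{E}[N_r]$ and a concentration estimate promoting it to an almost sure statement up to an error $O(L^\beta)$.

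\emph{Step 1 (the mean).} I would encode an $r$-tuple through its $2r$ endpoints $p_1<\dots<p_{2r}$, so $\Delta^{(i)}=[p_{2i-1},p_{2i}]$: being a genuine piece means $X(w)\cap(p_{2i-1},p_{2i})=\emptyset$, whereas the gaps $(p_{2i},p_{2i+1})$ carry no constraint on $X(w)$. By the multivariate Mecke formula for the rate-one Poisson process (equivalently, by the renewal description in which successive piece lengths are i.i.d.\ exponential and the renewal function of the process is linear, which is what produces the clean factor $d_i-c_i$), the only contribution of the law of $X(w)$ is the void factor $e^{-(p_{2i}-p_{2i-1})}$ per piece, so that
\begin{equation*}
\mathbb{E}[N_r]=\int_{0\le p_1<\dots<p_{2r}\le L}\ \prod_{i=1}^{r}\mathbf{1}_{[a_i,b_i]}(p_{2i}-p_{2i-1})\ \prod_{i=1}^{r-1}\mathbf{1}_{[c_i,d_i]}(p_{2i+1}-p_{2i})\ \prod_{i=1}^{r}e^{-(p_{2i}-p_{2i-1})}\ dp_1\cdots dp_{2r}.
\end{equation*}
In the coordinates $x_1=p_1$, $\ell_i=p_{2i}-p_{2i-1}$, $g_i=p_{2i+1}-p_{2i}$ (unit Jacobian) the $x_1$-integral runs over an interval of length $L-\sum_i\ell_i-\sum_i g_i=L+O(\log L\log\log L)$ (using $a_i,b_i,c_i,d_i\le\log L\log\log L$), and the rest factorises to give $\mathbb{E}[N_r]=L\prod_j(e^{-a_j}-e^{-b_j})\prod_i(d_i-c_i)+O(\log L\log\log L)$, an error that is $O(L^\beta)$ since $\beta>2/3$; the conditioning $x_0(w)=0$ affects only the $O(\log L\log\log L)$ tuples through the origin.

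\emph{Step 2 (concentration).} Put $W=\log L\log\log L$ and restrict to the event $\mathcal{A}$ — of probability $1-O(L^{-\infty})$ by Proposition~\ref{taillemax} — on which no piece exceeds $W$, so every admissible tuple has spatial extent at most $T:=rW+\sum_i d_i$. I would cut $[0,L]$ into consecutive blocks of length $\ell_0=(\log L)^2>T$ and let $N^{(j)}$ count the admissible tuples whose leftmost endpoint lies in the $j$-th block; on $\mathcal{A}$, $N_r=\sum_j N^{(j)}$ up to the $O((\log L)^{2r})$ tuples begun in the last block(s), which cannot fit in $\Lambda$. Each $N^{(j)}$ is measurable with respect to $X(w)$ on the $j$-th and $(j+1)$-th blocks only, so $\{N^{(j)}:j\text{ even}\}$ and $\{N^{(j)}:j\text{ odd}\}$ are each families of independent variables, and — off a further $O(L^{-\infty})$ event, by a Poisson tail and a union bound over blocks — each $N^{(j)}\le M_0:=(C(\log L)^2)^r$. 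Hoeffding's inequality on each of the two families with deviation $\tfrac12 L^\beta$ gives a bound $\exp\!\big(-cL^{2\beta-1}(\log L)^{2-4r}\big)=O(L^{-\infty})$ because $\beta>\tfrac12$; adding the two halves and the exceptional events concludes.

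\emph{Main difficulty.} The sensitive point is entirely in Step 2: one must express the count through quantities depending on $X(w)$ on \emph{bounded spatial windows} — which is exactly why the uniform bound $W$ on piece lengths (Proposition~\ref{taillemax}) and the resulting truncation are needed — so as to obtain genuine independence across separated blocks, hence an exponential rather than merely polynomial (Chebyshev) concentration rate matching the $O(L^{-\infty})$ demanded by the statement.
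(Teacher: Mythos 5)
This proposition is quoted from \cite{Klopp2020}; the present paper gives no proof (it refers to Appendix A of that reference), so there is nothing internal to compare against. Your argument is sound and is of the same nature as the standard proof of these piece-statistics estimates: the Mecke (equivalently, renewal/exponential-length) computation gives exactly the main term $L\prod_i(d_i-c_i)\prod_j(e^{-a_j}-e^{-b_j})$ with a polylogarithmic boundary error, and an almost-independent block decomposition plus an exponential concentration bound upgrades this to an almost-sure statement with error $O(L^{\beta})$, $\beta>\tfrac12$ sufficing. Two small points to tighten: the window restriction must be hard-wired into the definition of $N^{(j)}$ (count only tuples of extent $\le T$ starting in block $j$), not merely imposed through the event $\mathcal{A}$, so that $N^{(j)}$ is genuinely measurable with respect to two blocks and the parity classes are independent, with $\sum_j N^{(j)}=N_r$ recovered on $\mathcal{A}$; and since $N^{(j)}\le M_0$ only holds off an $O(L^{-\infty})$ event, Hoeffding should be applied to the truncated variables $\min(N^{(j)},M_0)$, checking (via Poisson tails and Cauchy--Schwarz) that the truncation shifts neither the sum nor its expectation by more than $o(L^{\beta})$. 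You should also state explicitly the implicit hypothesis $a_i,b_i,c_i,d_i\le \log L\log\log L$ (as in Proposition \ref{nb1}), which is what makes the extent bound $T\ll(\log L)^2$ and the error bookkeeping work.
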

The proofs of Propositions \ref{taillemax}, \ref{nb1} and \ref{nb2} are in Appendix A of \cite{Klopp2020}.
From these propositions, we derive the following lemma.
\begin{lemma}\label{nb3}
Fix $ \beta\in (\frac{2}{3},1) $ and refer to the specific terminology in Definition \ref{chain}. For $ L $ large and $ a,b,c,d,f,g \in [l_{\rho,U},\log(L)\log\log(L)] $, with probability $ 1-O(L^{-\infty}) $,
\begin{enumerate}
\item the number of chains of size $ 1 $ with length contained in $ [a,b] $ is 
$$ L(1-Me^{-l_{\rho,U}})^{2}(e^{-a}-e^{-b})+S_{L}L^{\beta} $$
where $ \vert S_{L} \vert $ is bounded;
\item the number of chains of size $ 2 $ such that the length of the left piece is contained in $ [a,b] $, the length of the right piece is contained in $ [c,d] $ and the distance between the pieces is contained in $ [f,g] $, is equal to
$$ L(1-Me^{-l_{\rho,U}})^2(g-f)(e^{-a}-e^{-b})(e^{-c}-e^{-d})+S_{L}L^{\beta} $$
where $ \vert S_{L} \vert $ is bounded.
\end{enumerate} 
\end{lemma}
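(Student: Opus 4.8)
\emph{Proof plan.} The idea is to realize a chain as a maximal \emph{isolated} block of long pieces (length $\ge l_{\rho,U}$) and to exploit the independence of the Poisson point process on disjoint intervals in order to factor out the isolation condition; once this is done, Propositions \ref{taillemax}, \ref{nb1} and \ref{nb2} supply all the counts with the required $O(L^{\beta})$ error. The argument is the same for the two items, only the amount of data attached to a chain changing.

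First I would record the combinatorial description. A chain of size $1$ with length in $[a,b]$ is a piece $\Delta_{i}$ with $|\Delta_{i}|\in[a,b]$ (automatically $\ge l_{\rho,U}$) such that, walking away from $\Delta_{i}$ on either side, one meets only short pieces until the accumulated length exceeds $M$; a chain of size $2$ is described the same way, with the extra data of two pieces $\Delta_{j},\Delta_{k}$ (lengths in $[a,b]$, resp.\ $[c,d]$) separated by a run of short pieces of total length in $[0,M]\supset[f,g]$, and the same outer isolation on the far left of $\Delta_{j}$ and far right of $\Delta_{k}$. Conditioning on the endpoints of the pieces involved being Poisson points, the restrictions of the process to the disjoint intervals ``left of the chain'', ``the chain'', ``(for size $2$) the inner gap'', ``right of the chain'' are independent Poisson processes. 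Hence the \emph{expected} number of size‑$1$ chains with length in $[a,b]$ factors as $L\,(e^{-a}-e^{-b})\,p_{M}^{2}$, and the size‑$2$ count as $L\,(e^{-a}-e^{-b})(e^{-c}-e^{-d})\,p_{M}^{2}$ times a gap factor, where $p_{M}$ is the one‑sided isolation probability.

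To evaluate $p_{M}$, I would write the one‑sided isolation as the complement of the event that some long piece lies within distance $M$, and sum the pairwise disjoint contributions according to which piece that is (they are disjoint because a long piece already has length $>M\ge$ any admissible gap): this gives
\[
1-p_{M}=e^{-l_{\rho,U}}\sum_{k\ge 1}\mathbb{P}\big(\mathrm{Pois}(M)\ge k\big)=Me^{-l_{\rho,U}},
\]
so $p_{M}=1-Me^{-l_{\rho,U}}$, producing the factor $(1-Me^{-l_{\rho,U}})^{2}$ in both formulas; the inner gap of a size‑$2$ chain, being the length of a short‑piece run conditioned to terminate before $M$, contributes exactly the factor $(g-f)$ appearing in Proposition \ref{nb2}. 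It then remains to pass from expectations to high‑probability counts. By Proposition \ref{taillemax}, with probability $1-O(L^{-\infty})$ every piece has length at most $\log L\log\log L$, so the isolation condition around any chain depends only on pieces within distance $O(\log L\log\log L)$; truncating the dependence at that scale and expanding the (outer, and for size $2$ inner) short‑piece runs according to their number of pieces, each resulting term is a count of consecutive tuples of pieces with prescribed length‑ and gap‑ranges, i.e.\ exactly what Propositions \ref{nb1} and \ref{nb2} compute with error $O(L^{\beta})$. Summing these contributions, with each extra flanking short piece costing a factor $1-e^{-l_{\rho,U}}<1$ so that the series converges geometrically, yields the announced formulas with a bounded coefficient $S_{L}$ in front of $L^{\beta}$.

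The main obstacle is precisely this last step: the isolation condition has a priori unbounded range of dependence and the expansion over the flanking short‑piece runs produces a number of terms that grows with $L$, so one must verify that the $\log L\log\log L$ truncation together with the geometric decay keeps the accumulated $O(L^{\beta})$ errors under control, uniformly for $\rho$ small; everything else is bookkeeping on top of the already established Propositions \ref{nb1} and \ref{nb2}.
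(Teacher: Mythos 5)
Your leading-order heuristic (independence of the Poisson process on disjoint regions, times a two-sided isolation factor) is the right picture, but the proof as proposed has a genuine gap exactly at the step you yourself flag as ``the main obstacle'', and that step is where the content of the lemma lies. Propositions \ref{nb1} and \ref{nb2} count tuples of pieces whose mutual \emph{distances} are prescribed in intervals, with the Lebesgue factor $\prod_i(d_{i}-c_{i})$; they do not count runs of \emph{consecutive} pieces (adjacency is an atom at distance $0$ that the factor $(d_{i}-c_{i})$ cannot see), so the terms of your expansion over flanking short-piece runs are not ``exactly what Propositions \ref{nb1} and \ref{nb2} compute''. Moreover that expansion has a number of terms growing with $L$ (a window of length $M$ may contain arbitrarily many points), each term carrying its own $O(L^{\beta})$ error; the advertised geometric decay, with ratio $1-e^{-l_{\rho,U}}$ close to $1$, acts on the main terms only and does nothing to sum the accumulated error terms, so the claimed bounded $S_{L}$ is not obtained. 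Finally, the displayed identity for $p_{M}$ is derived with an off-by-one bookkeeping slip: the nearest piece on each side of a chain is at distance $0\leq M$, so under Definition \ref{chain} it must itself be short, while your sum $\sum_{k\geq 1}\mathbb{P}(\mathrm{Pois}(M)\geq k)$ starts counting one piece too late; a renewal-type computation of the kind you propose must get precisely such terms right, since they enter at the same order $e^{-l_{\rho,U}}$ as the quantity being computed.

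The paper avoids all of this with a finite inclusion--exclusion on the \emph{complement} of the isolation event. Writing $\mathcal{P}_{a,b}$ for the set of pieces with length in $[a,b]$, $A$ (resp.\ $B$) for those elements of $\mathcal{P}_{a,b}$ having a piece of length $\geq l_{\rho,U}$ at distance $\leq M$ to the right (resp.\ left), the size-$1$ chains with length in $[a,b]$ are exactly $\mathcal{P}_{a,b}\setminus(A\cup B)$, and $\#(A\cup B)=\#A+\#B-\#(A\cap B)$; since $l_{\rho,U}>M$ for small $\rho$, at most one long piece can sit within distance $M$ on a given side, so $\#A$, $\#B$ and $\#(A\cap B)$ are counts of pairs and triples of pieces with prescribed length and distance ranges, i.e.\ directly of the form handled by Propositions \ref{nb1} and \ref{nb2}, each with a single $O(L^{\beta})$ error; expanding gives the factor $(1-Me^{-l_{\rho,U}})^{2}$, and the size-$2$ case is identical with pairs replaced by triples and quadruples built on $\mathcal{R}_{a,b,c,d}$. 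You actually state the key fact (a long piece is longer than $M$, hence unique within distance $M$ on each side) but use it only to call contributions pairwise disjoint; used as the paper uses it, it terminates the expansion after one term and removes any need for the $\log L\log\log L$ truncation or for controlling a growing number of $O(L^{\beta})$ errors.
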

\proof

\begin{enumerate}
\item Let $ \mathcal{P}_{a,b}:= \{ i\in \llbracket 1,m \rrbracket, \, l_{i}\in [a,b] \}$. Then,
\begin{align*}
\{\text{chain of size } 1\}\cap \mathcal{P}_{a,b}= \mathcal{P}_{a,b}\backslash \Big( &\{i\in \mathcal{P}_{a,b},\,  \exists j>i, \, l_{j}\geq l_{\rho,U}, \, \textbf{d}_{i,j}\leq M \} \\ 
&\cup \{i\in \mathcal{P}_{a,b},\, \exists j<i, \, l_{j}\geq l_{\rho,U}, \, \textbf{d}_{j,i}\leq M \}\Big)
\end{align*}
We use Proposition \ref{nb1}, Proposition \ref{nb2} and  $ \# (A\cup B)=\# A +\# B -\#(A\cap B) $ to conclude.
\item Let $ \mathcal{R}_{a,b,c,d}:= \{ (i,j)\in \llbracket 1,m \rrbracket^{2}, \, i<j, \, l_{i}\in [a,b], \, l_{j}\in [c,d] \}$. Then,
\begin{align*}
 \{\text{chain of size } 2 \}\cap \mathcal{R}_{a,b,c,d}&=\mathcal{R}_{a,b,c,d}\backslash \Big( \{(i,j)\in \mathcal{R}_{a,b,c,d},\,  \exists k>j, \, l_{k}\geq l_{\rho,U}, \, \textbf{d}_{j,k}\leq M \} \\
& \qquad \qquad \qquad \cup \{i\in \mathcal{R}_{a,b,c,d},\, \exists k<i, \, l_{k}\geq l_{\rho,U}, \, \textbf{d}_{k,i}\leq M \}\Big)
\end{align*}
We conclude as for (1).
\end{enumerate}

\subsection{Bounds for the interaction of two particles}
\begin{lemma}\label{calculinteraction}
Set, for $ [a,b] \subset \R $ a finite interval and $ i\in \N $,
\[ \phi^{[a,b]}_{i}(x)=\frac{\sqrt{2}}{\sqrt{b-a}}\sin\big(\frac{\pi}{b-a} i(x-a)\big)\mathbf{1}_{[a,b]}(x).\]
For $ p,q\in \N $, if $ l>0 $ is large enough,
\begin{equation}
\int U(y-x)\Big\vert \phi^{[0,l]}_{p}\wedge \phi^{[0,l]}_{q}\Big\vert^{2}(x,y)dxdy\leq Cl^{-3}(p^{2}+q^{2})
\end{equation}
and if $ l'>0 $ is also large enough, and $ 0\leq d\leq M $,
\begin{equation}
\int U(y-x)\Big\vert \phi^{[-l',0]}_{p}\wedge \phi^{[d,d+l]}_{q}\Big\vert^{2}(x,y)dxdy\leq Cl^{-3}l'^{-3}p^{2}q^{2}
\end{equation}
with $ C>0 $ that only depends on $ U $.
\end{lemma}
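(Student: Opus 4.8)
The plan is to use in both estimates that $U$ is compactly supported, so that only pairs $(x,y)$ with $y-x\in\supp(U)$ contribute, and that by Assumptions \ref{hypoU} and \ref{finiterange} the set $\supp(U)$ is symmetric about $0$ of diameter $M$, hence contained in $[-M,M]$. In each part I would rewrite $|\phi\wedge\psi|^2(x,y)=\tfrac12\big(\phi(x)\psi(y)-\phi(y)\psi(x)\big)^2$ and substitute $t=y-x$; the gain over the crude $L^\infty$ bound comes entirely from the smallness of the antisymmetrized combination on the relevant range of $t$.

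For the first bound I translate so that $[a,b]=[0,l]$ and abbreviate $\phi_i=\phi^{[0,l]}_i$; extending each $\phi_i$ by $0$ off $[0,l]$,
\[
\int U(y-x)\,|\phi_p\wedge\phi_q|^2(x,y)\,dxdy=\frac12\int_{\supp U}U(t)\Big(\int_{\R}\big(\phi_p(x)\phi_q(x+t)-\phi_p(x+t)\phi_q(x)\big)^2dx\Big)dt .
\]
The inner integrand vanishes unless $x$ and $x+t$ both lie in $[0,l]$, and there the product-to-sum and sum-to-product identities give the exact formula
\[
\phi_p(x)\phi_q(x+t)-\phi_p(x+t)\phi_q(x)=\frac2l\Big[\sin\big(\tfrac{\pi(p-q)x}{l}+\tfrac{\pi(p-q)t}{2l}\big)\sin\big(\tfrac{\pi(p+q)t}{2l}\big)-\sin\big(\tfrac{\pi(p+q)x}{l}+\tfrac{\pi(p+q)t}{2l}\big)\sin\big(\tfrac{\pi(p-q)t}{2l}\big)\Big].
\]
Using $|\sin u|\le|u|$ on the two factors that carry $t$, $|\sin|\le1$ on the others, and $|t|\le M$ on $\supp U$, this quantity is $\le\tfrac{\pi M}{l^2}\big((p+q)+|p-q|\big)=\tfrac{2\pi M}{l^2}\max(p,q)$; squaring, integrating $dx$ over $[0,l]$ and then against $U(t)\,dt$ produces $Cl^{-3}(p^2+q^2)$ with $C=2\pi^2M^2\|U\|_{L^1}$, which depends only on $U$.

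For the second bound I set $\psi=\phi^{[-l',0]}_p$ and $\chi=\phi^{[d,d+l]}_q$; their supports overlap in at most one point, so $\psi\chi\equiv0$ a.e. and the cross term drops out of the expanded square:
\[
\int U(y-x)\,|\psi\wedge\chi|^2(x,y)\,dxdy=\frac12\int U(y-x)\big(\psi(x)^2\chi(y)^2+\psi(y)^2\chi(x)^2\big)dxdy .
\]
In the first term we need $x\in[-l',0]$, $y\in[d,d+l]$ and $y-x\in\supp U$; writing $y-x=(y-d)+d+(-x)$, the three summands are $\ge0$ and sum to $\le M$, which forces $-x\le M$ and $y-d\le M$, i.e. $x$ and $y$ lie within distance $M$ of the Dirichlet nodes $0$ of $\psi$ and $d$ of $\chi$. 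There $|\psi(x)|=\sqrt{2/l'}\,|\sin(\pi p(-x)/l')|\le\sqrt2\,\pi pM\,{l'}^{-3/2}$ and $|\chi(y)|\le\sqrt2\,\pi qM\,l^{-3/2}$, so on a region of area $\le M^2$ the integrand is $\le4\pi^4\|U\|_\infty M^4p^2q^2\,l^{-3}{l'}^{-3}$; the $\psi(y)^2\chi(x)^2$ term is treated identically, giving $Cl^{-3}{l'}^{-3}p^2q^2$ with $C=4\pi^4\|U\|_\infty M^6$.

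There is no genuine obstacle here; the only step requiring care is extracting the extra power of $l^{-1}$ (resp. $l^{-1}{l'}^{-1}$) beyond the naive $L^\infty$ estimate. In the single-interval case this is the vanishing of the Slater combination on the diagonal $\{x=y\}$, made quantitative by the displayed trigonometric identity, whereas in the two-interval case it is the linear vanishing of the Dirichlet eigenfunctions at the interval endpoints — the only places where, for a finite-range $U$, both eigenfunctions are simultaneously non-negligible. I note in passing that these estimates in fact hold for every $l,l'>0$, so the hypothesis that $l$ (and $l'$) be large is merely a convenience.
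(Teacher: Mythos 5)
Your argument is correct, and it rests on the same two mechanisms as the paper's own proof: the compact support of $U$ confines the integration to $\vert y-x\vert \leq M$, and the extra decay beyond the naive $L^{\infty}$ bound comes from the vanishing of the Slater combination on the diagonal (first estimate) and of the Dirichlet eigenfunctions at the interval endpoints (second estimate). The implementation, however, is genuinely different. The paper changes variables to $u=y-x$ (resp. $r=-x$, $s=y-d$) and Taylor-expands the sines in $u/l$, obtaining leading terms proportional to $l^{-3}(p^{2}+q^{2})\int U(u)u^{2}\,du$ and $l^{-3}l'^{-3}p^{2}q^{2}\int U(u+d)u^{5}\,du$ plus $O(l^{-4})$-type remainders; this is an asymptotic computation, which is why the statement carries the hypothesis that $l$ and $l'$ be large. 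You instead capture the same cancellation through the exact product-to-sum antisymmetrization identity combined with $\vert \sin u\vert\leq \vert u \vert$, and through crude measure bounds ($\leq l$, resp. $\leq M^{2}$) on the effective integration region; this yields a non-asymptotic inequality valid for every $l,l'>0$, with the $(p,q)$-dependence tracked explicitly at each step, at the price of rougher constants ($2\pi^{2}M^{2}\Vert U\Vert_{L^{1}}$ and $4\pi^{4}M^{6}\Vert U\Vert_{\infty}$ instead of moments of $U$) --- which is all the lemma requires, since $M$, $\Vert U\Vert_{L^{1}}$ and $\Vert U\Vert_{\infty}$ depend only on $U$. Your trigonometric identity checks out (via $\cos X-\cos Y=-2\sin\frac{X+Y}{2}\sin\frac{X-Y}{2}$), your normalization $\vert\phi\wedge\psi\vert^{2}(x,y)=\frac{1}{2}\big(\phi(x)\psi(y)-\phi(y)\psi(x)\big)^{2}$ agrees with the paper's expression once one uses that $U$ is even, and your closing remark that the largeness of $l$, $l'$ is not actually needed for these bounds is accurate and is a small gain over the paper's formulation.
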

\begin{proof}
We derive with changes of variables
\begin{align*}
\int U(y-x)\Big\vert \phi^{[0,l]}_{p}\wedge \phi^{[0,l]}_{q}\Big\vert^{2}(x,y)dxdy&=\int U(y-x)\bigg( \phi^{[0,l]}_{p}(x)^{2}\phi^{[0,l]}_{q}(y)^{2}-\phi^{[0,l]}_{p}(x)\phi^{[0,l]}_{q}(y)\phi^{[0,l]}_{p}(y)\phi^{[0,l]}_{q}(x)\bigg)dxdy\\
 &= 4l^{-1}\int_{-l}^{l}\int_{0}^{1}U(u)\bigg( \sin^{2}(\pi q (ul^{-1}+v))\sin^{2}(\pi pv) \\
 & \quad-\sin (\pi p(u l^{-1}+v))\sin(\pi q(ul^{-1}+v))\sin(\pi pv)\sin(\pi qv)\bigg) dudv \\
 &= 4l^{-3}\int_{-l}^{l}\int_{0}^{1}U(u) \pi^{2}u^{2}\bigg(q^{2}\cos^{2}(\pi pv)\sin^{2}(\pi q v) \\
 &\quad -pq \cos(\pi pv)\cos(\pi qv)\sin(\pi pv)\sin(\pi qv)\bigg)dudv + O(l^{-4})\\
 &\leq 10\pi^{2}l^{-3}(p^{2}+q^2)\int_{\R}U(u)u^{2}du
\end{align*}
and
\begin{align*}
\int U(y-x)\Big\vert \phi^{[-l',0]}_{p}\wedge \phi^{[d,d+l]}_{q}\Big\vert^{2}(x,y)dxdy&=\int U(x-y)\phi^{[-l',0]}_{p}(x)^{2}\phi^{[d,d+l]}_{q}(y)^{2}dxdy \\
&=4l^{-1}l'^{-1}\int_{0}^{l'}\int_{0}^{l}U(r+s+d)\sin^{2}(\pi prl'^{-1})\sin^{2}(\pi qsl^{-1})drds\\
&=4l^{-1}l'^{-1}\int_{0}^{+\infty}\int_{0}^{u}U(u+d)\sin^{2}\big(\pi p (u-v)l'^{-1}\big)\sin^{2}\big(\pi q v l^{-1})\text{d}u\text{d}v \nonumber \\
&= 4\pi^{4}l^{-3}l'^{-3}p^{2}q^{2}\int_{0}^{+\infty}\int_{0}^{u}U(u+d)(u-v)^{2}v^{2}\text{d}u\text{d}v \nonumber \\
&\quad+ O(l^{-4}l'^{-3}+l^{-3}l'^{-4})\nonumber\\
&\leq 8\pi^{4}l^{-3}l'^{-3}p^{2}q^{2}\int_{0}^{+\infty}U(u+d)u^{5}du.
\end{align*}
\end{proof}

\subsection{Proof of Proposition \ref{Epair}}  The ideas and the structure are inspired by the proof of Proposition \ref{Esolo} that one can find in Subsection 6.1.1 of \cite{Klopp2020}.

Set $ l>0 $, $ d>0 $ and $ a\geq 1 $. We consider the operator
\begin{equation}
\bigg(-\frac{d^{2}}{dy^{2}}^{D}_{\vert [-al,0]}\bigg)\otimes I +I\otimes \bigg(-\frac{d^{2}}{dx^{2}}^{D}_{\vert [d,d+l]}\bigg) + U(x-y) \qquad \text{ on } L^{2}([-al,0])\otimes L^{2}([d,d+l]) 
\end{equation}
By scaling, it is unitarily equivalent to the operator  $ l^{-2}H^{l} $ acting on $ L^{2}\big([0,1]^{2}\big) $ where
\begin{equation}
H^{l}=-\frac{\partial^{2}}{\partial x^{2}}-\frac{1}{a^{2}}\frac{\partial^{2}}{\partial y^{2}} + l^{2}U(lx+aly+d)
\end{equation}
with Dirichlet boundary conditions. Denote $ E^{l}_{0} $ the ground state of $ H^{l} $ and let $ H^{0} $ be the free operator. One checks that the eigenvalues of $ H^{0} $ are
\begin{equation} \label{eigv}
 E_{p,q}=\pi^{2}(p^{2}+q^{2}a^{-2})
\end{equation}
for $ p,q\geq 1 $, with the corresponding eigenfunctions 
\begin{equation} \label{eigf}
 \psi_{p,q}(x,y)=2\sin(\pi p x)\sin(\pi q y).
 \end{equation}
 
Set $ E_{0}:=E_{1,1} $, $ \psi_{0}:=\psi_{1,1} $ and $ U^{l}:=H^{l}-H^{0} $. By Lemma \ref{calculinteraction},
\begin{equation}\label{Upsi}
\langle\psi_{0}, U^{l}\psi_{0}\rangle\leq \frac{C}{a^{3}l^{4}}
\end{equation}
So,
\begin{equation}\label{reste}
 E_{0}\leq E^{l}_{0}\leq E_{0}+\langle \psi_{0}, U^{l}\psi_{0}\rangle\leq E_{0}+O(a^{-3}l^{-4})<E_{1,2}.
\end{equation}
Set $ \delta E=E^{l}_{0}-E_{0} $. By the Schur decomposition for $ (\text{Span}(\psi_{0}), \text{Span}(\psi_{0})^{\perp}) $, the eigenvalue equation becomes
\begin{equation}
\Pi_{0}U^{l}\Pi_{0}-(\delta E)\Pi_{0}  -\Pi_{0}U^{l}\Pi_{\perp}(H_{\perp}-E^{l}_{0})^{-1}\Pi_{\perp}U^{l}\Pi_{0}=0
\end{equation}
with $ \Pi_{0}=\vert \psi_{0}\rangle\langle \psi_{0}\vert $ the orthogonal projection on $ \text{Span}(\psi_{0}) $, $ \Pi_{\perp} $ the orthogonal projection on $ \text{Span}(\psi_{0})^{\perp} $ and $ H_{\perp}=H_{\vert \text{Span}(\psi_{0})^{\perp} } $. Note that $ H_{\perp}=\Pi_{\perp}H^{0}\Pi_{\perp}+\Pi_{\perp}U\Pi_{\perp}\geq E_{1,2} $.

We use the following notation
\begin{equation}\label{resperp}
R_{\perp}(z)=\Pi_{\perp}(H_{\perp}-z)^{-1}\Pi_{\perp}.
\end{equation}
We prove that one can replace $ R_{\perp}(E^{l}_{0})$ with $ R_{\perp}(E_{0})$ for some negligible cost. Remark that, as $ \Vert R_{\perp}(E_{0})\Vert\leq (E_{1,2}-E_{0})^{-1} $,
\begin{equation}\label{formres}
R_{\perp}(E^{l}_{0})-R_{\perp}(E_{0})=\sum_{n\geq1}R_{\perp}(E_{0})^{n+1}(\delta E)^{n}=O(\delta E)\leq Cl^{-4}.
\end{equation}
 Then,
 \begin{align*}
 \Big\vert \langle \psi_{0},U^{l}\big( R_{\perp}(E^{l}_{0})-R_{\perp}(E_{0})\big)U^{l}\psi_{0}\rangle\Big\vert&\leq \Big\Vert \sqrt{U^{l}}\Pi_{0}\Big\Vert^{2}\Big\Vert \sqrt{U^{l}}\big( R_{\perp}(E^{l}_{0})-R_{\perp}(E_{0})\big)\sqrt{U^{l}}\Big\Vert \\
 &\leq Cl^{-6}.
 \end{align*}
 using (\ref{Upsi}), (\ref{formres}) and $ \Vert U^{l} \Vert\leq l^{2}\Vert U \Vert_{\infty} $.
 
Thus, $ \delta E=A^{l}+O(l^{-6}) $ where
\begin{equation}
A^{l}=\langle \psi_{0},\big(U^{l}-U^{l}R_{\perp}(E_{0})U^{l}\big)\psi_{0}\rangle.
\end{equation} 
By (\ref{reste}), $ A^{l}=O(l^{-4}) $. 

Now we express $ R_{\perp}(z) $ in terms of $ R^{0}_{\perp}(z)=\Pi_{\perp}(H^{0}_{\perp}-z)^{-1}\Pi_{\perp} $. By Krein's formula, one can check that
\begin{equation}\label{res}
R_{\perp}(z)=\sqrt{ R^{0}_{\perp}(z)}\bigg(1+\sqrt{ R^{0}_{\perp}(z)}U^{l}\sqrt{R^{0}_{\perp}(z)}\bigg)^{-1}\sqrt{R^{0}_{\perp}(z)}.
\end{equation}
Denote 
\begin{equation}\label{renorm}
T^{l}=\sqrt{U^{l}}\sqrt{R^{0}_{\perp}(E_{0})} \quad \text{ and } \quad \phi^{l}_{0}=l^{2}\sqrt{U^{l}}\psi_{0}.
\end{equation}
Using (\ref{res}), one computes
\begin{align}\label{Al}
l^{4}A^{l}&=\langle \phi^{l}_{0},I-T^{l}\big(1+T^{l \star}T^{l}\big)^{-1}T^{l}\phi^{l}_{0}\rangle \nonumber\\
&=\langle \phi^{l}_{0},I-T^{l}T^{l\star}\big(1+T^{l}T^{l\star}\big)^{-1}\phi^{l}_{0}\rangle \nonumber\\
&=\langle \phi^{l}_{0},\big(1+T^{l}T^{l\star}\big)^{-1}\phi^{l}_{0}\rangle .
\end{align}

Define the partial isometry
$ \Gamma^{l}:\begin{cases} L^{2}\big([0,1]^{2}\big)&\longrightarrow L^{2}(\Omega) \\ \qquad f &\longmapsto \frac{1}{l\sqrt{a}}\mathbf{1}_{\Omega^{l}}(f\circ\gamma^{-1})\end{cases} $
where
\begin{equation}
\Omega= \Big\{ (u,v)\in \R_{+\star}^{2}, \, u> v \Big\} \quad \text{ and } \quad\Omega^{l}=\Big\{ (u,v)\in \R_{+\star}^{2}, \, \frac{u-v}{l}\in (0,1 ), \, \frac{v}{al}\in (0,1) \Big\}
\end{equation}
are two domains of $ \R^{2}_{+\star} $ and $ \gamma: \begin{cases} [0,1]^{2}&\longrightarrow \qquad\Omega \\ (x,y)&\longmapsto (l(x+ay),lay)\end{cases} $.

 Then, using (\ref{eigf}) and (\ref{renorm}), one computes
\begin{align}
\Gamma^{l}\phi^{l}_{0}(u,v)&=\frac{2l^{2}}{\sqrt{a}}\sqrt{U(u+d)}\sin\Big(\pi\frac{u-v}{l}\Big)\sin\Big(\pi \frac{v}{al}\Big)\mathbf{1}_{\Omega^{l}}(u,v)\nonumber\\
&=\frac{2\pi^{2}}{\sqrt{a}}\sqrt{U(u+d)}\Big( \frac{(u-v)v}{a}+\frac{(u-v)v}{l}g_{l,a}(u,v) \Big)\mathbf{1}_{\Omega^{l}}(u,v)
\end{align}
where $ g_{l,a} $ is a bounded continuous function. So, by dominated convergence theorem, the sequence $ \big(a^{3/2}\Gamma^{l}\phi^{l}_{0}\big)_{l>0} $ admits the following limit in $ L^{2}(\Omega) $ when $ l\to +\infty $:
\begin{equation}\label{limphi}
\varphi(u,v)=2\pi^{2}\sqrt{U(u+d)}(u-v)v. 
\end{equation} 

Otherwise, we use the notations (\ref{eigv}) and (\ref{eigf}) to write  the kernel $ K^{l} $ of $ \Gamma^{l}T^{l}T^{l\star}\Gamma^{l\star}  $.
$$ K^{l}(u,v,u',v')=\sum_{(p,q)\neq (1,1)}\frac{\mathbf{1}_{\Omega^{l}}(u,v)\mathbf{1}_{\Omega^{l}}(u',v')}{a(E_{p,q}-E_{0})}\sqrt{U(u+d)}\sqrt{U(u'+d)}\psi_{p,q}\Big(\frac{u-v}{l},\frac{v}{al}\Big)\psi_{p,q}\Big(\frac{u'-v'}{l},\frac{v'}{al}\Big). $$
If $ f\in \mathcal{C}_{c}^{\infty}(\Omega) $ then, for $ l $ large enough, $ f\in \mathcal{C}_{c}^{\infty}(\Omega^{l})  $ and
\begin{equation}\label{noy2}
 \Gamma^{l}T^{l}T^{l\star}\Gamma^{l\star}f(u,v)= \frac{4}{al^{2}}\sum_{(p,q)\neq (1,1)}\frac{\mathbf{1}_{\Omega^{l}}(u,v)}{\pi^{2}(\frac{p^{2}}{l^{2}}+\frac{q^{2}}{(al)^{2}})-\frac{E_{0}}{l^{2}}}\sqrt{U(u+d)}\sin\Big(\pi(u-v)\frac{p}{l}\Big)\sin\Big(\pi v\frac{q}{al}\Big)G_{f}\Big(\frac{p}{l},\frac{q}{al}\Big)
\end{equation}
where $G_{f}(\xi,\eta)=\int_{\Omega}\sqrt{U(u'+d)}f(u',v')\sin(\pi u'\xi)\sin(\pi v'\eta)\,du'dv' $.
By Riemann's summation, the limit for $ l\to +\infty $ of (\ref{noy2}) is
\begin{equation}\label{limL}
L(u,v)=\frac{4}{\pi^{2}}\mathbf{1}_{\Omega}(u,v)\sqrt{U(u+d)}\iint_{\R_{+}^{2}}\frac{1}{x^{2}+y^{2}} \sin(\pi(u-v)x)\sin(\pi v y)G_{f}(x,y)\, dxdy.
\end{equation}
Using $ g(s,t)=\sqrt{U(s+d)}f(s,t) $ and its Fourier transform $ \mathcal{F}_{g}(\xi,\eta)=\int_{\R_{+}^{2}}g(s,t)e^{is\xi+it\eta}\,dsdt $, one computes
$$ G_{f}(x,y)=\frac{1}{4}\Big(-\mathcal{F}_{g}(x,y-x)+\mathcal{F}_{g}(x,-y-x)-\mathcal{F}_{g}(-x,x-y)+\mathcal{F}_{g}(-x,x+y)\Big). $$
Then, (\ref{limL}) becomes
\begin{equation}\label{limL2}
 L(u,v)=-\frac{1}{\pi^{2}} \mathbf{1}_{\Omega}(u,v)\sqrt{U(u+d)}\iint_{\R^{2}}\frac{1}{x^{2}+y^{2}} \sin(\pi(u-v)x)\sin(\pi v y)\mathcal{F}_{g}(x,y-x)\, dxdy.
\end{equation}
\begin{lemma}\label{Sdef}
Define $ S $ on $ \mathcal{C}_{c}^{\infty}(\Omega) $ such that $ Sf=L $, given by (\ref{limL2}). Then, the operator $ S $ is well-defined and is extended to a bounded operator on $ L^{2}(\Omega) $.
\end{lemma}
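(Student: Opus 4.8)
The plan is to show that $S$ extends to a bounded operator on $L^2(\Omega)$ by recognizing the operator $\Gamma^l T^l T^{l\star}\Gamma^{l\star}$ (whose strong limit on $\mathcal{C}_c^\infty(\Omega)$ is $S$) as a compression of an explicit, classical convolution-type operator on the plane. Concretely, I would first rewrite $L = Sf$ using Plancherel: since $\iint_{\mathbb{R}^2}\frac{1}{x^2+y^2}\sin(\pi(u-v)x)\sin(\pi v y)\mathcal{F}_g(x,y-x)\,dxdy$ is, after the affine change of variables $(x,y)\mapsto(x,y-x)$ (which has unit Jacobian and turns $x^2+y^2$ into $x^2+(x+y)^2$, a positive definite quadratic form in $(x,y)$), the Fourier multiplier with symbol $m(x,y)=\big(x^2+(x+y)^2\big)^{-1}$ applied to $g$, evaluated against the oscillatory kernel coming from the $\sin\cdot\sin$ factors. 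The key point is that $(x,y)\mapsto \big(x^2+(x+y)^2\big)^{-1}$ is, up to a linear change of coordinates, the symbol of (a constant multiple of) the inverse Laplacian $(-\Delta)^{-1}$ on $\mathbb{R}^2$, whose kernel is $-\frac{1}{2\pi}\log|\cdot|$ — a locally integrable function.

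The main steps, in order, would be: (i) unfold the $\sin$'s into exponentials and perform the linear substitution so that $S$ is expressed as $f\mapsto \sqrt{U(u+d)}\,\big[(-\Delta)^{-1}(\sqrt{U(\cdot+d)}f)\big]$ composed with reflections $(u,v)\mapsto \pm u,\pm(u-v),\dots$ and restricted to $\Omega$; since $\Omega$ is a cone of finite aperture and the reflections are isometries, it suffices to bound the operator $f\mapsto \sqrt{U(u+d)}\,(-\Delta)^{-1}\big(\sqrt{U(\cdot+d)}f\big)$ on all of $L^2(\mathbb{R}^2)$. (ii) Use that $\sqrt{U}$ is bounded (Assumption \ref{hypoU}) and compactly supported: writing $V(u)=\sqrt{U(u+d)}$, we have $V\in L^\infty\cap L^1$ with $\supp V$ of diameter $\leq M$, so $V$ acts as a bounded multiplication operator and, more importantly, confines everything to a fixed compact set in the $u$-variable. (iii) On that compact set the Newtonian potential $(-\Delta)^{-1}$, whose kernel is $-\frac{1}{2\pi}\log|x-y|$, is a kernel operator with an $L^1_{\mathrm{loc}}$ (indeed, locally $L^2$) kernel, hence by Schur's test (the kernel $|\log|x-y||$ is integrable in each variable over a bounded region, with a bound uniform in the center) it is bounded from $L^2(K)$ to $L^2(K)$ for any bounded $K$. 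Composing with the two bounded multiplications by $V$ gives the bound $\|S\|\leq C\|V\|_\infty^2$ with $C$ depending only on $M$ and an absolute constant, and then $S$ extends by density from $\mathcal{C}_c^\infty(\Omega)$ to $L^2(\Omega)$.

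The step I expect to be the main obstacle is (i): turning the oscillatory double integral in \eqref{limL2}, with its product of sines evaluated at the three linear forms $u-v$, $v$ and the shifted argument $y-x$ of $\mathcal{F}_g$, cleanly into "multiplication by $V$, then $(-\Delta)^{-1}$, then multiplication by $V$, then restrict to $\Omega$" — one has to carefully track how the sines split into a finite sum of exponentials $e^{\pm i\pi(u-v)x}e^{\pm i\pi v y}$, match each against $\mathcal{F}_g(x,y-x)=\int g(s,t)e^{isx+it(y-x)}$, and check that after the substitution the relevant quadratic form is positive definite so that the symbol $(x^2+(x+y)^2)^{-1}$ really is (a constant times) an elliptic second-order inverse and not something with a genuine singularity on a line. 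Once the structure is unwound, the boundedness itself is routine: it is just the local $L^2$-boundedness of the $2$D Newtonian potential sandwiched between compactly supported bounded multipliers, handled by Schur's test.
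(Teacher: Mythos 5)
Your overall strategy --- recognize $S$ as a compression of the two-dimensional logarithmic potential between multiplications by $\sqrt{U(\cdot+d)}$ and conclude by Schur's test on a compact set --- can be made to work, but two of your steps fail as written, and they are precisely where the work lies. The more serious one is the reduction at the end of (i): you claim it suffices to bound $f\mapsto \sqrt{U(u+d)}\,(-\Delta)^{-1}\big(\sqrt{U(\cdot+d)}f\big)$ on all of $L^{2}(\mathbb{R}^{2})$. That operator is not bounded (it does not even map into $L^{2}$): multiplication by $V(u)=\sqrt{U(u+d)}$ confines only to the infinite strip $\{u\in \supp U(\cdot+d)\}\times\mathbb{R}$, and since $V^{2}\geq 0$ there is no cancellation, so for $h=V^{2}\mathbf{1}_{[n,n+1]}(v)$ the potential grows like $\log\vert v-n\vert$ along the strip and its square is not integrable there. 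The compact set $K$ you invoke in (iii) exists only because of the cone constraint $0<v<u$ in $\Omega$: it is $\mathbf{1}_{\Omega}(u,v)\sqrt{U(u+d)}$ (equivalently, after the shear, the weight $\sqrt{U(u'+v'+d)}$ on the quadrant) that is supported in a bounded triangle. You must therefore keep the restriction to $\Omega$ on both sides and bound $\mathbf{1}_{T}V(-\Delta)^{-1}V\mathbf{1}_{T}$ with $T$ that triangle (and its images under the reflections); with that correction the Schur, or even Hilbert--Schmidt, bound on the locally square-integrable $\log$ kernel is fine. The second defect is in the unfolding step itself: the danger is not a degeneracy of the quadratic form on a line (there is none), but that $(x^{2}+y^{2})^{-1}$ is not locally integrable at the origin in dimension two, so you cannot match each exponential from the unfolded sines separately against $\mathcal{F}_{g}$ --- each such term diverges, the Fourier transform of $\log\vert\cdot\vert$ being defined only up to a renormalization at $\xi=0$ --- and only the full odd--odd combination, i.e.\ the method-of-images Green's function of the Dirichlet quadrant, makes sense, precisely because $\sin(\pi(u-v)x)\sin(\pi v y)$ vanishes to second order at the origin.

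For comparison, the paper's proof avoids any kernel identification. Pointwise well-definedness of (\ref{limL2}) on $\mathcal{C}^{\infty}_{c}(\Omega)$ follows from the second-order vanishing of the sines at the frequency origin together with the Paley--Wiener decay of $\mathcal{F}_{g}$; the $L^{2}$ bound then follows from a single Cauchy--Schwarz estimate showing that the kernel $\sqrt{U(u+v+d)}\,\sin(xu)\sin(yv)/(x^{2}+y^{2})$ is square integrable (so the sandwiched operator is Hilbert--Schmidt), combined with unitarity of the Fourier transform and boundedness of $U$. Your route, once repaired as above, yields a structural description of $S$ (a compression of the quadrant Dirichlet Green's operator) and a bound depending only on $\Vert U\Vert_{\infty}$ and $M$, but at the cost of the renormalization and support bookkeeping that the paper's direct estimate bypasses.
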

\begin{proof}\textit{(of Lemma \ref{Sdef})} We first prove that, for $ (u,v)\in \Omega  $, $ L(u,v) $, given by (\ref{limL2}), is well-defined. We consider the singularities separately.
\begin{enumerate}
\item For $ (\alpha,\beta)\in \R^{2} $, we have
$$ \frac{1}{x^{2}+y^{2}}\sin(\alpha x)\sin(\beta y)\mathcal{F}_{g}(x,y-x)\sim_{(0,0)}\frac{xy}{x^{2}+y^{2}}\alpha\beta\mathcal{F}_{g}(0,0) $$
It gives the integrability in $ (0,0) $.
\item By the Paley Wiener theorem, as $ f\in \mathcal{C}_{c}^{\infty}(\Omega) $, $ \mathcal{F}_{g} $ is an entire function and $ \vert \mathcal{F}_{g}(x,y)\vert \leq \frac{C_{j}}{(1+\vert y \vert)^{j}} $ for $ j\geq 1 $. Then,
$$ \bigg\vert \frac{1}{x^{2}+y^{2}}\sin(\alpha x)\sin(\beta y)\mathcal{F}_{g}(x,y-x)\bigg\vert \leq \frac{C_{j}}{(x^{2}+y^{2})(1+\vert y-x \vert)^{j}}.  $$
It gives the integrability at $ \pm \infty $.
\end{enumerate}
So, $ S $ is well-defined on $ \mathcal{C}_{c}^{\infty}(\Omega) $.

Take $ h\in \mathcal{C}_{c}^{\infty}(\R^{2}) $. We compute
\begin{align*}
\int_{\R_{+}^{2}}U(u+v+d)\Big\vert \int_{\R^{2}}\frac{\sin(xu)\sin(yv)}{x^{2}+y^{2}}h(x,y)dxdy\Big\vert^{2}dudv&\leq \Vert h \Vert_{L^{2}}^{2}\int U(u+v+d)\frac{\sin^{2}(xu)\sin^{2}(yv)}{(x^{2}+y^{2})^{2}} \\
&\leq \Vert h \Vert_{L^{2}}^{2}\Bigg(\int U(u+v+d)\int_{[1,\infty)^{2}}\frac{1}{(x^{2}+y^{2})^{2}} \\
&\, + \int U(u+v+d)u^{2}v^{2}\int_{[0,1]^{2}}\frac{x^{2}y^{2}}{(x^{2}+y^{2})^{2}} \Bigg) \\
&\leq C\Vert h \Vert_{L^{2}}^{2}.
\end{align*}
Since the Fourier transform is unitary and $ U $ is bounded, we get that $ S $ admits an extension on $ L^{2}(\Omega) $.

It concludes the proof of Lemma \ref{Sdef}.
\end{proof}

Thus, by Lemma \ref{Sdef}, the sequence $ \big(\Gamma^{l}T^{l}T^{l\star}\Gamma^{l\star}\big)_{l>0} $ converges strongly to some operator $ S $. So does $ (1+\Gamma^{l}T^{l}T^{l\star}\Gamma^{l\star})^{-1} $ to $ (1+S)^{-1} $. The limit only depends on $ U $ and $ d $.

For any positive self-adjoint operator $ A $ on a Hilbert space $ \mathcal{H} $, we know $\Vert (1+A)^{-1}\Vert_{\mathcal{B}(\mathcal{H})}\leq 1. $
Then, combining it with (\ref{Al}), (\ref{limphi}) and (\ref{limL2}), for $ l $ large,
\begin{equation}
\langle a^{3/2}\Gamma^{l}\phi^{l}_{0},\big(1+\Gamma^{l}T^{l}T^{l\star}\Gamma^{l\star}\big)^{-1}a^{3/2}\Gamma^{l}\phi^{l}_{0}\rangle_{L^{2}(\Omega)}=\langle \varphi,(1+S)^{-1}\varphi\rangle_{L^{2}(\Omega)}+o(1).
\end{equation}
It yields
\begin{equation}
\delta E=\frac{1}{a^{3}l^{4}}\langle \varphi,(1+S)^{-1}\varphi\rangle_{L^{2}(\Omega)}+o\Big(\frac{1}{l^{4}}\Big).
\end{equation}
We set $ \tau(d)=\langle \varphi,(1+S)^{-1}\varphi\rangle_{L^{2}(\Omega)}. $
It concludes the proof of Proposition \ref{Epair}.

$ \\ $

\section*{Acknowlegdments}

The author would like to thank  very warmly his PhD supervisor Frederic Klopp for his guidance conceiving this article.

$ \\ $

\bibliographystyle{alpha}
\bibliography{Thermodynamic_limit_of_the_pieces_model}

\end{document}